
\documentclass[11pt]{article} 
\usepackage{fullpage}
\usepackage{graphicx}
\usepackage{latexsym}

\usepackage{multirow}
\usepackage{mdframed}
\usepackage{tikz}
\usepackage{color} 
\usepackage[linesnumbered,ruled]{algorithm2e} 

\usepackage[toc,page]{appendix}

\usepackage{amsmath, amsthm, amssymb}           
\usepackage[english]{babel}                                  
\usepackage[T1]{fontenc}                                      
\usepackage[utf8]{inputenc}
\usepackage[round]{natbib}
\bibliographystyle{abbrvnat} 
\setlength{\parindent}{0pt}         
\usepackage{array}           
\usepackage{geometry}
\usepackage{bbm}
\usepackage{hyperref}
\usepackage{subcaption}
\usepackage{dsfont}
\usepackage{mathtools} 
\usepackage{comment}
\usepackage{stackrel}
\usepackage{pdfpages}
\usepackage{extarrows} 
\usepackage{listings} 
\usepackage{ulem} 
\usepackage{hyphenat} 
\usepackage[toc,page]{appendix}
\usepackage{array}
\newcolumntype{P}[1]{>{\centering\arraybackslash}p{#1}}
\newcolumntype{M}[1]{>{\centering\arraybackslash}m{#1}}
\newcolumntype{L}[1]{>{\raggedright\arraybackslash}m{#1}}
\newcolumntype{C}[1]{>{\centering\arraybackslash}m{#1}}
\newcolumntype{R}[1]{>{\raggedleft\arraybackslash}m{#1}}

\DeclarePairedDelimiter\ceil{\lceil}{\rceil}
\DeclarePairedDelimiter\floor{\lfloor}{\rfloor}

\definecolor{upmaroon}{rgb}{0.48, 0.07, 0.07}
\definecolor{royalazure}{rgb}{0.0, 0.22, 0.66}
\definecolor{pakistangreen}{rgb}{0.0, 0.4, 0.0}

\lstset{ 
  language=R,                     
  basicstyle=\ttfamily,           
  backgroundcolor=\color{white},   
  showspaces=false,               
  showstringspaces=false,         
  showtabs=false,                 
  tabsize=4,                      
  captionpos=b,                   
  breaklines=true,                
  breakatwhitespace=false,          
  keywordstyle=\color{royalazure}, 
  commentstyle=\color{pakistangreen},  
  stringstyle=\color{upmaroon}         
} 

\theoremstyle{plain}                                
\newtheorem{theorem}{Theorem}[section]
\newtheorem{definition}[theorem]{Definition}

\newtheorem{lemma}[theorem]{Lemma}
\newtheorem{assumptions}[theorem]{Assumption}

\usepackage{enumitem}
\setlist[itemize]{noitemsep, topsep=0pt}

\def\iid{independent and identically distributed}

\newcommand{\eps}{\varepsilon}

\newcommand{\R}{\mathbb{R}} 
\newcommand{\bo}{\boldsymbol{0}}
\newcommand{\one}{\mathds{1}}
\newcommand{\oneni}{\mathds{1}_{\nni}}
\newcommand{\lambdamin}{\lambda_{\mathrm{min}}}

\newcommand{\clambdamin}{c_{\mathrm{min}}}

\newcommand\independent{\protect\mathpalette{\protect\independenT}{\perp}}
\def\independenT#1#2{\mathrel{\rlap{$#1#2$}\mkern2mu{#1#2}}}

\newcommand{\norm}[1]{\lVert #1 \rVert}
\newcommand{\normbig}[1]{\big\lVert #1 \big\rVert}
\newcommand{\normBig}[1]{\Big\lVert #1 \Big\rVert}
\newcommand{\normbigg}[1]{\bigg\lVert #1 \bigg\rVert}
\newcommand{\normP}[2]{\norm{#1}_{\PP, #2}}
\newcommand{\normPbig}[2]{\normbig{#1}_{\PP, #2}}
\newcommand{\normPBig}[2]{\normBig{#1}_{\PP, #2}}

\newcommand{\normPNbig}[2]{\normbig{#1}_{\PPN, #2}}

\newcommand{\normone}[1]{\lvert #1\rvert}

\DeclareMathOperator*{\argmin}{arg\,min}
\DeclareMathOperator{\E}{\mathbb{E}}

\DeclareMathOperator{\Prob}{\mathbb{P}}


\newcommand{\hbRkXi}{\widehat{\mathbf{R}}_{\Xi}^{\Ik}}

\newcommand{\bRkXi}{\mathbf{R}_{\Xi}}

\newcommand{\hbRkYi}{\widehat{\mathbf{R}}_{\Yi}^{\Ik}}

\newcommand{\RIk}{\widehat{\mathbf{R}}^{\Ik}}

\newcommand{\bRkYi}{\mathbf{R}_{\Yi}}

\newcommand{\NN}{N}
\newcommand{\nn}{n}
\newcommand{\KK}{K}
\newcommand{\kk}{k}
\newcommand{\CpnormRV}{C_1}
\newcommand{\cone}{c_1}
\newcommand{\ctwo}{c_2}
\newcommand{\CpnormEta}{C_4}
\newcommand{\CboundV}{C_3}
\newcommand{\CboundZi}{C_2}

\newcommand{\TauN}{\mathcal{T}} 
\newcommand{\EpsN}{\mathcal{E}_{\NN}}
\newcommand{\PP}{P}
\newcommand{\PcalN}{\mathcal{P}_{\NN}}

\newcommand{\rNp}{r_{\NN, \kk}'}

\newcommand{\errorN}{e_{\NN}}

\newcommand{\deltaN}{\delta_{\NN}}
\newcommand{\DeltaN}{\Delta_{\NN}}

\newcommand{\PPN}{\PP_{\NN}}

\newcommand{\Enk}[1]{\E_{\nnktot}[#1]}
\newcommand{\Enkbig}[1]{\E_{\nnktot}\big[#1\big]}

\newcommand{\Gnk}{\mathbb{G}_{\PP, \Ik}}
\newcommand{\Gnkbig}[1]{\mathbb{G}_{\PP, \Ik}\big[#1\big]}

\newcommand{\EP}{\E_{\PP}}
\newcommand{\EPbig}[1]{\E_{\PP}\big[#1\big]}
\newcommand{\EPBig}[1]{\E_{\PP}\Big[#1\Big]}

\newcommand{\EPNbig}[1]{\E_{\PPN}\big[#1\big]}

\newcommand{\EPNbigg}[1]{\E_{\PPN}\bigg[#1\bigg]}

\newcommand{\indset}[1]{[#1]}

\newcommand{\SIkc}{\textbf{S}_{\Ikc}}

\newcommand{\etazero}{\eta^0}
\newcommand{\hetaIkc}{\hat\eta^{\Ikc}}

\newcommand{\I}{I}

\newcommand{\Ik}{I_{\kk}}
\newcommand{\Ikc}{I_{\kk}^c}

\newcommand{\Icalone}{\mathcal{I}_1}
\newcommand{\Icaltwo}{\mathcal{I}_2}

\newcommand{\Ti}{T_{\NN,i}}
\newcommand{\Tbar}{\overline{T}_{\NN}}
\newcommand{\nuN}{\nu_{\NN}}

\newcommand{\betazero}{\beta_0}
\newcommand{\Sigmazero}{\Sigma_{0}}
\newcommand{\Gammazero}{\Gamma_{0}}
\newcommand{\Sigmakj}{\Sigma_{\kappa,\iota}}
\newcommand{\sigmazero}{\sigma_0}
\newcommand{\thetazero}{\theta_0}

\newcommand{\Vzeroi}{\mathbf{V}_{0, i}}
\newcommand{\Vione}{\mathbf{V}_{i,1}}
\newcommand{\Vitwo}{\mathbf{V}_{i,2}}
\newcommand{\Vi}{\mathbf{V}_{i}}

\newcommand{\Vizero}{\mathbf{V}_{i, 0}}

\newcommand{\hVik}{\hat{\mathbf{V}}_{i,\kk}}

\newcommand{\li}{\ell_i}
\newcommand{\libig}[1]{\ell_i\big(#1\big)}

\newcommand{\hbeta}{\hat\beta}
\newcommand{\hbetak}{\hat\beta_{\kk}}
\newcommand{\htheta}{\hat\theta}
\newcommand{\hthetak}{\hat\theta_{\kk}}

\newcommand{\hsigmak}{\hat\sigma_{\kk}}

\newcommand{\hSigmak}{\hat{\Sigma}_{\kk}}

\newcommand{\Jzero}{J_0}
\newcommand{\Tzero}{T_0}
\newcommand{\hTzero}{\hat T_0}


\newcommand{\vvv}{v}
\newcommand{\blambda}{\boldsymbol{\lambda}}

\newcommand{\ttt}{u}

\newcommand{\Salg}{\mathcal{S}}
\newcommand{\hthetaks}{\hat\theta_{\kk, s}}
\newcommand{\hbetaks}{\hat\beta_{\kk, s}}
\newcommand{\hsigmaks}{\hat\sigma_{\kk, s}}
\newcommand{\hSigmaks}{\hat\Sigma_{\kk, s}}
\newcommand{\hbetas}{\hat\beta_{s}}
\newcommand{\hTzeros}{\hat T_{0,s}}

\newcommand{\NNT}{\NN_T}
\newcommand{\NNtot}{\NN_{T}}
\newcommand{\nni}{n_i}
\newcommand{\nmax}{n_{\mathrm{max}}}
\newcommand{\nnktot}{\nn_{T, k}}

\newcommand{\Fcaleta}{\mathcal{F}_{\eta}}

\newcommand{\Fcaletazero}{\mathcal{F}_{\etazero}}
\newcommand{\FcalhetaIkc}{\mathcal{F}_{\hetaIkc}}
\newcommand{\Fcaltwo}{\mathcal{F}_{2}}
\newcommand{\Feta}{{F}_{\eta}}
\newcommand{\Fetazero}{{F}_{\etazero}}
\newcommand{\FhetaIkc}{{F}_{\hetaIkc}}
\newcommand{\Ftwo}{{F}_{2}}

\newcommand{\g}{g}

\newcommand{\h}{\zeta}
\newcommand{\scoretest}{\varphi}
\newcommand{\scoreX}{\xi}
\newcommand{\mX}{m_X}
\newcommand{\hmX}{\hat m_X}
\newcommand{\mY}{m_Y}
\newcommand{\hmY}{\hat m_Y}
\newcommand{\psibig}[1]{\psi\big(#1\big)}
\newcommand{\psibeta}{\psi_{\beta}}
\newcommand{\psibetabig}[1]{\psi_{\beta}\big(#1\big)}
\newcommand{\psisigma}{\psi_{\sigma^2}}
\newcommand{\psisigmabig}[1]{\psi_{\sigma^2}\big(#1\big)}
\newcommand{\psiSigmakj}{\psi_{\Sigma_{\kappa,\iota}}}
\newcommand{\psiSigmakjbig}[1]{\psi_{\Sigma_{\kappa,\iota}}\big(#1\big)}

\newcommand{\Yi}{\mathbf{Y}_i}
\newcommand{\Yj}{\mathbf{Y}_j}

\renewcommand{\Xi}{\mathbf{X}_i}
\newcommand{\Xj}{\mathbf{X}_j}
\newcommand{\Zi}{\mathbf{Z}_i}
\newcommand{\Zj}{\mathbf{Z}_j}
\newcommand{\Ztili}{\widetilde{\mathbf{Z}}_i}
\newcommand{\Wi}{\mathbf{W}_i}
\newcommand{\Wj}{\mathbf{W}_j}
\newcommand{\Si}{\mathbf{S}_i}
\newcommand{\Sj}{\mathbf{S}_j}
\newcommand{\si}{\mathbf{s}_i}

\renewcommand{\S}{\mathbf{S}}

\newcommand{\Ubi}{\mathbf{A}_i}
\newcommand{\Vbi}{\mathbf{B}_i}
\newcommand{\Di}{\mathbf{D}_i}

\newcommand{\RYi}{\mathbf{R}_{\Yi}}
\newcommand{\RYj}{\mathbf{R}_{\Yj}}
\newcommand{\RYieta}{\mathbf{R}_{\Yi, \eta}}
\newcommand{\RXi}{\mathbf{R}_{\Xi}}
\newcommand{\RXieta}{\mathbf{R}_{\Xi, \eta}}

\newcommand{\epsi}{\boldsymbol{\varepsilon}_i}
\newcommand{\epsj}{\boldsymbol{\varepsilon}_j}
\newcommand{\epsXi}{\boldsymbol{\varepsilon}_{\Xi}}
\newcommand{\epsXj}{\boldsymbol{\varepsilon}_{\Xj}}
\newcommand{\epsone}{\boldsymbol{\varepsilon}_1}
\newcommand{\epsN}{\boldsymbol{\varepsilon}_{\NN}}

\newcommand{\bbi}{\mathbf{b}_i}
\newcommand{\bj}{\mathbf{b}_j}
\newcommand{\bone}{\mathbf{b}_1}
\newcommand{\bN}{\mathbf{b}_{\NN}}

\begin{document}

\title{Double Machine Learning for Partially Linear Mixed-Effects Models with Repeated Measurements}  
  
\author{Corinne Emmenegger and Peter B\"uhlmann\\
Seminar for Statistics, ETH Z\"urich}

\maketitle

\begin{abstract}
Traditionally, spline or kernel approaches 
in combination with parametric estimation are used to infer
the linear coefficient (fixed effects) in a partially linear 
mixed-effects model for repeated measurements. 
Using machine learning algorithms allows us to incorporate complex interaction structures and high-dimensional variables. 
We employ double machine learning to cope with the nonparametric part of the 
partially linear mixed-effects model: the nonlinear variables are regressed out nonparametrically from both the linear variables and the response. 
This adjustment can be performed with any 
machine learning algorithm, for instance random forests, which allows to take complex interaction terms and nonsmooth structures into account. 
The adjusted variables satisfy a linear mixed-effects model, where the linear coefficient can be estimated with standard linear mixed-effects techniques. 
We prove that the estimated fixed effects coefficient converges at the parametric rate,  is
asymptotically Gaussian distributed, and semiparametrically efficient. 
Two simulation studies demonstrate that our method outperforms a penalized regression spline approach in terms of coverage. 
We also illustrate our proposed approach on a longitudinal dataset with HIV-infected individuals. Software code for our method is available in the \textsf{R}-package \texttt{dmlalg}.
\end{abstract}

\textbf{Keywords:} 
Between-group heterogeneity,
CD4 dataset (HIV),
dependent data,
fixed effects estimation,
longitudinal data,
machine learning,
semiparametric estimation

\section{Introduction}

Repeated measurements data consists of observations from several experimental units, subjects, or groups under different conditions. 
This grouping or clustering of the individual responses into experimental units 
typically introduces dependencies: 
the different units are assumed to be independent, 
but there may be heterogeneity between units 
and correlation within units.

Mixed-effects models 
provide a powerful and flexible tool to analyze grouped data by incorporating fixed and random effects. Fixed effects are associated with the entire population, and random effects are associated with individual 
groups
and model the heterogeneity across them and the dependence structure within them~\citep{Pinheiro2000}. 
Linear mixed-effects models~\citep{Laird-Ware1982, Pinheiro2000, Verbeke-Molenberghs2009, Demidenko2004} 
 impose a linear relationship between all covariates and the response. 
Partially linear mixed-effects models~\citep{Zeger-Diggle1994}  extend the linear ones. 

We consider the partially linear mixed-effects model 
\begin{equation}\label{eq:initPLMM}
	\Yi = \Xi\betazero + \g(\Wi) + \Zi\bbi+ \epsi
\end{equation}
for groups $i\in\{1,\ldots,\NN\}$. There are $\nni$ observations per group $i$. 
The unobserved random variable $\bbi$, called random effect, introduces 
correlation
within its group $i$ because all $\nni$ observations within this group share $\bbi$. 
We make the assumption generally made that both the random effect $\bbi$ and the error term $\epsi$ 
follow a Gaussian distribution~\citep{Pinheiro2000}. 
The matrices $\Zi$ assigning the random effects to group-level observations are fixed. 
The linear covariables $\Xi$ and the nonparametric and potentially high-dimensional covariables $\Wi$ are observed and random, and they may be dependent. Furthermore, the nonparametric covariables may contain nonlinear transformations and interaction terms of the linear ones.
Please see Assumption~\ref{assumpt:Distribution} in Section~\ref{sect:ModelAndDML} for further details. 

Our aim is to estimate and make inference for the so-called fixed effect $\betazero$ in~\eqref{eq:initPLMM} in the presence of a highly complex $g$ using general machine learning 
algorithms. 
The parametric component $\betazero$ 
provides a simple summary of the covariate effects that are of main scientific interest. The nonparametric component $\g$ enhances model flexibility  because time trends and further covariates with possibly nonlinear and interaction effects can be modeled nonparametrically. 
\\

Repeated measurements, or longitudinal, data is omnipresent
in empirical research.
For example, assume we want to study the effect of a treatment over time. Observing the same subjects repeatedly presents three main advantages over having cross-sectional data. 
First, subjects can serve as their own controls.  
Second, the between-subject variability is explicitly modeled and can be excluded from the experimental error. This yields more efficient estimators of the relevant model parameters. 
Third, data can be collected more reliably~\citep{Davies2002, Fitzmaurice2011}.
\\

Various approaches have been considered in the literature to estimate the nonparametric component 
$g$ in~\eqref{eq:initPLMM}: 
kernel methods~\citep{Hart-Wehrly1986, Zeger-Diggle1994, Taavoni2019b, Chen2017},
backfitting~\citep{Zeger-Diggle1994, Taavoni2019b}, 
spline methods~\citep{Rice-Silverman1991, Zhang2004, Qin-Zhu2007, Qin-Zhu2009, Li-Zhu2010, Kim2017, Aniley2019}, 
and local linear regression~\citep{Taavoni2019b, Liang2009}. 

Our aim is to make inference for $\betazero$ in the presence of potentially highly complex effects of $\Wi$ on $\Xi$ and $\Yi$. First, we adjust $\Xi$ and $\Yi$ for $\Wi$ by regressing $\Wi$ out of them using  machine learning algorithms. These machine learning algorithms may yield biased results, especially if regularization methods are used, like for instance with the lasso~\citep{Tibshirani1996}. 
Second, we fit a linear mixed-effects model to these regression residuals to estimate $\betazero$. Our estimator of $\betazero$ converges at the parametric rate, follows a Gaussian distribution asymptotically, and is semiparametrically efficient.

We rely on the double machine learning 
framework of~\citet{Chernozhukov2018} to estimate $\betazero$ using general machine learning algorithms. 
To the best of our knowledge, this is the first work to allow the nonparametric nuisance components of a 
partially linear mixed-effects model to be estimated with arbitrary machine learners like random forests~\citep{Breiman2001} or the lasso~\citep{Tibshirani1996, Buehlmann2011}. 
In contrast to the setting and proofs of~\citet{Chernozhukov2018}, we have dependent data and need to incorporate this accordingly. 

\citet{Chernozhukov2018} introduce double machine learning to estimate a low-dimensional parameter in the presence of nonparametric nuisance components using machine learning 
algorithms. 
This estimator converges at the parametric rate and is asymptotically Gaussian due to Neyman orthogonality and sample splitting with cross-fitting. 
We would like to remark that nonparametric components can be estimated without sample splitting and cross-fitting if the underlying function class satisfies some entropy conditions; 
see for instance~\citet{Geer-Mammen1997}. However, these regularity conditions limit the complexity of the function class, and 
machine learning algorithms usually do not satisfy them. Particularly, these conditions fail to hold if the dimension of the nonparametric variables increases with the sample size~\citep{Chernozhukov2018}.

\subsection{Additional Literature}

Expositions and overviews of mixed-effects modeling techniques can be found in~\citet{Pinheiro1994, Davidian-Giltinan1995, Vonesh-Chinchilli1997, Pinheiro2000, Davidian-Giltinan2003}.
\\

\citet{Zhang1998} consider partially linear
mixed-effects models and estimate the nonparametric component with natural cubic splines. They treat the smoothing parameter as an extra variance component that is jointly estimated with the other variance components of the model. 
\citet{Masci2019} consider partially linear
mixed-effects models for unsupervised classification with discrete random effects. 
\citet{Schelldorfer2011} consider  high-dimensional linear mixed-effects models where the number of fixed effects coefficients may be much larger than the overall sample size. 
\citet{Taavoni2019a} employ a regularization approach in generalized 
partially linear mixed-effects models using regression splines to approximate the nonparametric component. 
\citet{gamm4} use penalized regression splines where the penalized components are treated as random effects.
\\

The unobserved random variables in the partially linear mixed-effects model~\eqref{eq:initPLMM} 
are assumed to follow a Gaussian distribution. 
\citet{Taavoni2021} introduce multivariate $t$ 
partially linear
mixed-effects models for longitudinal data. They consider $t$-distributed random effects to account for outliers in the data. 
\citet[Chapter 4]{Fahrmeir2011} relax the assumption of Gaussian random effects in generalized linear mixed models. They consider nonparametric Dirichlet processes and Dirichlet process mixture priors for the random effects.
\citet[Chapter 3]{Ohinata2012} consider partially linear mixed-effects models and make no distributional assumptions for the random terms, 
and the nonparametric component is estimated with kernel methods.
\citet{Lu2016} consider a partially linear mixed-effects model that is nonparametric in time and that features asymmetrically distributed errors  and missing data.

Furthermore, methods have been developed to analyze repeated measurements data that are robust to outliers. 
\citet{Guoyou2008} consider robust estimating equations and estimate the nonparametric component with a regression spline.
\citet{Tang2015} consider median-based regression methods in a partially linear model with longitudinal data to account for highly skewed responses. 
\citet{Lin2018} present an estimation technique in partially linear models for longitudinal data that is doubly robust in the sense that  it simultaneously accounts for missing responses and mismeasured covariates. 
\\

It is prespecified in 
the  partially linear mixed-effects model~\eqref{eq:initPLMM} which covariates are modeled with random effects. 
Simultaneous variable selection for fixed effects variables and random effects has been developed by~\citet{Bondell2010, Ibrahim2011}. They use penalized likelihood approaches.  
\citet{Li-Zhu2010} use a nonparametric test to test the existence of random effects in partially linear mixed-effects models. 
\citet{Zhang2020} propose a variable selection procedure for the linear covariates of a generalized partially linear model with longitudinal data. 
\\

\textit{Outline of the Paper.}
Section~\ref{sect:ModelAndDML} presents our double machine learning estimator of the linear coefficient in a partially linear mixed-effects model. Section~\ref{sect:numerical-experiments} presents our numerical results. 
\\

\textit{Notation.}
We denote by $\indset{\NN}$ the set $\{1,2,\ldots,\NN\}$. We add the probability law as a subscript to the probability operator $\Prob$ and the expectation operator $\E$ whenever we want to emphasize the corresponding dependence.
We denote the $L^p(\PP)$ norm by $\normP{\cdot}{p}$ and the Euclidean or  operator
norm  by $\norm{\cdot}$, depending on the context. 
We implicitly assume that given expectations and conditional expectations exist. We denote by $\stackrel{d}{\rightarrow}$ convergence in distribution.
The symbol $\independent$ denotes independence of random variables.
We denote by $\one_{n}$ the $n\times n$ identity matrix and omit the subscript $n$ if we do not want to emphasize the dimension. 
We denote the $d$-variate Gaussian distribution by $\mathcal{N}_d$.

\section{Model Formulation and the Double Machine Learning Estimator}\label{sect:ModelAndDML}

We consider repeated measurements data that is grouped according to experimental units or subjects. 
This grouping structure introduces dependency in the data. The individual experimental units or groups are assumed to be independent, but there may be some 
between-group heterogeneity and within-group correlation.
We consider the partially linear mixed-effects model 
\begin{equation}\label{eq:PLMM}
	\Yi = \Xi\betazero + \g(\Wi) + \Zi\bbi+ \epsi, \quad i\in\indset{\NN}
\end{equation}
for groups $i$ as in~\eqref{eq:initPLMM} to model the 
between-group heterogeneity and within-group correlation with random effects.
We have $\nni$ observations per group that are concatenated row-wise into $\Yi\in\R^{\nni}$, $\Xi\in\R^{\nni\times d}$, and $\Wi\in\R^{\nni\times \vvv}$. 
The nonparametric  variables may be high-dimensional, but $d$ is fixed.
Both $\Xi$ and $\Wi$ are random. 
The  $\Xi$ and  $\Wi$ belonging to the same group $i$ may be dependent.
For groups $i\neq j$, we assume $\Xi\independent\Xj$, $\Wi\independent\Wj$, and $\Xi\independent\Wj$.
We assume that $\Zi\in\R^{\nni\times q}$ is fixed. 
The random variable $\bbi\in\R^q$ denotes a group-specific vector of random regression coefficients that is assumed to follow a Gaussian distribution. 
The dimension $q$ of the random effects model is fixed.
Also the error terms are assumed to follow a Gaussian distribution as is commonly used 
in a mixed-effects models framework~\citep{Pinheiro2000}. 
All groups $i$ share the common linear coefficient $\betazero$ and the potentially complex function $\g\colon\R^{\vvv}\rightarrow\R$. The function $g$ is applied row-wise to $\Wi$, denoted by $\g(\Wi)$.

We denote the total number of observations by $\NNT:=\sum_{i=1}^{\NN}\nni$. 
We assume that the numbers $\nni$ of within-group observations are uniformly upper bounded by $\nmax<\infty$. 
Asymptotically, the number of groups, $\NN$,  goes to infinity. 

Our distributional and independency assumptions are summarized as follows:

\begin{assumptions}\label{assumpt:Distribution} Consider the partially linear mixed-effects model~\eqref{eq:PLMM}. We assume that there is some $\sigmazero>0$ and some symmetric positive definite matrix 
$\Gammazero\in\R^{q\times q}$ 
such that the following conditions hold.
\begin{enumerate}[label={\theassumptions.\arabic*}]
	\item \label{assumpt:D1} 
		The random effects $\bone,\ldots,\bN$ are \iid\ $\mathcal{N}_q(\bo, \Gammazero)$.
	\item \label{assumpt:D2}
		The error terms $\epsone,\ldots,\epsN$ are independent
		and follow a Gaussian distribution, $\epsi\sim\mathcal{N}_{\nni}(\bo, \sigmazero^2\oneni)$ for $i\in\indset{\NN}$, with the common variance component $\sigmazero^2$. 
	\item \label{assumpt:D3}
		The variables $\bone,\ldots,\bN, \epsone,\ldots,\epsN$ are independent. 
	\item \label{assumpt:D4-2}
		For all $i, j \in\indset{\NN}$, $i\neq j$, we have $(\bbi,\epsi)\independent(\Wi, \Xi)$ and $(\bbi,\epsi)\independent(\Wj, \Xj)$.
	\item \label{assumpt:D6}
	         For all $i, j \in\indset{\NN}$, $i\neq j$, we have $\Xi\independent\Xj$, $\Wi\independent\Wj$, and $\Xi\independent\Wj$.
\end{enumerate}
\end{assumptions}

We would like to remark that the distribution of the error terms $\epsi$ in Assumption~\ref{assumpt:D2} can be generalized to $\epsi\sim\mathcal{N}_{\nni}(\bo,\sigmazero^2\Lambda_i(\blambda))$, where $\Lambda_i(\blambda)\in\R^{\nni\times\nni}$ is a symmetric positive definite matrix parametrized by some finite-dimensional parameter vector $\blambda$ that all groups have in common. 
For the sake of notational simplicity, we restrict ourselves to Assumption~\ref{assumpt:D2}. 

Moreover, we may consider stochastic random effects matrices $\Zi$. Alternatively, the nonparametric variables $\Wi$ may be part of the random effects matrix. In this case, 
we consider 
the random effects matrix $\Ztili = \h(\Zi,\Wi)$ for some known function $\h$ in~\eqref{eq:PLMM} instead of $\Zi$.  
Please see Section~\ref{sec:nonfixedZi} in the appendix for further details. 
For simplicity, we restrict ourselves to fixed random effects matrices $\Zi$ that are disjoint from $\Wi$. 
\\

The unknown parameters in our model are $\betazero$, $\Gammazero$, and $\sigmazero$. 
Our aim is to estimate 
$\betazero$ and 
make inference for it.
Although the variance parameters $\Gammazero$ and $\sigmazero$ need to be estimated consistently to construct an estimator of $\betazero$, it is not our goal to perform inference for them.

\subsection{The Double Machine Learning Fixed-Effects Estimator}

Subsequently, we describe our estimator of $\betazero$ in~\eqref{eq:PLMM}. 
To motivate our procedure, we first consider the population version with the residual terms
\begin{displaymath}
	\RXi := \Xi - \E[\Xi | \Wi] \quad\textrm{and}\quad
	\RYi := \Yi - \E[\Yi | \Wi] \quad\textrm{for}\quad
	i\in\indset{\NN}
\end{displaymath}
that adjust $\Xi$ and $\Yi$ for $\Wi$. On this adjusted level, we have the linear mixed-effects model
\begin{equation}\label{eq:LMM}
	\RYi = \RXi \betazero + \Zi\bbi + \epsi, \quad i\in\indset{\NN}
\end{equation}
due to~\eqref{eq:PLMM} and Assumption~\ref{assumpt:D4-2}. 
In particular, the adjusted and grouped responses in this model are independent in the sense that we have $\RYi \independent \RYj$ for $i\neq j$. 
The strategy now is to first estimate the residuals with machine learning algorithms and then use 
linear mixed model techniques to infer $\beta_0$. This is done with sample splitting and cross-fitting, and the details are described next.
\\

Let us define $\Sigmazero := \sigmazero^{-2}\Gammazero$ 
and $\Vzeroi:=(\Zi\Sigmazero\Zi^T + \oneni)$ so that we have
\begin{equation}\label{eq:resNormal}
	(\RYi | \Wi,\Xi) \sim \mathcal{N}_{\nni}\big(\RXi\betazero, \sigmazero^2\Vzeroi\big). 
\end{equation}
We assume that there exist functions $\mX^0\colon\R^{\vvv}\rightarrow\R^d$ and $\mY^0\colon\R^{\vvv}\rightarrow\R$ that we can apply row-wise to $\Wi$ to have
$\E[\Xi | \Wi] = \mX^0(\Wi)$ and $\E[\Yi | \Wi] = \mY^0(\Wi)$. 
In particular, $\mX^0$ and $\mY^0$ do not depend on the grouping index $i$. 
Let $\etazero:= (\mX^0,\mY^0)$ denote the true unknown nuisance parameter. 
Let us denote by $\thetazero := (\betazero, \sigmazero^2,\Sigmazero)$ the complete true unknown parameter vector and by $\theta := (\beta,\sigma^2,\Sigma)$ and $\Vi:=\Zi\Sigma\Zi^T+\oneni$ respective general parameters. 
The log-likelihood of group $i$ is given by 
\begin{equation}\label{eq:log-likelihood}
\begin{array}{rcl}
	\libig{\theta, \etazero} &=& -\frac{\nni}{2} \log(2\pi) - \frac{\nni}{2}\log(\sigma^2) - \frac{1}{2}\log\big( \det(\Vi)\big) \\
	&&\quad- \frac{1}{2\sigma^2} (\RYi-\RXi\beta)^T\Vi^{-1}(\RYi-\RXi\beta)  - \log \big(p(\Wi,\Xi)\big),
	\end{array}
\end{equation}
where $p(\Wi,\Xi)$ denotes the joint density of $\Wi$ and $\Xi$. We assume that $p(\Wi,\Xi)$ does not depend on $\theta$. 
The true nuisance parameter $\etazero$ in the log-likelihood~\eqref{eq:log-likelihood} is unknown and estimated with machine learning algorithms (see below).
Denote by $\eta := (\mX,\mY)$ some general nuisance parameter. The terms that adjust $\Xi$ and $\Yi$ for $\Wi$ with this general nuisance parameter are given by $\Xi-\mX(\Wi)$ and $\Yi-\mY(\Wi)$. Up to additive constants that do not depend on $\theta$ and $\eta$, we thus consider maximum likelihood estimation with the likelihood
\begin{displaymath}
	\begin{array}{rl}
	&\li(\theta,\eta) 
	= -\frac{\nni}{2}\log(\sigma^2) - \frac{1}{2}\log\big(\det(\Vi) \big)\\
	&\quad- \frac{1}{2\sigma^2} \Big(\Yi-\mY(\Wi)-\big(\Xi-\mX(\Wi)\big)\beta\Big)^T\Vi^{-1}\Big(\Yi-\mY(\Wi)-\big(\Xi-\mX(\Wi)\big)\beta\Big),
	\end{array}
\end{displaymath}
which is a function of both the finite-dimensional parameter $\theta$ and the infinite-dimensional
nuisance parameter $\eta$. 
\\

Our estimator of $\betazero$ is constructed as follows using double machine learning. 
First, we estimate $\etazero$ with machine learning 
algorithms and plug these estimators into the estimating equations for $\thetazero$, equation~\eqref{eq:setToZero} below, 
to obtain an estimator for $\beta_0$. This procedure is done with sample splitting and cross-fitting as explained next.

Consider repeated measurements from $\NN$ experimental units, subjects, or groups as in~\eqref{eq:PLMM}. Denote by $\Si :=(\Wi,\Xi,\Zi,\Yi)$ the observations of group $i$.
First, we split the group indices $\indset{\NN}$ into $\KK\ge 2$ disjoint sets $I_1, \ldots, I_{\KK}$ of approximately equal size; 
please see Section~\ref{sect:AssumptionsDefinitions} in the appendix for further details. 

For each $\kk\in\indset{\KK}$, we estimate the conditional expectations $\mX^0(W)$ and $\mY^0(W)$ with data from $\Ikc$. We call the resulting estimators  $\hmX^{\Ikc}$ and $\hmY^{\Ikc}$, respectively. 
Then, the adjustments $\hbRkXi:=\Xi-\hmX^{\Ikc}(\Wi)$, and $\hbRkYi:=\Yi-\hmY^{\Ikc}(\Wi)$ for $i\in\Ik$ are evaluated on $\Ik$, the complement of $\Ikc$. 
Let $\hetaIkc := (\hmX^{\Ikc}, \hmY^{\Ikc})$ denote the estimated nuisance parameter.
Consider the score function
	$\psi(\Si; \theta,\eta) := \nabla_{\theta}\li(\theta,\eta)$, 
where $\nabla_{\theta}$ denotes the gradient with respect to $\theta$ interpreted as a vector.
On each set $\Ik$, we consider an estimator $\hthetak = (\hbetak,\hsigmak^2, \hSigmak)$ of $\thetazero$ that, approximately, in the sense of Assumption~\ref{assumpt:Theta2} in the appendix, solves
\begin{equation}\label{eq:setToZero}
	\frac{1}{\nnktot}\sum_{i\in\Ik} \psibig{\Si;\hthetak,\hetaIkc} 
	= \frac{1}{\nnktot}\sum_{i\in\Ik} \nabla_{\theta}\li(\theta,\eta)
	\stackrel{!}{=}  \bo,
\end{equation}
where $\nnktot := \sum_{i\in\Ik}\nni$ denotes the total number of observations from experimental units that belong to the set $\Ik$.
These $\KK$ estimators $\hthetak$ for $\kk\in\indset{\KK}$ are assembled to form the final cross-fitting estimator
\begin{equation}\label{eq:betahat}
	\hbeta := \frac{1}{\KK} \sum_{\kk=1}^{\KK} \hbetak
\end{equation}
of $\betazero$. 
We remark that one can simply use linear mixed model computation and software to compute $\hbetak$ based on the estimated residuals $\RIk$. 
The estimator $\hbeta$ fundamentally depends on the particular sample split. To alleviate this effect, the overall procedure may be repeated $\Salg$ times~\citep{Chernozhukov2018}. The $\Salg$ point estimators are aggregated by the median, and an additional term accounting for the random splits is added to the variance estimator of $\hbeta$; please see 
Algorithm~\ref{algo:Summary} that presents the complete procedure.

\begin{algorithm}[h!]
	\SetKwInOut{Input}{Input}
   	\SetKwInOut{Output}{Output}

 \Input{$\NN$ \iid\  observations $\{\Si=(\Wi,\Xi, \Zi, \Yi)\}_{i\in\indset{\NN}}$ from the model~\eqref{eq:PLMM} satisfying Assumption~\ref{assumpt:Distribution}, a natural number $\KK$, a natural number $\Salg$.}
 \Output{An estimator of $\betazero$ in~\eqref{eq:PLMM} together with its estimated asymptotic variance.}
 
 \For{$s\in\indset{\Salg}$}
 {
 Split the grouped observation index set $\indset{\NN}$ into $\KK$ sets $I_1, \ldots, I_{\KK}$ of approximately equal size.
 
 \For{$\kk\in\KK$}
 {
 Compute the conditional expectation estimators $\hmX^{\Ikc}$ and $\hmY^{\Ikc}$ with some  
 machine learning algorithm and data from $\Ikc$. 
 
 Evaluate the adjustments $\hbRkXi=\Xi-\hmX^{\Ikc}(\Wi)$ and $\hbRkYi=\Yi-\hmY^{\Ikc}(\Wi)$ for $i\in\Ik$. 
 
Compute $\hthetaks = (\hbetaks, \hsigmaks^2, \hSigmaks)$ using, for instance, linear mixed model techniques. 
 }

Compute $\hbetas = \frac{1}{\KK}\sum_{\kk=1}^{\KK}\hbetaks$ as an approximate solution to~\eqref{eq:setToZero}.

Compute an estimate $\hTzeros$ of the asymptotic variance-covariance matrix $\Tzero$ in Theorem~\ref{thm:asymptoticGaussian}. 
 }
 
 Compute $\hbeta = \mathrm{median}_{s\in\indset{\Salg}}(\hbetas)$.
 
 Estimate $\Tzero$ by $\hTzero = \mathrm{median}_{s\in\indset{\Salg}}(\hTzeros + (\hbeta-\hbetas)(\hbeta-\hbetas)^T)$.\label{algo:varianceCorrection}

 \caption{Double machine learning in a partially linear mixed-effects model with repeated measurements.}\label{algo:Summary}
\end{algorithm}

\subsection{Theoretical Properties of the Fixed-Effects Estimator}

The estimator $\hbeta$ as in~\eqref{eq:betahat} converges at the parametric rate, $\NN^{-1/2}$, and is asymptotically Gaussian distributed and semiparametrically efficient. 

\begin{theorem}\label{thm:asymptoticGaussian}
Consider grouped observations $\{\Si=(\Wi,\Xi,\Yi)\}_{i\in\indset{\NN}}$ from the partially linear mixed-effects model~\eqref{eq:PLMM} that satisfy Assumption~\ref{assumpt:Distribution} such that $p(\Wi,\Xi)$ does not depend on $\theta$. 
Let $\NNT:=\sum_{i=1}^{\NN}\nni$ denote the total number of unit-level observations. 
Furthermore, suppose the assumptions in Section~\ref{sect:AssumptionsDefinitions} in the appendix hold,
and consider the symmetric positive-definite matrix $\Tzero$ given in Assumption~\ref{assumpt:regularity7} in the appendix. 
Then, $\hbeta$ as in~\eqref{eq:betahat} concentrates in a $1/\surd{\NNtot}$ neighborhood of $\betazero$ and is centered Gaussian, namely
\begin{equation}\label{eq:thmEquation}
	\surd{\NNtot}\Tzero^{\frac{1}{2}}(\hbeta-\betazero)
	\stackrel{d}{\rightarrow}\mathcal{N}_d(\bo, \one_{d}) \quad (\NN\rightarrow\infty), 
\end{equation}
and semiparametrically efficient. 
The convergence in~\eqref{eq:thmEquation} is in fact uniformly over the law $P$ of $\{\Si=(\Wi,\Xi,\Yi)\}_{i\in\indset{\NN}}$. 
\end{theorem}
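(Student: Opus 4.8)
The plan is to follow the double machine learning strategy of \citet{Chernozhukov2018}, but to carry out every limit and empirical-process argument at the level of the independent groups rather than the individual observations, so that the within-group dependence induced by $\bbi$ is absorbed into group-level summands. First I would reduce the problem to the $\beta$-coordinate of the score. Because $\EP[\RYi-\RXi\betazero\mid\Wi,\Xi]=\bo$, differentiating the $\beta$-score $\psibeta$ with respect to $\sigma^2$ and $\Sigma$ and taking expectations at $\thetazero$ yields $\bo$; the expected Jacobian of the full score $\psi$ is therefore block diagonal between $\beta$ and $(\sigma^2,\Sigma)$. Consequently the joint estimation of $(\sigmazero^2,\Sigmazero)$ does not influence $\hbeta$ at first order, and it suffices to analyze the profiled estimating equation for $\hbetak$.

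Next I would expand
\[
\hbetak-\betazero = \Big(\tfrac{1}{\nnktot}\sum_{i\in\Ik}\hbRkXi^T\hVik^{-1}\hbRkXi\Big)^{-1}\tfrac{1}{\nnktot}\sum_{i\in\Ik}\hbRkXi^T\hVik^{-1}\big(\hbRkYi-\hbRkXi\betazero\big),
\]
and substitute $\hbRkXi=\RXi-(\hmX^{\Ikc}-\mX^0)(\Wi)$, $\hbRkYi=\RYi-(\hmY^{\Ikc}-\mY^0)(\Wi)$, together with $\RYi-\RXi\betazero=\Zi\bbi+\epsi$. This splits the numerator into a leading term $\tfrac{1}{\nnktot}\sum_{i\in\Ik}\RXi^T\Vzeroi^{-1}(\Zi\bbi+\epsi)$, terms linear in the nuisance errors, a bilinear term, and terms carrying the weight error $\hVik^{-1}-\Vzeroi^{-1}$. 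The key device is Neyman orthogonality combined with cross-fitting: conditioning on $\Ikc$ makes $\hetaIkc$ and $\hVik$ deterministic, so each linear term has conditional mean zero, the ones weighted against $\RXi$ because $\EP[\RXi\mid\Wi]=\bo$ with the nuisance error $\Wi$-measurable, and the ones weighted against $\Zi\bbi+\epsi$ because $\EP[\Zi\bbi+\epsi\mid\Wi,\Xi]=\bo$. Hence the linear terms vanish at the nuisance rate, while the bilinear term is controlled by $\normP{\hmX^{\Ikc}-\mX^0}{2}\big(\normP{\hmY^{\Ikc}-\mY^0}{2}+\normP{\hmX^{\Ikc}-\mX^0}{2}\big)$ through Cauchy--Schwarz, which is $o_P(\NNtot^{-1/2})$ under the product-rate condition imposed in the appendix.

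For the leading term I would apply a central limit theorem to the vectors $\RXi^T\Vzeroi^{-1}(\Zi\bbi+\epsi)\in\R^d$, which are independent across $i$ by Assumptions~\ref{assumpt:D3}, \ref{assumpt:D4-2}, and~\ref{assumpt:D6} but not identically distributed because the $\nni$ differ; a Lyapunov condition follows from the uniform bound $\nni\le\nmax$ and the moment assumptions in the appendix. A law-of-large-numbers argument, again conditioning on $\Ikc$, shows the denominator converges to the information matrix $\Jzero$ entering $\Tzero$. Slutsky's lemma, averaging the $\KK$ fold estimators as in~\eqref{eq:betahat}, and a triangular-array version of the central limit theorem with bounds that hold uniformly over $\PcalN$ then give~\eqref{eq:thmEquation} uniformly over $\PP$. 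Semiparametric efficiency follows by recognizing $\psibeta(\Si;\thetazero,\etazero)$ as the efficient score: partialling $\Wi$ out of $\Xi$ and $\Yi$ removes the nonparametric tangent space, and the weighting by $\Vzeroi^{-1}$ is the optimal one for the correlated group errors, so $\Tzero$ equals the efficient information.

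The hardest part will be controlling the weight-error terms in $\hVik^{-1}-\Vzeroi^{-1}$. Unlike the nuisance functions $\hetaIkc$, the estimators $(\hsigmak^2,\hSigmak)$ are computed jointly with $\hbetak$ on the \emph{same} fold $\Ik$ from correlated, noisy residuals, so the clean ``condition on $\Ikc$'' decoupling is unavailable for them. Showing that their estimation error nonetheless enters only at order $o_P(\NNtot^{-1/2})$ --- leaning on the block-diagonal Jacobian above and on the dependence being confined to independent groups of bounded size --- and verifying the group-level Lyapunov condition uniformly over $\PcalN$, is where the bulk of the technical effort lies.
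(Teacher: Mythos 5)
Your plan follows the paper's proof in all of its main lines: the per-fold generalized least squares representation of $\hbetak-\betazero$, Neyman orthogonality combined with cross-fitting to kill the terms that are linear in the nuisance errors, Cauchy--Schwarz with the product-rate condition for the bilinear term, a CLT for independent but non-identically distributed group-level summands, convergence of the denominator matrix, and Slutsky; your block-diagonality observation is exactly the paper's orthogonality property in the $\Sigma$-direction, equation~\eqref{eq:NeymanOrth2}.

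There is, however, a genuine gap at the step you yourself flag as the hardest: the weight-error term involving $\hVik^{-1}-\Vizero^{-1}$. The block-diagonal Jacobian only shows that the \emph{expectation} of $\scoretest(\cdot\,;\Sigma,\eta)-\scoretest(\cdot\,;\Sigmazero,\etazero)$ has no first-order term in $\Sigma-\Sigmazero$; it says nothing about the stochastic fluctuation of the corresponding empirical average, and since $\hSigmak$ is computed on the same fold $\Ik$, you cannot condition it away the way you do with $\hetaIkc$. Closing this requires two ingredients that your plan does not contain. First, a consistency rate for $\hthetak$ (the paper's Lemma~\ref{lem:consistency}, which in turn rests on the identifiability conditions of Assumption~\ref{assumpt:regularity} and a uniform deviation bound), so that $\hSigmak$ can be localized in a $\deltaN$-neighborhood of $\Sigmazero$. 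Second, a maximal inequality bounding $\sup_{\norm{\Sigma-\Sigmazero}\le\deltaN}\normone{\Gnk[f]}$ over the function class indexed by $\Sigma$ (the paper's Lemmas~\ref{lem:empProcPsi} and~\ref{lem:Chernozhukov6-2}), together with a Lipschitz-in-$\theta$ envelope argument (Lemma~\ref{lem:exampleVaart}) that yields the covering-number condition. This inequality cannot simply be cited from \citet{Chernozhukov2018} or \citet{Chernozhukov2014}: those results are stated for i.i.d.\ data, whereas the group-level summands here are independent but not identically distributed (the group sizes $\nni$ differ), so the paper must re-derive them from concentration results for functions of independent variables. Without these two ingredients, the claim that the weight error is of order $o_{\PP}(1)$ after the $\sqrt{\NNtot}$ scaling does not follow from the structural facts you invoke, namely the vanishing cross-derivative and groupwise independence; those control the mean of the term but not its uniform-in-$\Sigma$ fluctuation.
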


Please see Section~\ref{sect:asymptoticDistribution} in the appendix for a proof of Theorem~\ref{thm:asymptoticGaussian}. 
Our proof builds on~\citet{Chernozhukov2018}, but we have to take into account the correlation within units that is introduced by the random effects.

The inverse asymptotic variance-covariance matrix $\Tzero$ can be consistently estimated;
see Lemma~\ref{lem:multiplierMatrix} in the appendix. 
The estimator $\hbeta$ is semiparametrically efficient because the score function comes from the log-likelihood of our data and because $\etazero$ solves a concentrating-out equation for fixed $\theta$; see~\citet{Chernozhukov2018, Newey1994b}. 

The assumptions in Section~\ref{sect:AssumptionsDefinitions} of the appendix specify regularity conditions and  required 
convergence rates of the machine learning estimators. The 
machine learning errors need to satisfy the product relationship 
\begin{displaymath}
	\normP{\mX^0(W)-\hmX^{\Ikc}(W)}{2}\big(\normP{\mY^0(W)-\hmY^{\Ikc}(W)}{2}+ \normP{\mX^0(W)-\hmX^{\Ikc}(W)}{2} \big)\ll\NN^{-\frac{1}{2}}.
\end{displaymath}
This bound requires that only the products of the  
machine learning estimation errors 
$\normP{\mX^0(W)-\hmX^{\Ikc}(W)}{2}$ and $\normP{\mY^0(W)-\hmY^{\Ikc}(W)}{2}$
but not the individual ones need to vanish at a rate smaller than
$\NN^{-1/2}$. In particular, the individual estimation errors  may vanish at the  rate 
smaller than $\NN^{-1/4}$. This is achieved by many machine learning  
methods (cf. 
\citet{Chernozhukov2018}):
$\ell_1$-penalized and related methods in a variety of sparse models
\citep{Bickel2009, Buehlmann2011, Belloni2011, Belloni-Chernozhukov2011, Belloni2012, Belloni-Chernozhukov2013}, forward selection in sparse models
\citep{Kozbur2020}, $L_2$-boosting in sparse linear models
\citep{Luo2016}, a class of regression trees and random forests 
\citep{Wager2016}, and neural networks \citep{Chen1999}. 

We note that so-called Neyman orthogonality makes score functions insensitive to inserting potentially biased machine learning 
estimators of the nuisance parameters. A score function is Neyman orthogonal if its Gateaux derivative vanishes at the true $\thetazero$ and the true $\etazero$. In particular, Neyman orthogonality is a first-order property. The product relationship of the machine learning 
estimating errors described above is used to bound second-order terms. We refer to Section~\ref{sect:asymptoticDistribution} in the appendix for more details.

\section{Numerical Experiments}\label{sect:numerical-experiments}

We apply our method to an empirical and a pseudorandom dataset and in a simulation study. 
Our implementation is available in the \textsf{R}-package \texttt{dmlalg}~\citep{dmlalg}.

\subsection{Empirical Analysis: CD4 Cell Count Data}

Subsequently, we apply our method to longitudinal CD4 cell counts data collected from human immunodeficiency virus (HIV) seroconverters. This data has previously been analyzed by~\citet{Zeger-Diggle1994} and is available in the \textsf{R}-package \texttt{jmcm}~\citep{jmcm} as \texttt{aids}. 
It contains $2376$ observations of CD4 cell counts measured on $369$ subjects. The data was collected during a period ranging from $3$ years before to $6$ years after seroconversion. The number of observations per subject ranges from $1$ to $12$, but for most subjects,  $4$ to $10$ observations are available. 
Please see~\citet{Zeger-Diggle1994} for more details on this dataset. 

Apart from time, five other covariates are measured: the age at seroconversion in years (age), the smoking status measured by the number of cigarette packs consumed per day (smoking), a binary variable indicating drug use (drugs), the number of sex partners (sex), and the depression status measured on the Center for Epidemiologic Studies Depression (CESD) scale (cesd), where higher CESD values indicate the presence of more depression symptoms. 

We incorporate a random intercept per person. 
Furthermore, we consider a square-root transformation of the CD4 cell counts to reduce the skewness of this variable as proposed by~\citet{Zeger-Diggle1994}. 
The CD4 counts are our response. The covariates that are of scientific interest are considered as $X$'s, 
and the remaining covariates are considered as $W$'s in the partially linear mixed-effects model~\eqref{eq:PLMM}.
The effect of time is modeled nonparametrically, but there are several options to model the other covariates. 
Other models than partially linear mixed-effects model
have also been considered in the literature to analyze this dataset. 
For instance, \citet{Fan2000} consider a functional linear model where the linear coefficients are a function of the time. 
\\

We consider two partially linear mixed-effects models for this dataset.
First, we incorporate all covariates except time linearly. 
Most approaches in the literature considering a partially linear mixed-effects model for this data that model time nonparametrically
report that sex and cesd are significant and that either smoking or drugs is significant as well; see for instance~\citet{Zeger-Diggle1994, Taavoni2019b, Wang2011}. 
\citet{Guoyou2008} develop a robust estimation method 
for longitudinal data and estimate 
nonlinear effects from time with regression splines. 
With the CD4 dataset, 
They find that smoking and cesd are significant. 

We apply our method with $\KK=2$ sample splits, $\mathcal{S}=100$ repetitions of splitting the data, and learn the conditional expectations with random forests that consist of $500$ trees whose minimal node size is $5$. 

Like~\citet{Guoyou2008},
we conclude that smoking and cesd are significant; please see the first row of Table~\ref{tab:empirical} for a more precise account of our findings.
Therefore,
we can expect
that our method implicitly performs robust estimation. Apart from sex, our point estimators are larger or of about the same size in absolute value as what~\citet{Guoyou2008} obtain. This suggests that our method incorporates potentially less bias. However, apart from age, the standard deviations are slightly larger with our method. 
This can be expected because random forests are more complex than the regression splines~\citet{Guoyou2008} employ. 
\\

\begingroup
\begin{table}
\footnotesize
\centering
\begin{tabular}{| L{2.5cm} | C{1.9cm}  | C{1.9cm} | C{1.9cm} | C{1.9cm} | C{2.2cm} |}
\hline
&  \textbf{age} & \textbf{smoking} & \textbf{drugs} & \textbf{sex} & \textbf{cesd}\\
\hline
\hline
$W = (\mathrm{time})$ &  $0.004$ ($0.027$) & $0.752$ ($0.123$) & $0.704$ ($0.360$) & $0.001$ ($0.043$)    &   $-0.042$ ($0.015$)\\
\hline
$W = (\mathrm{time}, \mathrm{age}, \mathrm{sex})$ & - & $0.620$ ($0.126$) & $0.602$ ($0.335$) &- & $-0.047$ ($0.015$) \\
\hline
\hline
\citet{Zeger-Diggle1994} 
& $0.037$ ($0.18$) & $0.27$ ($0.15$) & $0.37$ ($0.31$) & $0.10$ ($0.038$) &  $-0.058$ ($0.015$) \\
\hline
\citet{Taavoni2019b} 
& $1.5\cdot 10^{-17}$ ($3.5\cdot 10^{-17}$)& $0.152$ ($0.208$) & $0.130$ ($0.071$) &  $0.0184$ ($0.0039$) & $-0.0141$ ($0.0061$)  \\
\hline
\citet{Wang2011} 
& $0.010$ ($0.033$) & $0.549$ ($0.144$) & $0.584$ ($0.331$) & $0.080$ ($0.038$) & $-0.045$ ($0.013$)\\
\hline
\citet{Guoyou2008} 
& $0.006$ ($0.038$) &  $0.538$ ($0.136$) & $0.637$ ($0.350$)& $0.066$ ($0.040$) & $-0.042$ ($0.015$)\\
\hline
\end{tabular}
\caption{\label{tab:empirical}Estimates of the linear coefficient and its standard deviation in parentheses with our method 
for nonparametrically adjusting for time (first row) and for time, age, and sex (second row).
The remaining rows display the results from~\citet[Section 5]{Zeger-Diggle1994}, \citet[Table 1, ``Kernel'']{Taavoni2019b}, \citet[Table 2, ``Semiparametric efficient scenario I'']{Wang2011}, and \citet[Table 5, ``Robust'']{Guoyou2008}, respectively. 
}
\end{table}
\endgroup

We consider a second estimation approach where we model the variables time, age, and sex nonparametrically and allow them to interact. 
It is conceivable that these variables are not (causally) influenced by smoking, drugs, and cesd and that they are therefore exogenous. 
The variables smoking, drugs, and cesd are modeled linearly,  
and they are considered as treatment variables. Some direct causal effect interpretations are possible if one is willing to assume, for instance, that the nonparametric adjustment variables are causal parents of the linear variables or the response. 
However, we do not pursue this line of thought further.

We estimate the conditional expectations given the three nonparametric variables time, age, and sex again with random forests that consist of $500$ trees whose minimal node size is $5$ and use $K=2$ and $\Salg=100$ in Algorithm~\ref{algo:Summary}. 
We again find that smoking and cesd are significant; please see the second row of Table~\ref{tab:empirical}. 
This cannot be expected a priori because this second model incorporates more complex adjustments, which can lead to less significant variables.

\subsection{Pseudorandom Simulation Study: CD4 Cell Count Data}\label{sect:pseudorandom}

Second, we consider the CD4 cell count data from the previous subsection and perform a pseudorandom simulation study. The variables smoking, drugs, and cesd are modeled linearly and the variables time, age, and sex nonparametrically. 
We condition on these six variables in our simulation. That is, they are the same in all repetitions. 
The function $\g$ in~\eqref{eq:PLMM} is chosen as a regression tree that we built beforehand. 
We let $\betazero=(0.62, 0.6, -0.05)^T$, where the first component corresponds to smoking, the second one to drugs, and the last one to cesd, consider a  standard deviation of the random intercept per subject of $4.36$, and a standard deviation of the error term of $4.35$. These are the point estimates of the respective quantities obtained in the previous subsection. 

Our fitting procedure uses random forests consisting of $500$ trees whose minimal node size is $5$ to estimate the conditional expectations, and we use $K=2$ and $\Salg=10$ in Algorithm~\ref{algo:Summary}. We perform $5000$ simulation runs.
We compare the performance of our method with that of the spline-based function \texttt{gamm4} from the package~\texttt{gamm4}~\citep{gamm4} for the statistical software \textsf{R}~\citep{R}.
This method represents the nonlinear part of the model by smooth additive functions and estimates them by penalized regression splines. The penalized components are treated as random effects and the unpenalized components 
as fixed. 

The results are displayed in Figure~\ref{fig:pseudorandom}. 
With our method,  \texttt{mmdml}, the two-sided confidence intervals for $\betazero$ are of about the same length  but achieve a coverage that is closer to the nominal $95\%$ level than with \texttt{gamm4}. The \texttt{gamm4} method largely undercovers the packs component of $\betazero$, which can be explained by the incorporated bias.

\begin{figure}[]
	\centering
	\caption[]{\label{fig:pseudorandom} 
	Coverage and  length of two-sided confidence intervals at significance level 5\% and  bias for our method, \texttt{mmdml}, and 
	\texttt{gamm4}. 
	In the coverage plot, solid dots represent point estimators, and circles represent $95\%$ confidence bands with respect to the $5000$ simulation runs.
	The confidence interval length and bias are displayed with boxplots without outliers.}
	\includegraphics[width=0.75\textwidth]{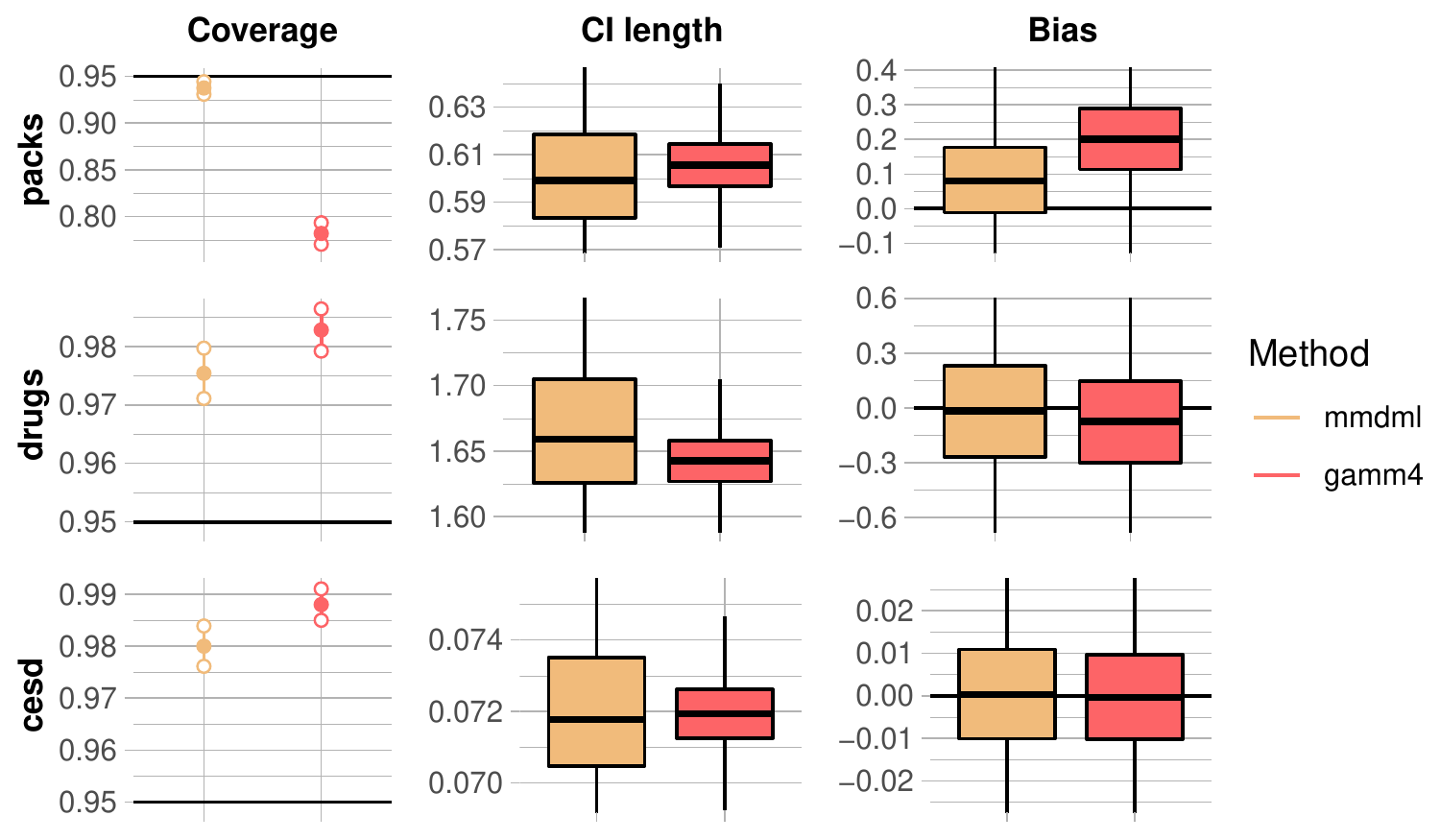}
\end{figure}

\subsection{Simulation Study}\label{sect:simul-study}

We consider a partially linear mixed-effects model with $q=3$ 
random effects and where $\betazero$ is one-dimensional.
Every subject has their own random intercept term and a nested random effect with two levels.
Thus, the random effects structure is more complex than in the previous two subsections because these models only used a random intercept. 
We compare three data generating mechanisms: 
One where the function $\g$ is nonsmooth and the number of observations per group is balanced, 
one where the function $\g$ is smooth and the number of observations per group is balanced, 
and one where the function $\g$ is nonsmooth and the number of observations per group is unbalanced;
please see Section~\ref{sect:dataSimulation} in the appendix for more details.

We estimate the nonparametric nuisance components, that is, the conditional expectations, with
random forests consisting of $500$ trees whose minimal node size is $5$. 
Furthermore, we use $K=2$ and $\Salg=10$ in Algorithm~\ref{algo:Summary}. 

We perform $1000$ simulation runs and consider different numbers of groups $N$. As in the previous subsection, we compare the performance of our method with  
\texttt{gamm4}.

The results are displayed in Figure~\ref{fig:simulation}. Our method, \texttt{mmdml}, highly outperforms \texttt{gamm4} in terms of coverage for nonsmooth $g$ because the coverage of \texttt{gamm4} equals $0$ due to its substantial bias. Our method overcovers slightly due to the correction factor that results from the $\Salg$ repetitions. However, this correction factor is highly recommended in practice.  
With smooth $g$, \texttt{gamm4} is closer to the nominal coverage and has shorter confidence intervals than our method. Because the underlying model is smooth and additive, a spline-based estimator is better suited. In all scenarios,  our method outputs longer confidence intervals than \texttt{gamm4} because we use random forests; consistent with theory, the difference in absolute value decreases though when $N$ increases.

\begin{figure}[]
	\centering
	\caption[]{\label{fig:simulation} 
	Coverage and median length of two-sided confidence intervals for $\betazero$ at significance level 5\% (true $\betazero = 0.5$) and median bias
	for three data generating scenarios for our method, \texttt{mmdml}, and 
	\texttt{gamm4}. 
	The shaded regions in the coverage plot represent $95\%$ confidence bands with respect to the $1000$ simulation runs. 
	The dots in the coverage and bias plot are jittered, but neither are their interconnecting lines nor their confidence bands.}
	\includegraphics[width=\textwidth]{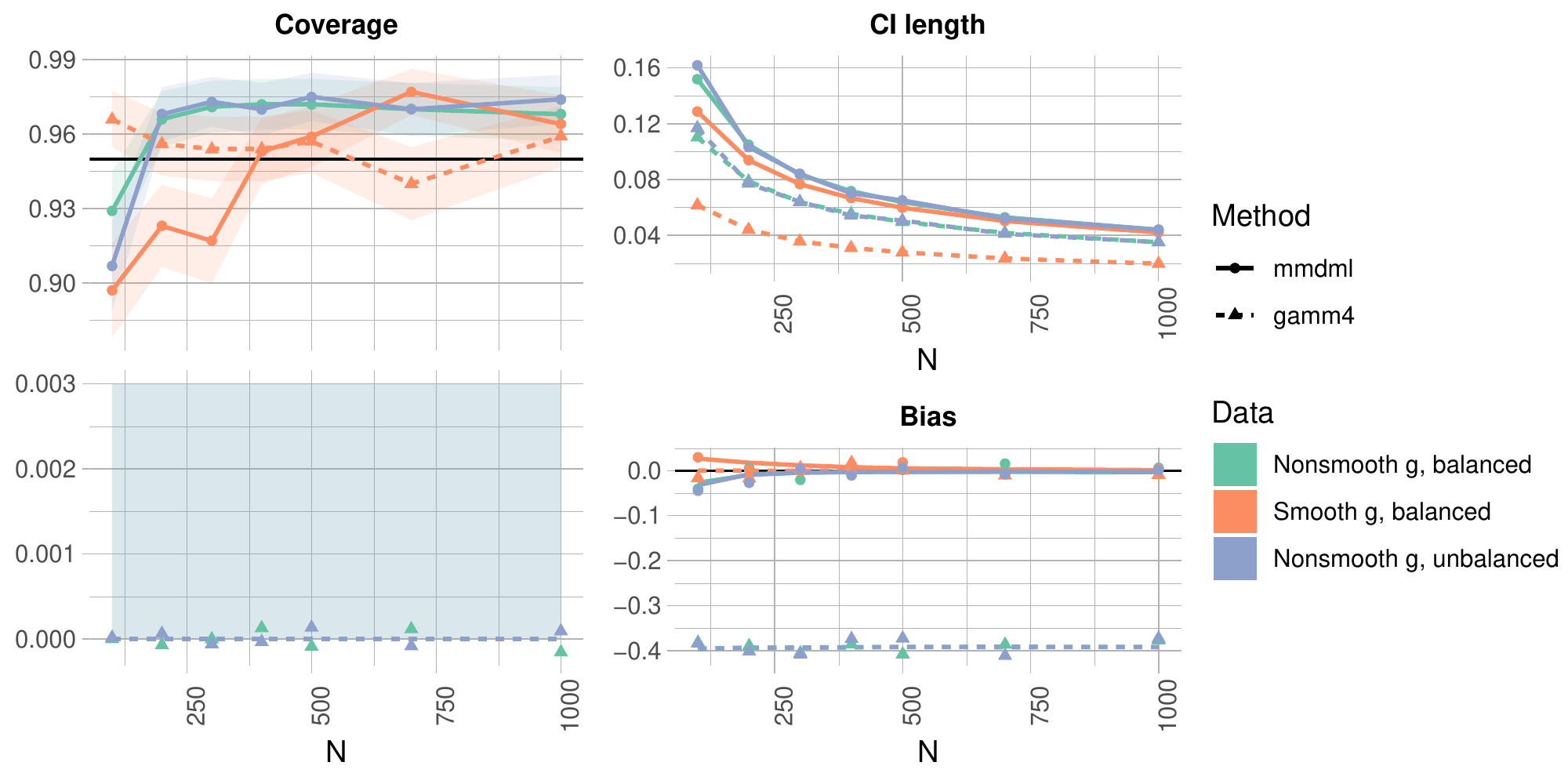}
\end{figure}

\section{Conclusion}\label{sect:conclusion}

Our aim was to develop inference for the linear coefficient $\betazero$ of a partially linear mixed-effects model 
that includes a linear term and potentially complex nonparametric terms.
Such models can be used to describe heterogenous and correlated data that feature 
some grouping structure, which may result from taking repeated measurements. 
Traditionally, spline or kernel approaches are used to cope with the nonparametric part of such a model. We presented a scheme that uses the double machine learning  
framework of~\citet{Chernozhukov2018} to estimate any 
nonparametric components with arbitrary machine learning algorithms. This allowed us to consider complex nonparametric components with interaction structures and high-dimensional variables. 

Our proposed method is as follows. First, the nonparametric variables are regressed out from the response and the linear variables. This step adjusts the response and the linear variables  for the nonparametric variables and may be performed with any machine learning algorithm.
The adjusted variables satisfy a linear mixed-effects model, where the linear coefficient $\betazero$ can be estimated with standard linear 
mixed-effects
techniques. We showed that the estimator of $\betazero$ 
asymptotically follows a Gaussian distribution, converges at the parametric rate, and is semiparametrically efficient.
This asymptotic result 
allows us to perform inference for $\betazero$. 

Empirical experiments demonstrated the performance of our proposed method.  
We conducted an empirical and pseudorandom data analysis and a  simulation study. 
The simulation study and the pseudorandom experiment confirmed the effectiveness of our method in terms of coverage, length of confidence intervals, and estimation bias
compared to a penalized regression spline approach relying on additive models. 
In the empirical experiment, we analyzed longitudinal CD4 cell counts data collected from 
HIV-infected individuals. 
In the literature, most methods only incorporate the time component nonparametrically to analyze this dataset.
Because we estimate nonparametric components with machine learning algorithms, we can allow several variables to enter the model nonlinearly, and we can allow these variables to interact. 
A comparison of our results with the literature suggests that our method  
may perform robust estimation. 

Implementations of our method are available in the \textsf{R}-package \texttt{dmlalg}~\citep{dmlalg}.

\section*{Acknowledgements}

This project has received funding from the European Research Council (ERC) under the European Union’s Horizon 2020 research and innovation programme (grant agreement No. 786461).

\phantomsection
\addcontentsline{toc}{section}{References}
\bibliography{references}

\begin{appendices}

\section{Data Generating Mechanism for Simulation Study}\label{sect:dataSimulation}

Let $n=15$. 
For all scenarios except the unbalanced one, 
we sample the number of observations for each experimental unit from $\{n-3, n-2, \ldots, n+2, n+3\}$ with equal probability. 
For the unbalanced scenario, we sample the number of observations for each experimental unit from $\{1, 2, \ldots, 2n-2, 2n-1\}$ with equal probability. 
We consider $3$-dimensional nonparametric variables. For $w=(w_1,w_2,w_{3})\in\R^{3}$, consider the real-valued functions
\begin{displaymath}
\begin{array}{cl}
	&h(w) \\
	:=& -3 \cdot \one_{w_3>0}  \one_{w_1>0}  +
    2 \cdot\one_{w_3>0} \one_{w_1\le 0}  -
    \one_{w_3\le 0}  \one_{w_3\le -1}  -
    2 \cdot\one_{w_3\le 0}  \one_{w_3>-1}  \one_{w_2>0}  \\
    &\quad -
    3 \cdot\one_{w_3\le 0}  \one_{w_3>-1}  \one_{w_2\le 0}  \one_{w_1>0.75} +
    \one_{w_3\le 0}  \one_{w_3>-1} \one_{w_2\le 0}  \one_{w_1\le 0.75} 
    \end{array}
\end{displaymath}
and 
\begin{displaymath}
\begin{array}{cl}
	&\g(w)\\ 
	:=& \one_{w_1>0}  \one_{w_2>0}  \one_{w_3>1}  -
    1.5\cdot \one_{w_1>0} \one_{w_2>0}  \one_{w_3\le 1}  -
    2.7\cdot \one_{w_1>0}  \one_{w_2\le 0}  \one_{w_2\le -0.5} \one_{w_1>1}  \one_{w_3>1.25}  \\
    &\quad - 0.5\cdot \one_{w_1>0}  \one_{w_2\le }  \one_{w_2\le -0.5}  \one_{w_1>1} \one_{w_3\le 1.25} +
    3.2 \cdot\one_{w_1>0}  \one_{w_2\le 0}  \one_{w_2\le -0.5}  \one_{w_1\le 1}  \\
    &\quad + 0.75\cdot \one_{w_1>0} \one_{w_2\le 0} \one_{w_2>-0.5}  +
    3 \cdot\one_{w_1\le 0} \one_{w_3>0}  \one_{w_2\le -1}  \one_{w_1\le -1.3} \\
    &\quad+
    1.5\cdot \one_{w_1\le 0}  \one_{w_3>0} \one_{w_2\le -1}  \one_{w_1>-1.3}  -
    2.3\cdot \one_{w_1\le 0}  \one_{w_3> 0}  \one_{w_2>-1}  \\
    &\quad+
    2.8\cdot \one_{w_1\le 0} \one_{w_3\le 0}  \one_{w_3\le -0.75}  +
    2 \cdot\one_{w_1\le 0}  \one_{w_3\le 0}  \one_{w_3>-0.75}  \one_{w_1\le -0.5}  \\
    &\quad-
    1.75 \cdot\one_{w_1\le 0} \one_{w_3\le 0}  \one_{w_3>-0.75} \one_{w_1>-0.5} 
    \end{array}
\end{displaymath}

For the nonparametric covariable, we consider the following data generating mechanism. 
The matrix $\Wi\in\R^{\nni\times 3}$ contains the $\nni$ observations of the $i$th experimental unit in its rows. We draw these $\nni$ rows of $\Wi$ independently. That is, 
$(\Wi)_{k, \cdot}\sim\mathcal{N}_{3}(\bo,\one)$ for $i\in\indset{\NN}$ and $k\in\indset{\nni}$ with $(\Wi)_{k, \cdot}\independent (\Wi)_{l, \cdot}$, $k\neq l$,$k, l\in\indset{\nni}$ and $\Wi\independent\Wj$, $i\neq j$, $i, j\in\indset{\NN}$.

The linear covariable $\Xi$ is modeled with $\Xi = h(\Wi) + \epsXi$, where its error term $\epsXi\sim\mathcal{N}_{\nni}(\bo, \one)$ for $i\in\indset{\NN}$ and $\epsXi\independent \epsXj$ for $i\neq j$, $i,j\in\indset{\NN}$. 

For $\betazero=0.5$ and $\sigma_0=1$, the model of the response $\Yi$ is 
$\Yi = \Xi\betazero + \g(\Wi) + \Zi\bbi + \epsi$ with
\begin{displaymath}
	\Zi = \begin{pmatrix} 1 & 0 & 1\\
	                                 1 & 0 & 1 \\
	                                 \vdots & \vdots & \vdots\\
	                                 1 & 0 & 1\\
	                                 0 & 1 & 1\\
	                                 0 & 1 & 1\\
	                                 \vdots & \vdots & \vdots\\
	                                 0 & 1 & 1
	 \end{pmatrix}\in\R^{\nni\times 3}, \quad
	 \bbi = \begin{pmatrix}b^1_1\\ b^1_2\\ b^2\end{pmatrix} \sim\mathcal{N}_3(\bo, \textrm{diag}(1.5^2, 1.8^2, 1.8^2)), 
\end{displaymath}
$\epsi\sim\mathcal{N}_{\nni}(\bo, \sigmazero^2\one)$
for $i\in\indset{\NN}$, and
$\bbi\independent\bj$, $\bbi\independent(\epsi, \epsj)$, and  $\epsi\independent\epsj$ for $i\neq j$, $i, j\in\indset{\NN}$, where the first column of $\Zi$ consists of $\floor*{0.5\nni}$ entries of $1$'s and $\ceil*{0.5\nni}$ entries of $0$'s and correspondingly for the second column of $\Zi$.

\section{Assumptions and Additional Definitions}\label{sect:AssumptionsDefinitions}

Recall the partially linear mixed-effects model 
\begin{displaymath}
	\Yi = \Xi\betazero + \g(\Wi) + \Zi\bbi+ \epsi, \quad i\in\indset{\NN}
\end{displaymath}
for groups $i\in\indset{\NN}$ as in~\eqref{eq:PLMM}. 
We consider $\NN$ grouped observations $\{\Si=(\Wi,\Xi, \Zi, \Yi)\}_{i\in\indset{\NN}}$ from this model that satisfy Assumption~\ref{assumpt:Distribution}. 
In each group $i\in\indset{\NN}$, we observe $\nni$ observations. We assume that these numbers are uniformly bounded by $\nmax<\infty$, that is, $\nni\le\nmax$ for all $i\in\indset{\NN}$. We denote the total number of observations of all groups by $\NNtot := \sum_{i=1}^{\NN}\nni$.\\

Let the number of sample splits $\KK\ge 2$ be a fixed integer independent of $\NN$. We assume that $\NN\ge\KK$ holds. Consider a partition $I_1,\ldots,I_{\KK}$ of $\indset{\NN}$. 
For $\kk\in\indset{\KK}$, we denote by $\nnktot:=\sum_{i\in\Ik}\nni$ the total number of observations of all groups $i$ belonging to $\Ik\subset\indset{\NN}$.
The sets $\I_1,\ldots,I_{\KK}$ are assumed to be of approximately equal size in the sense that $\KK\nnktot=\NNtot +o(1)$ holds for all $\kk\in\indset{\KK}$ as $\NN\rightarrow\infty$, 
which implies $\frac{\NNtot}{\nnktot} = O(1)$. 
Moreover, we assume that  $\frac{\normone{\Ik}}{\nnktot} = O(1)$ 
holds for all $\kk\in\indset{\KK}$.

For $\kk\in\indset{\KK}$, denote by $\SIkc := \{\Si\}_{i\in\Ikc}$ the grouped observations from $\Ikc$. We denote the nuisance parameter estimator that is estimated with data from $\Ikc$ by $\hetaIkc=\hetaIkc(\SIkc)$. 

\begin{definition}
For $\kk\in\indset{\KK}$, $\theta\in\Theta$, and $\eta\in\TauN$, where $\Theta$ and $\TauN$ are defined in Assumptions~\ref{assumpt:Theta} and~\ref{assumpt:DML}, respectively,  we introduce the notation
\begin{displaymath}
	\Enk{\psi (\S; \theta,\eta)} := \frac{1}{\nnktot}\sum_{i\in\Ik}\psi (\Si; \theta,\eta).
\end{displaymath}
\end{definition}

Let $\{\deltaN\}_{\NN\ge \KK}$ and $\{\DeltaN\}_{\NN\ge \KK}$ be two sequences of non-negative numbers that converge to $0$ as $\NN\rightarrow\infty$, where $\deltaN^2\ge \NN^{-\frac{1}{2}}$ holds. 
We assume that $\normone{\Ik}^{-\frac{1}{2}+\frac{1}{p}}\log(\normone{\Ik})\lesssim\deltaN$
holds for all $\kk\in\indset{\KK}$, where $p$ is specified in Assumption~\ref{assumpt:regularity}. 
Let $\{\PcalN\}_{\NN\ge 1}$ be a sequence of sets of probability distributions $\PP$ of the $\NN$ grouped observations $\{\Si=(\Wi,\Xi,\Yi)\}_{i\in\indset{\NN}}$. 
We make the following additional assumptions. 

\begin{assumptions}\label{assumpt:regularity}
Let $p\ge 8$. 
	For all $\NN$, all $i\in\indset{\NN}$,  all $\PP\in\PcalN$, and all $\kk\in\indset{\KK}$, we have the following. 
	\begin{enumerate}[label={\theassumptions.\arabic*}]
		\item\label{assumpt:regularity1}
			At the true  $\thetazero$ and the true  $\etazero$, 
			the data $\{\Si=(\Wi,\Xi,\Zi,\Yi)\}_{i\in\indset{\NN}}$ satisfies the identifiability condition  $\EPbig{\Enk{\psi(\S;\thetazero,\etazero)}} = \bo$.
	
		\item\label{assumpt:regularity2}
		There exists a finite real constant $\CpnormRV$ satisfying $\normP{\Xi}{p}+\normP{\Yi}{p}\le \CpnormRV$ for all $i\in\indset{\NN}$.
		
		\item\label{assumpt:regularity6}
		The matrices $\Zi$ assigning the random effects inside a group are fixed and  bounded. In particular, there exists a finite real constant $\CboundZi$ satisfying 
		$\norm{\Zi}\le\CboundZi$ for all $i\in\indset{\NN}$.
		
		\item\label{assumpt:regularity4}
		        In absolute value, the smallest and largest singular values of the Jacobian matrix 
		        \begin{displaymath}
		        		J_0 := \partial_{\theta}\EPBig{\Enkbig{\psi(\S;\theta,\etazero)}}\Big|_{\theta=\theta_0}
		        \end{displaymath}
	are bounded away from $0$ by $\cone>0$ and are bounded away from $+\infty$ by $\ctwo<\infty$. 	
	
		\item\label{assumpt:regularity5}
			For all $\theta\in\Theta$, we have the identification condition
			\begin{displaymath}
				\min\{\norm{J_0(\theta-\thetazero)}, \cone\} \le 2\normBig{\EPBig{\Enkbig{\psi(\S;\theta,\etazero)}}}.
			\end{displaymath}
			
		\item\label{assumpt:regularity3}
		The matrix $\EP[\RXi^T(\Zi\Sigmazero\Zi+\sigmazero^2\oneni)^{-1}\RXi]\in\R^{d\times d}$ exists and is invertible for all $i\in\indset{\NN}$. We assume that the same holds if $\thetazero$ and $\etazero$ are replaced by $\theta\in\Theta$ and $\eta\in\TauN$, respectively, with $\Theta$ as in Assumption~\ref{assumpt:Theta} and $\TauN$ as in Assumption~\ref{assumpt:DML}. 
		
		\item\label{assumpt:regularity3-2}
			The singular values of the symmetric matrix $\EP[\RXi^T(\Zi\Sigmazero\Zi+\sigmazero^2\oneni)^{-1}\RXi]\in\R^{d\times d}$ are uniformly bounded away from $0$ by $\clambdamin>0$ for all $i\in\indset{\NN}$.
			
		\item\label{assumpt:regularity7}
			There exists a symmetric positive-definite matrix $\Tzero\in\R^{d\times d}$ satisfying
			\begin{displaymath}
				\Tbar := \frac{1}{\NNtot}\sum_{i=1}^{\NN}\EPbig{\RXi^T\Vzeroi^{-1}\RXi}
	= \Tzero + o(1).
			\end{displaymath}
	\end{enumerate}
\end{assumptions}

Assumption~\ref{assumpt:regularity1} ensures that $\betazero$ is identifiable by our estimation method. 
Assumption~\ref{assumpt:regularity2} ensures that enough moments of $\Xi$ and $\Yi$ exist. 
Assumption~\ref{assumpt:regularity4} and~\ref{assumpt:regularity5} are required to prove that $\thetazero$ is consistently estimated in Lemma~\ref{lem:consistency}. The proof of this lemma  uses a Taylor expansion. 
Assumption~\ref{assumpt:regularity3},~\ref{assumpt:regularity3-2}, and~\ref{assumpt:regularity7} are required to make statements about the asymptotic variance-covariance matrix in the proof of Theorem~\ref{thm:asymptoticGaussian}.

The following Assumption~\ref{assumpt:Theta} characterizes the set $\Theta$ to which $\thetazero$ belongs and from which estimators of $\thetazero$ are not too far away in the sense of Assumption~\ref{assumpt:Theta2}.

\begin{assumptions}\label{assumpt:Theta}
Consider the set  
\begin{displaymath}
	\Theta := \big\{\theta =(\beta, \Sigma, \sigma^2)\in\R^d \times \R^{q\times q}\times \R\colon \Sigma\in\R^{q\times q} \textrm{ symmetric positive definite, } \sigma>0\big\}
\end{displaymath}
of parameters. We make the following assumptions on $\Theta$ and $\hthetak$ for $\kk\in\indset{\KK}$.
\begin{enumerate}[label={\theassumptions.\arabic*}]
	\item\label{assumpt:Theta1}
		The set $\Theta$ is bounded and contains $\thetazero$ and a ball of radius $\max_{\NN\ge 1}\deltaN$ 
		around $\thetazero$. 
	
	\item\label{assumpt:Theta3}
	There exists a finite real constant $\CboundV$ such that we have 
	$\norm{(\Zi\Sigma\Zi^T + \oneni )^{-1}}\le \CboundV$
	for all $i\in\indset{\NN}$ and all $\Sigma$ belonging to $\Theta$.
		
	\item\label{assumpt:Theta2}
		For all $\kk\in\indset{\KK}$,
		the estimator $\hthetak$ belongs to $\Theta$ and 
		satisfies the approximate solution property
\begin{displaymath}
	\normbig{\Enkbig{\psibig{\S; \hthetak,\hetaIkc}}} 
	\le \inf_{\theta\in\Theta} \normbig{\Enkbig{\psibig{\S; \theta,\hetaIkc}}} + \errorN
\end{displaymath}
	with the nuisance parameter estimator $\hetaIkc=\hetaIkc(\SIkc)$,
where $\{\errorN\}_{\NN\ge \KK}$ is a sequence of non-negative numbers satisfying $\errorN\lesssim\deltaN^2$.
\end{enumerate}
\end{assumptions}

The following Assumption~\ref{assumpt:DML} mainly characterizes the $\NN^{-1/2}$ product convergence rate of the machine learners that estimate the conditional expectations, which are nuisance functions.

\begin{assumptions}\label{assumpt:DML}
	Consider the $p\ge 8$ from Assumption~\ref{assumpt:regularity}.  
	For all $\NN\ge \KK$ and all $\PP\in\PcalN$, 
	consider a nuisance function realization set $\TauN$ such that the following conditions hold.
	\begin{enumerate}[label={\theassumptions.\arabic*}]
		\item\label{assumpt:DML1}
		The set $\TauN$ consists of $\PP$-integrable functions $\eta=(\mX,\mY)$ whose $p$th moment exists, and it contains $\etazero$.  Furthermore, there exists a finite real constant $\CpnormEta$ such that
		\begin{displaymath}
			\begin{array}{l}
			\normP{\etazero-\eta}{p}\le \CpnormEta, 
			\quad
			\normP{\etazero-\eta}{2}\le \deltaN^8,\\
			\normP{\mX^0(W)-\mX(W)}{2}\big(\normP{\mY^0(W)-\mY(W)}{2}+\normP{\mX^0(W)-\mX(W)}{2}\big)\le\deltaN\NN^{-\frac{1}{2}}
			\end{array}
		\end{displaymath}
		hold for all elements $\eta$ of $\TauN$. 
		
		\item\label{assumpt:DML2}
		For all $\kk\in\indset{\KK}$,  the nuisance parameter estimate $\hetaIkc=\hetaIkc(\SIkc)$ satisfies 
		\begin{displaymath}
			\begin{array}{l}
			\normP{\etazero-\hetaIkc}{p}\le \CpnormEta,
			\quad
			\normP{\etazero-\hetaIkc}{2}\le \deltaN^8, \\
			\normP{\mX^0(W)-\hmX^{\Ikc}(W)}{2}\big(\normP{\mY^0(W)-\hmY^{\Ikc}(W)}{2}+ \normP{\mX^0(W)-\hmX^{\Ikc}(W)}{2} \big)\le\deltaN\NN^{-\frac{1}{2}}
			\end{array}
		\end{displaymath}
		with $\PP$-probability no less than $1-\DeltaN$.
		Denote by $\EpsN$ the event that $\hetaIkc=\hetaIkc(\SIkc)$, $\kk\in\indset{\KK}$ belong to $\TauN$, and assume this event  holds with $\PP$-probability at least 
		 $1-\DeltaN$. 
		
		\item\label{assumpt:DML3}
		For all $\kk\in\indset{\KK}$, the parameter estimator $\hthetak$ is $\PP$-integrable and its $p$th moment exists. 
	\end{enumerate}
\end{assumptions}

We suppose all assumptions presented in Section~\ref{sect:AssumptionsDefinitions} of the appendix hold throughout the remainder of the appendix.

\section{Proof of Theorem~\ref{thm:asymptoticGaussian}}\label{sect:proofAsymptoticGaussian}

\subsection{Supplementary Lemmata}

\begin{lemma}\citep[Lemma G.7]{Emmenegger2021}\label{lem:Emmenegger2021-Lemma-G-7}
	Let $\ttt\ge 1$. 
	Consider a $t$-dimensional random variable $A$ and an $s$-dimensional random variable $B$. Denote the joint law of $A$ and $B$ by $\PP$. Then, we have 
	\begin{displaymath}
		\normP{A-\EP[A|B]}{{\ttt}}\le 2\normP{A}{\ttt}. 
	\end{displaymath}
\end{lemma}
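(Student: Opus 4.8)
The plan is to bound the residual norm by combining the triangle inequality with the fact that conditional expectation is an $L^{\ttt}(\PP)$-contraction. Since $\ttt\ge 1$, the functional $\normP{\cdot}{\ttt}$ is a genuine norm on $L^{\ttt}(\PP)$, so writing $A-\EP[A|B]$ as a difference and applying the triangle inequality gives
\begin{displaymath}
	\normP{A-\EP[A|B]}{\ttt} \le \normP{A}{\ttt} + \normP{\EP[A|B]}{\ttt}.
\end{displaymath}
It therefore suffices to establish the contraction bound $\normP{\EP[A|B]}{\ttt}\le\normP{A}{\ttt}$; once this is in hand, adding the two resulting copies of $\normP{A}{\ttt}$ produces exactly the claimed factor of $2$.

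For the contraction bound I would invoke the conditional Jensen inequality applied to the map $x\mapsto\norm{x}^{\ttt}$ on $\R^{t}$, where $\norm{\cdot}$ is the Euclidean norm. This map is convex, being the composition of the convex Euclidean norm with the convex, nondecreasing function $r\mapsto r^{\ttt}$ on $[0,\infty)$, which is where the hypothesis $\ttt\ge 1$ enters. Conditional Jensen then yields $\norm{\EP[A|B]}^{\ttt}\le\EP[\norm{A}^{\ttt}\,|\,B]$ almost surely. Taking $\PP$-expectations on both sides and using the tower property gives $\EP[\norm{\EP[A|B]}^{\ttt}]\le\EP[\norm{A}^{\ttt}]$, and raising to the power $1/\ttt$ delivers $\normP{\EP[A|B]}{\ttt}\le\normP{A}{\ttt}$, as desired.

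The argument is elementary and involves no delicate estimates, so there is no genuine obstacle to overcome; the only points that warrant attention are the justification of conditional Jensen in the vector-valued setting and the integrability ensuring that all the conditional expectations are well defined. The former is handled by the composition argument establishing convexity of $x\mapsto\norm{x}^{\ttt}$, and the latter is covered by the standing convention (stated in the notation section) that the relevant expectations exist. I would remark for completeness that the constant $2$ cannot be improved by this route, since both the triangle-inequality step and the contraction step are individually tight.
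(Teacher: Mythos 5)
Your proof is correct: the triangle inequality in $L^{\ttt}(\PP)$ combined with the conditional Jensen contraction $\normP{\EP[A|B]}{\ttt}\le\normP{A}{\ttt}$ (using convexity of $x\mapsto\norm{x}^{\ttt}$ for $\ttt\ge 1$) gives exactly the stated bound with constant $2$. The paper itself offers no proof of this lemma — it imports it verbatim from \citet[Lemma G.7]{Emmenegger2021} — and your argument is the standard one underlying that cited result, so there is nothing further to reconcile.
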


\begin{lemma}\label{lem:EmmeneggerG10} \citep[Lemma G.10]{Emmenegger2021}\label{lem:Emmenegger2021-Lemma-G-10}
	Consider a $t_1$-dimensional random variable $A_1$, a $t_2$-dimensional  random variable $A_2$, and an $s$-dimensional random variable $B$. 
	Denote the joint law of $A_1$, $A_2$, and $B$ by $\PP$. Then, we have 
	\begin{displaymath}
		\normbig{\EP\big[(A_1-\EP[A_1|B])A_2^T\big]}^2\le \normP{A_1}{2}^2\normP{A_2}{2}^2.
	\end{displaymath}
\end{lemma}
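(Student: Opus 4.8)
The plan is to reduce the operator-norm bound on the $t_1\times t_2$ matrix $M:=\EP[(A_1-\EP[A_1|B])A_2^T]$ to two applications of the Cauchy--Schwarz inequality, together with the fact that conditioning on $B$ is a non-expansive $L^2(P)$-projection. Write $\tilde{A}_1:=A_1-\EP[A_1|B]$ for the residual of $A_1$ after projecting onto the information in $B$, so that $M=\EP[\tilde{A}_1A_2^T]$. I would first invoke the variational characterization of the operator norm, $\norm{M}=\sup_{\norm{u}=\norm{w}=1}u^TMw$, where $u\in\R^{t_1}$ and $w\in\R^{t_2}$. For each such pair of unit vectors one has $u^TMw=\EP[(u^T\tilde{A}_1)(w^TA_2)]$, a scalar expectation to which the $L^2(P)$ Cauchy--Schwarz inequality applies.

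Second, I would bound the two resulting factors separately. For the $A_2$ factor, the scalar Cauchy--Schwarz inequality gives $|w^TA_2|\le\norm{w}\norm{A_2}=\norm{A_2}$, so that $\EP[(w^TA_2)^2]\le\EP[\norm{A_2}^2]=\normP{A_2}{2}^2$. For the residual factor, the key observation is that $u^T\tilde{A}_1=u^TA_1-\EP[u^TA_1|B]$ is precisely the $L^2(P)$-residual of the scalar $u^TA_1$; since conditional expectation is the orthogonal projection onto the $\sigma$-algebra generated by $B$, the Pythagorean identity yields $\EP[(u^T\tilde{A}_1)^2]\le\EP[(u^TA_1)^2]\le\EP[\norm{A_1}^2]=\normP{A_1}{2}^2$, the final step again by scalar Cauchy--Schwarz. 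Combining these with the first step gives $u^TMw\le\normP{A_1}{2}\normP{A_2}{2}$ uniformly over unit vectors $u,w$; taking the supremum and squaring then yields the claimed inequality.

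There is no genuinely hard step here, as the statement is an elementary second-moment inequality. The only points that require a little care are, first, interpreting $\norm{\cdot}$ on the left as the matrix operator norm while $\normP{\cdot}{2}$ on the right is the $L^2(P)$ norm of a Euclidean-valued random vector, so that the two must be bridged through the variational formula; and second, using the contraction property of conditional expectation to discard $\EP[u^TA_1|B]$ without incurring any multiplicative constant, in contrast to the cruder factor-of-two bound of Lemma~\ref{lem:Emmenegger2021-Lemma-G-7}. As an alternative that sidesteps the variational argument, one may instead bound the operator norm by the Frobenius norm via $\norm{M}\le\norm{M}_F$, apply Cauchy--Schwarz entrywise to $M_{k\ell}=\EP[\tilde{A}_{1,k}A_{2,\ell}]$, and sum over $k,\ell$, using $\EP[\norm{\tilde{A}_1}^2]\le\EP[\norm{A_1}^2]$; this route reaches the same bound.
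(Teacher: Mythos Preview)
Your proof is correct. The paper does not supply its own proof of this lemma; it merely cites \citep[Lemma G.10]{Emmenegger2021} and records the statement for later use, so there is nothing to compare against beyond checking that your argument is sound. Both routes you describe---the variational characterization of the operator norm followed by two applications of Cauchy--Schwarz, and the Frobenius-norm alternative---are valid, and your use of the $L^2(\PP)$-contraction property of conditional expectation (rather than the cruder factor-of-two bound from Lemma~\ref{lem:Emmenegger2021-Lemma-G-7}) is exactly what is needed to obtain the inequality without any extraneous constant.
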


The following lemma, proved in~\citet[Lemma 6.1]{Chernozhukov2018} and~\citet[Lemma G.12]{Emmenegger2021}, states that conditional convergence in probability implies unconditional convergence in probability. 

\begin{lemma}\label{lem:ChernozhukovLemma}(\citet[Lemma 6.1]{Chernozhukov2018}; \citet[Lemma G.12]{Emmenegger2021})
Let $\{A_n\}_{n\ge 1}$ and $\{B_n\}_{n\ge 1}$ be sequences of random vectors, and let ${\ttt}\ge 1$.
Consider a deterministic sequence $\{\eps_n\}_{n\ge 1}$ with $\eps_n\rightarrow 0$ as $n\rightarrow\infty$ such that $\E[\norm{A_n}^{\ttt} |B_n]\le\eps_n^{\ttt}$ holds. 
Then, we have $\norm{A_n}=O_{\PP}(\eps_n)$ unconditionally, meaning that that for any sequence $\{\ell_n\}_{n\ge 1}$ with $\ell_n\rightarrow\infty$ as $n\rightarrow\infty$, we have $\PP(\norm{A_n}>\ell_n\eps_n)\rightarrow 0$.
\end{lemma}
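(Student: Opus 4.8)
\textbf{Proof plan for Lemma~\ref{lem:ChernozhukovLemma}.}
The plan is to deduce the unconditional stochastic-order statement from the conditional moment bound by combining the conditional Markov inequality with the tower property of expectation. The statement to establish is: given $\E[\norm{A_n}^{\ttt}\mid B_n]\le\eps_n^{\ttt}$ with $\eps_n\to 0$ and $\ttt\ge 1$, then for every deterministic sequence $\ell_n\to\infty$ we have $\PP(\norm{A_n}>\ell_n\eps_n)\to 0$, which is precisely the meaning of $\norm{A_n}=O_{\PP}(\eps_n)$.

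First I would fix an arbitrary sequence $\{\ell_n\}_{n\ge 1}$ with $\ell_n\to\infty$, and work on the event $\{\norm{A_n}>\ell_n\eps_n\}$. The first step is to apply the conditional version of Markov's inequality (equivalently Chebyshev applied to the nonnegative random variable $\norm{A_n}^{\ttt}$): conditionally on $B_n$,
\begin{displaymath}
	\PP\big(\norm{A_n}>\ell_n\eps_n \,\big|\, B_n\big)
	= \PP\big(\norm{A_n}^{\ttt}>(\ell_n\eps_n)^{\ttt} \,\big|\, B_n\big)
	\le \frac{\E\big[\norm{A_n}^{\ttt} \mid B_n\big]}{(\ell_n\eps_n)^{\ttt}}.
\end{displaymath}
The second step is to insert the hypothesis $\E[\norm{A_n}^{\ttt}\mid B_n]\le\eps_n^{\ttt}$ into the numerator, which gives the pointwise-in-$B_n$ bound $\PP(\norm{A_n}>\ell_n\eps_n\mid B_n)\le \eps_n^{\ttt}/(\ell_n\eps_n)^{\ttt}=\ell_n^{-\ttt}$. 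Crucially, this bound is deterministic and no longer depends on $B_n$.

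The third step is to remove the conditioning via the tower property: taking expectations over $B_n$,
\begin{displaymath}
	\PP\big(\norm{A_n}>\ell_n\eps_n\big)
	= \E\Big[\PP\big(\norm{A_n}>\ell_n\eps_n \,\big|\, B_n\big)\Big]
	\le \E\big[\ell_n^{-\ttt}\big]
	= \ell_n^{-\ttt}.
\end{displaymath}
Since $\ell_n\to\infty$ and $\ttt\ge 1$, the right-hand side $\ell_n^{-\ttt}$ tends to $0$ as $n\to\infty$, so $\PP(\norm{A_n}>\ell_n\eps_n)\to 0$. As $\{\ell_n\}$ was arbitrary subject to $\ell_n\to\infty$, this is exactly the claimed unconditional statement $\norm{A_n}=O_{\PP}(\eps_n)$, completing the proof.

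There is no serious obstacle here; the argument is a short and standard measure-theoretic manipulation. The only point requiring minor care is the validity of the conditional Markov inequality, which holds because $\norm{A_n}^{\ttt}$ is nonnegative and $\ttt\ge 1$ guarantees the conditional expectation in the hypothesis controls the relevant tail; one also notes the harmless detail that $\eps_n=0$ for some $n$ forces $\norm{A_n}=0$ almost surely there, so the bound holds trivially. The essential content is simply that a conditional moment bound that is uniform in the conditioning variable transfers to an unconditional tail bound after averaging.
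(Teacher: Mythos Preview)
Your proof is correct and is exactly the standard argument for this result: conditional Markov followed by the tower property, yielding the deterministic bound $\ell_n^{-\ttt}\to 0$. The paper itself does not give a proof of this lemma but simply cites \citet[Lemma 6.1]{Chernozhukov2018} and \citet[Lemma G.12]{Emmenegger2021}; your argument is precisely the one found in those references, so there is nothing further to compare.
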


\subsection{Representation of the Score Function $\psi$}

\begin{lemma}\label{lem:psiDerivativeBeta}
	Let $i\in\indset{\NN}$, $\theta\in\Theta$, and $\eta\in\TauN$. 
	Denote by $\Vi := \Zi\Sigma\Zi^T+\oneni$. Furthermore,  
	denote by $\psibeta$ the coordinates of $\psi$ that correspond to $\beta$, that is, 
	$\psibeta(\Si;\theta,\eta) = \nabla_{\beta}\li(\theta,\eta)$. 
	We have
	\begin{displaymath}
		\psibeta(\Si; \theta,\eta) = \frac{1}{\sigma^2} \big(\Xi - \mX(\Wi)\big)^T\Vi^{-1} 
			\Big( \Yi - \mY(\Wi)- \big(\Xi - \mX(\Wi)\big)\beta\Big).
	\end{displaymath}
\end{lemma}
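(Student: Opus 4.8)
The plan is to treat this as a direct gradient computation, since $\psibeta = \nabla_{\beta}\li$ by definition. The first step is to isolate the dependence on $\beta$ in the likelihood $\li(\theta,\eta)$. Among its three summands, both $-\frac{\nni}{2}\log(\sigma^2)$ and $-\frac{1}{2}\log(\det(\Vi))$ are free of $\beta$, since $\Vi = \Zi\Sigma\Zi^T+\oneni$ depends only on $\Sigma$; hence their $\beta$-gradients vanish and $\psibeta$ equals the $\beta$-gradient of the single quadratic term. To lighten the notation I would abbreviate the residual quantities $a_i := \Xi - \mX(\Wi)\in\R^{\nni\times d}$ and $c_i := \Yi - \mY(\Wi)\in\R^{\nni}$, so that the relevant term reads $-\frac{1}{2\sigma^2}(c_i - a_i\beta)^T\Vi^{-1}(c_i - a_i\beta)$.

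Next I would differentiate this quadratic form. Writing $r(\beta) := c_i - a_i\beta\in\R^{\nni}$, the term is $-\frac{1}{2\sigma^2}\,r(\beta)^T\Vi^{-1}r(\beta)$. Here it is essential to record that $\Vi^{-1}$ is symmetric, which holds because $\Sigma$ is symmetric (Assumption~\ref{assumpt:Theta}) and therefore $\Vi = \Zi\Sigma\Zi^T+\oneni$ is symmetric and positive definite. For a symmetric matrix $M$ the chain rule gives $\nabla_{\beta}\big(r(\beta)^T M\, r(\beta)\big) = 2\big(\partial r/\partial\beta\big)^T M\, r(\beta)$, and since $\partial r/\partial\beta = -a_i$, the gradient of the quadratic form equals $-2\,a_i^T\Vi^{-1}r(\beta)$. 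Combining the two steps, $\psibeta = -\frac{1}{2\sigma^2}\cdot\big(-2\,a_i^T\Vi^{-1}r(\beta)\big) = \frac{1}{\sigma^2}\,a_i^T\Vi^{-1}(c_i - a_i\beta)$, and re-substituting the definitions of $a_i$ and $c_i$ yields exactly the claimed expression.

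This computation carries no genuine obstacle; the only points demanding care are bookkeeping ones. I would double-check the matrix dimensions ($a_i$ is $\nni\times d$, so that $a_i\beta\in\R^{\nni}$ and $a_i^T\Vi^{-1}(\cdot)\in\R^{d}$, consistent with $\psibeta$ being a $d$-vector) and make sure the symmetry of $\Vi^{-1}$ is genuinely invoked, as without it the differentiation would instead produce the unsymmetrized combination $a_i^T(\Vi^{-1}+\Vi^{-T})r(\beta)$. For full rigor one may verify the quadratic-form identity coordinatewise in $\beta_1,\ldots,\beta_d$ rather than citing it, but this is routine and adds nothing conceptual.
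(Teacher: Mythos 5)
Your proof is correct and matches the paper's approach: the paper's own proof is the one-line remark that the statement follows from the definition of $\psi$, and your computation simply spells out that routine gradient calculation (isolating the $\beta$-dependent quadratic term and differentiating it using the symmetry of $\Vi^{-1}$). Nothing is missing, and the dimension and symmetry checks you note are exactly the right bookkeeping points.
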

\begin{proof}
The statement follows from the definition of $\psi$. 
\end{proof}

\begin{lemma}\label{lem:psiDerivativeSigma2}
	Let $i\in\indset{\NN}$, $\theta\in\Theta$, and $\eta\in\TauN$. 
	Denote by $\Vi := \Zi\Sigma\Zi^T+\oneni$. Furthermore,  
	denote by $\psisigma$ the coordinates of $\psi$ that correspond to $\sigma^2$, that is, 
	$\psisigma(\Si;\theta,\eta) = \nabla_{\sigma^2}\li(\theta,\eta)$. 
	We have
	\begin{displaymath}
	\begin{array}{rl}
		&\psisigma(\Si; \theta,\eta) \\
		=& -\frac{\nni}{2\sigma^2} + 
		\frac{1}{2(\sigma^2)^2} \Big( \Yi - \mY(\Wi)- \big(\Xi - \mX(\Wi)\big)\beta \Big)^T\Vi^{-1} 
			\Big( \Yi - \mY(\Wi)- \big(\Xi - \mX(\Wi)\big)\beta\Big).
	\end{array}
	\end{displaymath}
\end{lemma}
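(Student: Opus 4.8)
The plan is to compute the partial derivative of the group-$i$ likelihood $\li(\theta,\eta)$ with respect to the scalar coordinate $\sigma^2$, since by the definition of the score function $\psi(\Si;\theta,\eta):=\nabla_{\theta}\li(\theta,\eta)$ we have $\psisigma(\Si;\theta,\eta)=\nabla_{\sigma^2}\li(\theta,\eta)$. First I would recall the explicit form of the group-$i$ likelihood introduced in Section~\ref{sect:ModelAndDML}, namely
\begin{displaymath}
\li(\theta,\eta) = -\frac{\nni}{2}\log(\sigma^2) - \frac{1}{2}\log\big(\det(\Vi)\big) - \frac{1}{2\sigma^2}\, r^T\Vi^{-1} r,
\end{displaymath}
where I abbreviate the adjusted residual by $r := \Yi-\mY(\Wi)-\big(\Xi-\mX(\Wi)\big)\beta$. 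The key structural observation is that neither $\Vi=\Zi\Sigma\Zi^T+\oneni$ nor the residual $r$ depends on $\sigma^2$: the matrix $\Vi$ is a function of $\Sigma$ only, and $r$ is a function of $\beta$ and $\eta$ only. Hence, under the derivative with respect to $\sigma^2$, both $\Vi$ and $r$ act as constants.

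Then I would differentiate the three summands term by term, treating $\sigma^2$ as a single variable rather than $\sigma$. The middle term $-\frac{1}{2}\log\big(\det(\Vi)\big)$ has zero derivative by the observation above. The first term gives $-\frac{\nni}{2}\,\partial_{\sigma^2}\log(\sigma^2) = -\frac{\nni}{2\sigma^2}$. For the last term, since $r^T\Vi^{-1}r$ is constant in $\sigma^2$, I use $\partial_{\sigma^2}\big((\sigma^2)^{-1}\big)=-(\sigma^2)^{-2}$ to obtain $+\frac{1}{2(\sigma^2)^2}\,r^T\Vi^{-1}r$. Summing the three contributions and substituting back the explicit form of $r$ yields exactly the expression claimed in the statement.

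This computation is entirely routine, and there is no genuine obstacle beyond careful bookkeeping. The only two points that require attention are to differentiate with respect to $\sigma^2$ as one coordinate of $\theta=(\beta,\sigma^2,\Sigma)$—so that no spurious factor of $2\sigma$ is introduced by an inadvertent chain rule through $\sigma$—and to record explicitly that $\Vi$ is independent of $\sigma^2$, which is precisely what makes the $\log\det$ term vanish and keeps the quadratic form fixed under the derivative.
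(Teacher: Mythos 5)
Your computation is correct and is exactly what the paper intends: the paper's own proof is the one-line remark that the statement ``follows from the definition of $\psi$,'' and your term-by-term differentiation of $\li(\theta,\eta)$ with respect to the coordinate $\sigma^2$ (noting that $\Vi$ and the residual are constant in $\sigma^2$) is simply that definition spelled out. No difference in approach, and no gaps.
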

\begin{proof}
The statement follows from the definition of $\psi$. 
\end{proof}

\begin{lemma}\label{lem:psiDerivativeSigmatilkl}
	Let $i\in\indset{\NN}$, $\theta\in\Theta$,  $\eta\in\TauN$. 
	Denote by $\Vi := \Zi\Sigma\Zi^T+\oneni$. 
	Furthermore, let indices $\kappa, \iota\in\indset{q}$, and 
	denote by $\psiSigmakj$ the coordinates of $\psi$ that correspond to $\Sigmakj$, that is, 
	$\psiSigmakj(\Si;\theta,\eta) = \nabla_{\Sigmakj}\li(\theta,\eta)$. 
	We have
	\begin{displaymath}
	\begin{array}{rl}
		&\psiSigmakj(\Si; \theta,\eta) \\
		=& -\frac{1}{2}  \sum_{t,u=1}^{\nni}(\Vi^{-1})_{t,u}(\Zi)_{t,\kappa}(\Zi^T)_{\iota,u}\\
		&\quad
		+\frac{1}{2\sigma^2} \sum_{t,u=1}^{\nni}  \Big(\Yi - \mY(\Wi)- \big(\Xi - \mX(\Wi)\big)\beta\Big)_t \Big(\Yi - \mY(\Wi)- \big(\Xi - \mX(\Wi)\big)\beta\Big)_u \\
		&\quad\quad\quad\quad\quad\quad\quad\cdot \big(\Vi^{-1}(\Zi)_{\cdot,\kappa}(\Zi^T)_{\iota,\cdot}\Vi^{-1}\big)_{t,u}.
	\end{array}
	\end{displaymath}
\end{lemma}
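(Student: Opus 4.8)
The plan is to differentiate the log-likelihood $\li(\theta,\eta)$ entrywise with respect to $\Sigmakj$, exploiting the fact that the dependence on $\Sigma$ enters \emph{only} through $\Vi=\Zi\Sigma\Zi^T+\oneni$. This matrix appears in exactly two terms of $\li$: the log-determinant term $-\tfrac12\log\det(\Vi)$ and the quadratic form $-\tfrac{1}{2\sigma^2}\,r_i^T\Vi^{-1}r_i$, where I abbreviate the residual by $r_i:=\Yi-\mY(\Wi)-(\Xi-\mX(\Wi))\beta$. Every other term of $\li$ is free of $\Sigma$, so the chain rule immediately gives $\psiSigmakj(\Si;\theta,\eta)=-\tfrac12\,\partial_{\Sigmakj}\log\det(\Vi)-\tfrac{1}{2\sigma^2}\,r_i^T\big(\partial_{\Sigmakj}\Vi^{-1}\big)r_i$, and the task reduces to evaluating these two derivatives.

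First I would record the elementary derivative $\partial_{\Sigmakj}\Vi$. Since $\Vi$ is affine in $\Sigma$ and $\partial_{\Sigmakj}\Sigma=E_{\kappa,\iota}$, the single-entry matrix with a $1$ in position $(\kappa,\iota)$, one gets $\partial_{\Sigmakj}\Vi=\Zi E_{\kappa,\iota}\Zi^T$, whose $(t,u)$-entry equals $(\Zi)_{t,\kappa}(\Zi^T)_{\iota,u}$; equivalently this is the rank-one outer product $(\Zi)_{\cdot,\kappa}(\Zi^T)_{\iota,\cdot}$ appearing in the statement. With this in hand, the two remaining pieces are standard matrix-calculus identities.

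For the log-determinant term I would invoke Jacobi's formula $\partial_{\Sigmakj}\log\det(\Vi)=\trace\big(\Vi^{-1}\partial_{\Sigmakj}\Vi\big)$, expand the trace in coordinates as $\sum_{t,u}(\Vi^{-1})_{t,u}(\partial_{\Sigmakj}\Vi)_{u,t}$, and then use the symmetry of $\Vi^{-1}$ (inherited from that of $\Vi$) together with the relabeling $t\leftrightarrow u$ to bring it into the claimed form $\sum_{t,u}(\Vi^{-1})_{t,u}(\Zi)_{t,\kappa}(\Zi^T)_{\iota,u}$. For the quadratic form I would apply the inverse-differentiation identity $\partial_{\Sigmakj}\Vi^{-1}=-\Vi^{-1}\big(\partial_{\Sigmakj}\Vi\big)\Vi^{-1}$; the minus sign cancels the leading minus, leaving $+\tfrac{1}{2\sigma^2}\,r_i^T\,\Vi^{-1}(\Zi)_{\cdot,\kappa}(\Zi^T)_{\iota,\cdot}\Vi^{-1}\,r_i$, which written out coordinatewise as $\tfrac{1}{2\sigma^2}\sum_{t,u}(r_i)_t(r_i)_u\big(\Vi^{-1}(\Zi)_{\cdot,\kappa}(\Zi^T)_{\iota,\cdot}\Vi^{-1}\big)_{t,u}$ is exactly the second double sum in the claim. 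Combining the two pieces yields the stated expression.

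The computation is entirely routine; there is no analytic obstacle. The only step demanding care is the index bookkeeping in the log-determinant term, where the transpose built into $\trace(AB)=\sum_{t,u}A_{t,u}B_{u,t}$ must be reconciled with the indexing of the claim via the symmetry of $\Vi^{-1}$. I would therefore present that symmetry/relabeling argument explicitly and treat the rest as direct substitution.
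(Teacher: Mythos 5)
Your proposal is correct and follows essentially the same route as the paper's proof: both compute $\partial_{\Sigmakj}\Vi$ as the rank-one outer product $(\Zi)_{\cdot,\kappa}(\Zi^T)_{\iota,\cdot}$, apply the inverse-differentiation identity $\partial_{\Sigmakj}\Vi^{-1}=-\Vi^{-1}(\partial_{\Sigmakj}\Vi)\Vi^{-1}$ to the quadratic form, and differentiate the log-determinant via the standard identity (the paper cites Petersen--Pedersen entrywise, which is exactly your Jacobi's formula plus symmetry of $\Vi^{-1}$). The only cosmetic difference is that the paper routes the quadratic-form derivative through the entrywise rule $\partial_{D_{t,u}}(x^TDx)=x_tx_u$ before composing with the chain rule, while you differentiate the quadratic form directly; the content is identical.
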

\begin{proof}
Let a vector $x\in\R^{\nni}$. We have
\begin{displaymath}
	\begin{array}{rl}
	 & \frac{\partial}{\partial\Sigmakj} \Big(x^T\big(\Zi\Sigma\Zi^T+\oneni\big)^{-1}x\Big)\\
	 =& \sum_{t,u=1}^{\nni} \frac{\partial}{\partial(\Zi\Sigma\Zi^T+\oneni)^{-1}_{t,u}} \Big(x^T\big(\Zi\Sigma\Zi^T+\oneni\big)^{-1}x\Big) 
	 \cdot \frac{\partial(\Zi\Sigma\Zi^T+\oneni)^{-1}_{t,u}}{\partial\Sigmakj}.
	 \end{array}
\end{displaymath}
For some nonrandom matrix $D\in\R^{\nni\times\nni}$, we have
\begin{displaymath}
	\frac{\partial}{\partial D_{t,u}} x^TDx = \frac{\partial}{\partial D_{t,u}} \sum_{r,s=1}^{\nni} x_rD_{r,s}x_s = x_tx_u. 
\end{displaymath}
Furthermore, we have
\begin{displaymath}
	\frac{\partial}{\partial\Sigmakj}\big(\Zi\Sigma\Zi^T+\oneni\big)^{-1} 
	=-\big(\Zi\Sigma\Zi^T+\oneni\big)^{-1} \bigg(\frac{\partial}{\partial\Sigmakj}\big(\Zi\Sigma\Zi^T+\oneni\big)\bigg)  \big(\Zi\Sigma\Zi^T+\oneni\big)^{-1}
\end{displaymath}
by~\citet[Equation (59)]{Petersen-Pedersen2012}, and we have
\begin{displaymath}
	\bigg(\frac{\partial}{\partial\Sigmakj}\big(\Zi\Sigma\Zi^T+\oneni\big)\bigg)_{t,u}
	= \frac{\partial}{\partial\Sigmakj} \sum_{r,s=1}^{\nni} (\Zi)_{t,r}\Sigma_{r,s}(\Zi^T)_{s,u}=(\Zi)_{t,\kappa}(\Zi^T)_{\iota,u},
\end{displaymath}
and consequently
\begin{displaymath}
	\frac{\partial}{\partial\Sigmakj}\big(\Zi\Sigma\Zi^T+\oneni\big)
	= (\Zi)_{\cdot,\kappa}(\Zi^T)_{\iota,\cdot}, 
\end{displaymath}
which leads to
\begin{displaymath}
	\frac{\partial}{\partial\Sigmakj}\big(\Zi\Sigma\Zi^T+\oneni\big)^{-1} 
	=-\big(\Zi\Sigma\Zi^T+\oneni\big)^{-1} (\Zi)_{\cdot,\kappa}(\Zi^T)_{\iota,\cdot} \big(\Zi\Sigma\Zi^T+\oneni\big)^{-1}.
\end{displaymath}
Therefore, we have
\begin{equation}~\label{eq:diffSigmakl1}
	\begin{array}{rl}
	 & \frac{\partial}{\partial\Sigmakj} \Big(x^T\big(\Zi\Sigma\Zi^T+\oneni\big)^{-1}x\Big)\\
	 =&- \sum_{t,u=1}^{\nni} x_tx_u
	 \cdot \Big(\big(\Zi\Sigma\Zi^T+\oneni\big)^{-1} (\Zi)_{\cdot,\kappa}(\Zi^T)_{\iota,\cdot} \big(\Zi\Sigma\Zi^T+\oneni\big)^{-1}\Big)_{t,u}. 
	 \end{array}
\end{equation}

Moreover, we have
\begin{equation}~\label{eq:diffSigmakl2}
\begin{array}{rl}
	&\frac{\partial}{\partial\Sigmakj} \log\Big(\det \big(\Zi\Sigma\Zi^T+\oneni\big)\Big)\\
	=& \sum_{t,u=1}^{\nni} \frac{\partial}{\partial(\Zi\Sigma\Zi^T+\oneni)_{t,u}} \log\Big(\det \big(\Zi\Sigma\Zi^T+\oneni\big)\Big) \cdot \frac{\partial(\Zi\Sigma\Zi^T+\oneni)_{t,u}}{\partial\Sigmakj}\\
	=& \sum_{t,u=1}^{\nni} \Big(\big(\Zi\Sigma\Zi^T+\oneni\big)^{-1}\Big)_{t,u}(\Zi)_{t,\kappa}(\Zi^T)_{\iota,u}
\end{array}
\end{equation}
by~\citet[Equation (57)]{Petersen-Pedersen2012}. 
We replace $x$ in~\eqref{eq:diffSigmakl1} by $\Yi - \mY(\Wi)- \big(\Xi - \mX(\Wi)\big)\beta $ and combine~\eqref{eq:diffSigmakl1} and~\eqref{eq:diffSigmakl2} to conclude the proof. 
\end{proof}

\subsection{Consistency}

This section establishes that all $\hthetak$, $\kk\in\indset{\KK}$ are consistent. In particular, this implies that $\htheta$ is consistent. 
\\

Let $\PP\in\PcalN$.

\begin{lemma}\label{lem:consistency}
	Let $\kk\in\indset{\KK}$. 
	We have $\norm{\hthetak-\thetazero} \le \deltaN^2$ with $\PP$-probability $1-o(1)$. 
\end{lemma}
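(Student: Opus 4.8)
The plan is to establish consistency via a standard $M$-estimation argument that links the empirical score at $\hthetak$ back to the population score, using the identifiability structure encoded in Assumptions~\ref{assumpt:regularity4} and~\ref{assumpt:regularity5}. First I would invoke the approximate solution property (Assumption~\ref{assumpt:Theta2}) to control the empirical criterion at $\hthetak$: since $\thetazero\in\Theta$, we have $\normbig{\Enkbig{\psibig{\S;\hthetak,\hetaIkc}}}\le \normbig{\Enkbig{\psibig{\S;\thetazero,\hetaIkc}}}+\errorN$, and the right-hand side is governed by sampling fluctuations of the score evaluated at the truth plus the error of plugging in $\hetaIkc$ instead of $\etazero$. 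The aim is to show the right-hand side is $O_{\PP}(\deltaN^2)$, which is where Neyman orthogonality enters: the first-order sensitivity of the score to the nuisance estimate vanishes, so the nuisance contribution is a second-order term controlled by the product rate $\deltaN\NN^{-1/2}$ of Assumption~\ref{assumpt:DML}, while the empirical centering $\Enk{\psi}-\EP[\Enk{\psi}]$ contributes an $\normone{\Ik}^{-1/2+1/p}\log(\normone{\Ik})\lesssim\deltaN$ term via a moment/maximal-inequality bound exploiting Assumption~\ref{assumpt:regularity2} and $p\ge 8$.

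Next I would convert this control of the empirical score at $\hthetak$ into control of the population score at $\hthetak$. Writing $\EP[\Enk{\psi(\S;\hthetak,\etazero)}]$ and comparing it to the empirical quantity, I would again split off an empirical-process (centering) term and a nuisance-replacement term, each bounded as above, to conclude $\normBig{\EPBig{\Enkbig{\psi(\S;\hthetak,\etazero)}}}\lesssim\deltaN^2$ on the event $\EpsN$ (which holds with probability $1-\DeltaN\to 1$). At this point Assumption~\ref{assumpt:regularity5} does the heavy lifting: it gives $\min\{\norm{J_0(\hthetak-\thetazero)},\cone\}\le 2\normBig{\EPBig{\Enkbig{\psi(\S;\hthetak,\etazero)}}}\lesssim\deltaN^2$. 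Since $\deltaN\to 0$, for $\NN$ large the right side is below $\cone$, so the minimum is achieved by the first argument, giving $\norm{J_0(\hthetak-\thetazero)}\lesssim\deltaN^2$. Then the lower bound $\cone$ on the singular values of $J_0$ from Assumption~\ref{assumpt:regularity4} yields $\norm{\hthetak-\thetazero}\lesssim\deltaN^2$, and by tuning constants this gives $\norm{\hthetak-\thetazero}\le\deltaN^2$ with $\PP$-probability $1-o(1)$.

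The main obstacle I anticipate is bounding the nuisance-replacement term $\normBig{\EPBig{\Enk{\psi(\S;\thetazero,\hetaIkc)}}}$ (and its analogue at $\hthetak$) at the required $\deltaN^2$ rate. Because $\hetaIkc$ is estimated from $\SIkc$ and then evaluated on the independent fold $\Ik$, one conditions on $\SIkc$ so that $\hetaIkc$ is effectively deterministic; Lemma~\ref{lem:ChernozhukovLemma} then transfers the resulting conditional bound to an unconditional $O_{\PP}$ statement. The delicate part is a second-order Taylor expansion in $\eta$ around $\etazero$: the first-order Gateaux derivative must vanish by Neyman orthogonality (so that Assumption~\ref{assumpt:regularity1}, $\EP[\Enk{\psi(\S;\thetazero,\etazero)}]=\bo$, is the right anchor), leaving a quadratic remainder in $(\hmX-\mX^0)$ and $(\hmY-\mY^0)$. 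Controlling this remainder requires the product-rate bound of Assumption~\ref{assumpt:DML}, Cauchy--Schwarz together with Lemmata~\ref{lem:Emmenegger2021-Lemma-G-7} and~\ref{lem:Emmenegger2021-Lemma-G-10}, and the boundedness of $\Vi^{-1}$ and $\Zi$ from Assumptions~\ref{assumpt:Theta3} and~\ref{assumpt:regularity6}; the structure of $\psibeta$ in Lemma~\ref{lem:psiDerivativeBeta}, which is bilinear in the residuals, is what makes the orthogonality and the quadratic-remainder bookkeeping tractable. The dependence within groups, which distinguishes this setting from~\citet{Chernozhukov2018}, is absorbed by treating each group's contribution as a single unit in these $L^p(\PP)$ norms, so that the independence across groups (Assumption~\ref{assumpt:D6}) still drives the concentration.
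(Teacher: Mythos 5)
Your skeleton is the same as the paper's (Assumption~\ref{assumpt:Theta2} $\to$ bound on the population score at $\hthetak$ $\to$ Assumptions~\ref{assumpt:regularity5} and~\ref{assumpt:regularity4}), but there is a genuine quantitative gap in your bound for the empirical centering term, and it breaks exactly the rate the lemma asserts. You bound $\sup_{\theta\in\Theta}\norm{\Enk{\psi(\S;\theta,\hetaIkc)}-\EP[\Enk{\psi(\S;\theta,\hetaIkc)}]}$ by $\normone{\Ik}^{-\frac{1}{2}+\frac{1}{p}}\log(\normone{\Ik})\lesssim\deltaN$. That bound belongs to the $\sqrt{\NNtot}$-scaled empirical process $\Gnk$ (it is what the paper uses in Lemma~\ref{lem:empProcPsi}, where the target is a $\deltaN$-bound on a $\sqrt{\NNtot}$-scaled quantity). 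The unscaled centering term needed here carries the additional factor $\sqrt{\normone{\Ik}}/\nnktot\asymp\normone{\Ik}^{-1/2}$, so the correct conclusion --- which is the content of the paper's Lemma~\ref{lem:boundIcaltwo}, obtained from the maximal inequality of Lemma~\ref{lem:Chernozhukov6-2} applied conditionally on $\SIkc$ with envelope and $\tau$ of constant order, since the supremum runs over the bounded set $\Theta$ with the nuisance frozen at $\hetaIkc$ --- is $O_{\PP}(\NN^{-1/2})$, and $\NN^{-1/2}\le\deltaN^2$ by assumption. With your $\deltaN$-bound, Assumptions~\ref{assumpt:regularity5} and~\ref{assumpt:regularity4} only give $\norm{\hthetak-\thetazero}\lesssim\deltaN$, strictly weaker than the claimed $\norm{\hthetak-\thetazero}\le\deltaN^2$; the $\deltaN^2$-neighborhood is precisely what is used downstream (e.g.\ the function class in Lemma~\ref{lem:empProcPsi} is indexed by $\norm{\Sigma-\Sigmazero}\le\deltaN^2$).

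A secondary remark on the nuisance-replacement term: your route via Neyman orthogonality plus the product rate (giving $\lesssim\deltaN\NN^{-1/2}\le\deltaN^3$) differs from the paper's and can be made to work, but with two caveats. First, orthogonality is a statement at $(\thetazero,\etazero)$, while your decomposition also needs the replacement bound at the random point $\theta=\hthetak$; one must verify that the first-order terms vanish uniformly over $\theta\in\Theta$. This does hold here, because $\EPbig{\Yi-\mY^0(\Wi)-(\Xi-\mX^0(\Wi))\beta\,\big|\,\Wi}=\bo$ for every $\beta$, so the terms linear in $\eta-\etazero$ have mean zero at every $\theta$, but it has to be argued. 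Second, the paper's Lemma~\ref{lem:boundIcalone} avoids orthogonality altogether: it bounds $\sup_{\theta\in\Theta,\,\eta\in\TauN}\norm{\EP[\Enk{\psi(\S;\theta,\eta)}]-\EP[\Enk{\psi(\S;\theta,\etazero)}]}$ crudely in $L^1$ via H\"older's inequality, using the strong rate $\normP{\etazero-\eta}{2}\le\deltaN^8$ from Assumption~\ref{assumpt:DML1}; Neyman orthogonality is reserved for the $\sqrt{\NNtot}$-scale analysis in the proof of Theorem~\ref{thm:asymptoticGaussian}.
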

\begin{proof}[Proof of Lemma~\ref{lem:consistency}]
We have
\begin{equation}\label{eq:lemConsistentEq1}
	\begin{array}{rl}
		&\EPBig{\Enkbig{\psibig{\S; \hthetak,\etazero}}} \\
		=& \EPBig{\Enkbig{\psibig{\S; \hthetak,\etazero}} - \Enkbig{\psibig{\S; \hthetak,\hetaIkc}}} \\
		&\quad
		+ \EPBig{\Enkbig{\psibig{\S; \hthetak,\hetaIkc}}} - \Enkbig{\psibig{\S; \hthetak,\hetaIkc}} + \Enkbig{\psibig{\S; \hthetak,\hetaIkc}}. 
	\end{array}
\end{equation}
Due to the approximate solution property in Assumption~\ref{assumpt:Theta2}, the identifiability condition in Assumption~\ref{assumpt:regularity1}, and the triangle inequality, we have
\begin{equation}\label{eq:lemConsistentEq2}
	\begin{array}{rl}
		&\normbig{\Enkbig{\psibig{\S; \hthetak,\hetaIkc}}} \\
		\le & \normbig{\Enkbig{\psibig{\S; \thetazero,\hetaIkc}}} + \errorN\\
		\le &  \normBig{\Enkbig{\psibig{\S;\thetazero,\hetaIkc}} - \EPBig{\Enkbig{\psibig{\S;\thetazero,\hetaIkc}}}}\\
		&\quad + \normBig{\EPBig{\Enkbig{\psibig{\S;\thetazero,\hetaIkc}}} - \EPBig{\Enkbig{\psibig{\S;\thetazero,\etazero}}}} + \errorN.
	\end{array}
\end{equation}
Let us introduce
\begin{equation}\label{eq:Icalone}
	\Icalone := \sup_{\substack{\theta\in\Theta, \\ \eta\in\TauN}} 
	\normBig{\EPBig{\Enkbig{\psi(\S;\theta,\eta)}} - \EPBig{\Enkbig{\psibig{\S;\theta,\etazero}}}}
\end{equation}
and 
\begin{equation}\label{eq:Icaltwo}
	\Icaltwo := \sup_{\theta\in\Theta} \normBig{\Enkbig{\psibig{\S;\theta,\hetaIkc}} - \EPBig{\Enkbig{\psibig{\S;\theta,\hetaIkc}}}}.
\end{equation}
Due to~\eqref{eq:lemConsistentEq1} and~\eqref{eq:lemConsistentEq2}, we infer, with $\PP$-probability $1-o(1)$, 
\begin{displaymath}
	\normBig{\EPBig{\Enkbig{\psibig{\S; \hthetak,\etazero}}} }
	\le \errorN + 2\Icalone + 2\Icaltwo
\end{displaymath}
because the event $\EpsN$ that $\hetaIkc$ belongs to $\TauN$ holds with $\PP$-probability $1-o(1)$ by Assumption~\ref{assumpt:DML2}. 
By Lemma~\ref{lem:boundIcalone}, we have $\Icalone\lesssim\deltaN^2$. 
By Lemma~\ref{lem:boundIcaltwo}, we have $\Icaltwo\lesssim\NN^{-\frac{1}{2}}$ with $\PP$-probability $1-o(1)$.
Recall that we have $\deltaN^2\ge\NN^{-\frac{1}{2}}$ and $\errorN\lesssim\deltaN^2$. 
With $\PP$-probability $1-o(1)$, we therefore have
\begin{displaymath}
	\min\{\norm{J_0(\hthetak-\thetazero)}, \cone\} \le 2\normBig{\EPBig{\Enkbig{\psi(\S;\hthetak,\etazero)}}} \lesssim\deltaN^2
\end{displaymath}
due to Assumption~\ref{assumpt:regularity5}. We infer our claim because the singular values of $\Jzero$ are bounded away from $0$ by Assumption~\ref{assumpt:regularity4}. 
\end{proof} 

\begin{lemma}\label{lem:boundIcalone}
	Consider
	\begin{displaymath}
	\Icalone = \sup_{\substack{\theta\in\Theta, \\ \eta\in\TauN}} 
	\normBig{\EPBig{\Enkbig{\psi(\S;\theta,\eta)}} - \EPBig{\Enkbig{\psibig{\S;\theta,\etazero}}}}
\end{displaymath}
	as in~\eqref{eq:Icalone}. 
	We have $\Icalone\lesssim\deltaN^2$. 
\end{lemma}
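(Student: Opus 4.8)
The plan is to exploit that the true nuisance $\etazero = (\mX^0, \mY^0)$ is the pair of conditional expectations, so that the adjusted variables are centered given $\Wi$: writing $\RXi = \Xi - \mX^0(\Wi)$ and $\RYi = \Yi - \mY^0(\Wi)$, one has $\EP[\RXi\mid\Wi] = \bo$ and $\EP[\RYi\mid\Wi] = \bo$. I would treat the three blocks of $\psi$ separately --- the $\beta$-block from Lemma~\ref{lem:psiDerivativeBeta}, the $\sigma^2$-block from Lemma~\ref{lem:psiDerivativeSigma2}, and the $\Sigmakj$-block from Lemma~\ref{lem:psiDerivativeSigmatilkl} --- bound the norm of each block's contribution, and combine by the triangle inequality over $i\in\Ik$, using that $\Enk$ carries the weight $1/\nnktot$ and that $\normone{\Ik}/\nnktot = O(1)$.

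For the $\beta$-block I would substitute $\RXieta = \RXi + (\mX^0-\mX)(\Wi)$ and $\RYieta - \RXieta\beta = (\RYi - \RXi\beta) + \delta_i$ into Lemma~\ref{lem:psiDerivativeBeta}, where $\delta_i := (\mY^0-\mY)(\Wi) - (\mX^0-\mX)(\Wi)\beta$ is a deterministic function of $\Wi$. Expanding the difference $\psibeta(\Si;\theta,\eta) - \psibeta(\Si;\theta,\etazero)$ produces one genuinely bilinear term plus two cross terms of the schematic form $\RXi^T\Vi^{-1}\delta_i$ and $(\mX^0-\mX)(\Wi)^T\Vi^{-1}(\RYi-\RXi\beta)$. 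Because $\Vi = \Zi\Sigma\Zi^T + \oneni$ is deterministic (the $\Zi$ are fixed) and $\delta_i$, $(\mX^0-\mX)(\Wi)$ are $\Wi$-measurable, the tower property together with $\EP[\RXi\mid\Wi] = \bo$ and $\EP[\RYi - \RXi\beta\mid\Wi] = \bo$ annihilates both cross terms in expectation. Only $\EP[(\mX^0-\mX)(\Wi)^T\Vi^{-1}\delta_i]/\sigma^2$ survives, which I would bound by Cauchy--Schwarz, $\norm{\Vi^{-1}}\le\CboundV$ (Assumption~\ref{assumpt:Theta3}), and boundedness of $\Theta$, arriving at a constant multiple of $\normP{\mX^0(W)-\mX(W)}{2}\big(\normP{\mY^0(W)-\mY(W)}{2} + \normP{\mX^0(W)-\mX(W)}{2}\big)$; a bounded factor $\nmax$ appears when passing from the single-row norms of Assumption~\ref{assumpt:DML1} to the $\nni\le\nmax$ rows of $\Wi$. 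This product is $\le\deltaN\NN^{-1/2}$ by Assumption~\ref{assumpt:DML1}.

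The $\sigma^2$- and $\Sigmakj$-blocks are quadratic in $w_i := \RYieta - \RXieta\beta = (\RYi - \RXi\beta) + \delta_i$: each has a deterministic part (the $-\nni/(2\sigma^2)$ term, respectively the trace/log-determinant term) that does not depend on $\eta$ and hence cancels in the difference, leaving a form $w_i^T M_i w_i - (\RYi-\RXi\beta)^T M_i (\RYi-\RXi\beta) = 2(\RYi-\RXi\beta)^T M_i\delta_i + \delta_i^T M_i\delta_i$ with $M_i$ a deterministic matrix built from $\Vi^{-1}$ and $\Zi$. The cross term again vanishes in expectation by $\EP[\RYi-\RXi\beta\mid\Wi] = \bo$, and the surviving quadratic term is bounded by $\norm{M_i}\,\normP{\delta_i}{2}^2$. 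Using $\normP{\delta_i}{2}^2\lesssim\normP{\mX^0(W)-\mX(W)}{2}^2 + \normP{\mY^0(W)-\mY(W)}{2}^2 \le 2\deltaN^{16}$ from the bound $\normP{\etazero-\eta}{2}\le\deltaN^8$ in Assumption~\ref{assumpt:DML1}, these two blocks are $\lesssim\deltaN^{16}$, which is negligible.

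Assembling the three blocks and taking the supremum over $\theta\in\Theta$ and $\eta\in\TauN$ uniformly (all constants $\CboundV$, $\norm{\beta}$, $\sigma^{-2}$ being controlled on the bounded set $\Theta$), the binding term is the $\beta$-block, so $\Icalone\lesssim\deltaN\NN^{-1/2} + \deltaN^{16}$. Since $\deltaN^2\ge\NN^{-1/2}$ and $\deltaN\to 0$, one has $\deltaN\NN^{-1/2}\le\deltaN^3\le\deltaN^2$ for large $\NN$, and $\deltaN^{16}\le\deltaN^2$, so $\Icalone\lesssim\deltaN^2$ as claimed. I expect the main obstacle to be the careful verification that every cross term really vanishes: this is where the Neyman-orthogonal structure enters, and it hinges precisely on $\Vi$ and $\Zi$ being $\Wi$-measurable (indeed deterministic), so that conditioning on $\Wi$ kills $\RXi$ and $\RYi - \RXi\beta$, and on tracking the bilinear term of the $\beta$-block so that it matches the \emph{product} form of Assumption~\ref{assumpt:DML1} rather than a sum of individual errors.
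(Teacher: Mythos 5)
Your proof is correct, but it takes a genuinely different route from the paper's. Both arguments start from the same explicit decomposition of the score differences into the three blocks (the paper's displays~\eqref{eq:boundIcalone:eq1}--\eqref{eq:boundIcalone:eq3} are exactly your $\beta$-, $\sigma^2$-, and $\Sigmakj$-block expansions in terms of $\RXi$, $\RYi-\RXi\beta$, and $\delta_i$). The difference is in how the terms are then controlled. The paper does not use any conditioning or cancellation at all: it bounds \emph{every} term --- including the cross terms that are linear in $\etazero-\eta$ --- directly in $L^1$ by H\"older's inequality, leaning on the strong individual-rate condition $\normP{\etazero-\eta}{2}\le\deltaN^8\le\deltaN^2$ of Assumption~\ref{assumpt:DML1} together with the boundedness of $\Theta$, $\Zi$, and $\Vi^{-1}$; since each term carries at least one factor of $\etazero-\eta$, everything is $O(\deltaN^8)=O(\deltaN^2)$ and the claim follows with no appeal to orthogonality. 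You instead annihilate the cross terms exactly via the tower property (conditional centering of $\RXi$ and $\RYi-\RXi\beta$ given $\Wi$, using that $\Zi$ and hence $\Vi$ are deterministic), and only then bound the surviving second-order terms: the bilinear $\beta$-block term by the \emph{product} rate $\deltaN\NN^{-1/2}$ and the pure quadratic terms by $\deltaN^{16}$. What your approach buys is robustness: it would go through under much weaker individual rates (essentially the product condition plus any individual rate sufficient for the quadratic terms, e.g.\ $\normP{\etazero-\eta}{2}\lesssim\deltaN$), because it imports the Neyman-orthogonality mechanism that the paper only deploys later, in Lemmas~\ref{lem:empiricalSumsSqrtN} and~\ref{lem:rateDoubleTaylor}. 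What the paper's cruder argument buys is brevity: given that $\deltaN^8$ is assumed anyway, no cancellation needs to be verified, and the sup over $\theta\in\Theta$, $\eta\in\TauN$ is immediate from the uniform constants. One small remark: the paper's proof states the bound as $\deltaN$ where its own listed facts actually deliver $\deltaN^8\le\deltaN^2$; your accounting ($\deltaN\NN^{-1/2}+\deltaN^{16}\lesssim\deltaN^2$) is the more careful of the two.
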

\begin{proof}
	Let indices $i\in\indset{\NN}$ and $\kappa,\iota\in\indset{q}$, let $\theta\in\Theta$, and let $\eta\in\TauN$. 
	 Furthermore,
	let $\psibeta(\Si;\theta,\eta) := \nabla_{\beta}\li(\theta,\eta)$, 
	let $\psisigma(\Si;\theta,\eta) := \nabla_{\sigma^2}\li(\theta,\eta)$, and  
	let $\psiSigmakj(\Si;\theta,\eta) := \nabla_{\Sigmakj}\li(\theta,\eta)$. 
	Denote by $\Vi := \Zi\Sigma\Zi^T+\one_{\nni}$.
	We have
	\begin{equation}\label{eq:boundIcalone:eq1}
		\begin{array}{rl}
			& \psibeta(\Si; \theta,\eta) - \psibetabig{\Si;\theta,\etazero}\\
			=& \frac{1}{\sigma^2}\big(\Xi - \mX^0(\Wi)\big)^T\Vi^{-1} \Big( \mY^0(\Wi) - \mY(\Wi) - \big(\mX^0(\Wi)-\mX(\Wi)\big)\beta \Big)\\
			&\quad + \frac{1}{\sigma^2} \big(\mX^0(\Wi)-\mX(\Wi)\big)^T\Vi^{-1} \Big( \Yi - \mY^0(\Wi)  - \big(\Xi - \mX^0(\Wi)\big)\beta \Big)\\
			&\quad + \frac{1}{\sigma^2} \big(\mX^0(\Wi)-\mX(\Wi)\big)^T\Vi^{-1}\Big( \mY^0(\Wi) - \mY(\Wi) - \big(\mX^0(\Wi)-\mX(\Wi)\big)\beta \Big),
		\end{array}
	\end{equation}
we have
\begin{equation}\label{eq:boundIcalone:eq2}
	\begin{array}{rl}
		& \psisigma(\Si; \theta,\eta) - \psisigmabig{\Si;\theta,\etazero}\\
		=& 2 \cdot \frac{1}{2(\sigma^2)^2} \Big( \Yi - \mY^0(\Wi)  - \big(\Xi - \mX^0(\Wi)\big)\beta \Big)^T\Vi^{-1}\\
		&\quad\quad\cdot\Big( \mY^0(\Wi) - \mY(\Wi) - \big(\mX^0(\Wi)-\mX(\Wi)\big)\beta \Big)\\
		&\quad + \frac{1}{2(\sigma^2)^2} \Big( \mY^0(\Wi) - \mY(\Wi) - \big(\mX^0(\Wi)-\mX(\Wi)\big)\beta \Big)^T\Vi^{-1}\\
		&\quad\quad\cdot\Big( \mY^0(\Wi) - \mY(\Wi) - \big(\mX^0(\Wi)-\mX(\Wi)\big)\beta \Big), 
	\end{array}
\end{equation}
and we have
\begin{equation}\label{eq:boundIcalone:eq3}
	\begin{array}{rl}
		& \psiSigmakj(\Si; \theta,\eta) - \psiSigmakjbig{\Si;\theta,\etazero}\\
		=& \frac{1}{2\sigma^2} \sum_{t,u=1}^{\nni} 
		\big(\Vi^{-1}(\Zi)_{\cdot,\kappa}(\Zi^T)_{\iota,\cdot}\Vi^{-1} \big)_{t,u}\\
		&\quad\cdot\bigg( \Big(\mY^0(\Wi) - \mY(\Wi) - \big(\mX^0(\Wi)-\mX(\Wi)\big)\beta \Big)_t \Big( \Yi - \mY^0(\Wi)  - \big(\Xi - \mX^0(\Wi)\big)\beta \Big)_u\\
		&\quad \quad+ \Big( \Yi - \mY^0(\Wi)  - \big(\Xi - \mX^0(\Wi)\big)\beta \Big)_t\Big(\mY^0(\Wi) - \mY(\Wi) - \big(\mX^0(\Wi)-\mX(\Wi)\big)\beta \Big)_u\\
		&\quad\quad +  \Big(\mY^0(\Wi) - \mY(\Wi) - \big(\mX^0(\Wi)-\mX(\Wi)\big)\beta \Big)_t\\
		&\quad\quad\quad\quad\cdot\Big(\mY^0(\Wi) - \mY(\Wi) - \big(\mX^0(\Wi)-\mX(\Wi)\big)\beta \Big)_u
		\bigg).
	\end{array}
\end{equation}
Up to constants depending on the diameter of $\Theta$, the $L^1$-norms of all terms~\eqref{eq:boundIcalone:eq1}--\eqref{eq:boundIcalone:eq3} are bounded by $\deltaN$ due to H\"{o}lder's inequality
because we have $\nni\le\nmax$, $\normP{\Xi - \mX^0(\Wi)}{2}\le\normP{\Xi}{2}$ by Lemma~\ref{lem:EmmeneggerG10} and similarly for $\Yi$, 
$\normP{\Xi}{2}$ and $\normP{\Yi}{2}$ are bounded by Assumption~\ref{assumpt:regularity2} and H\"{o}lder's inequality, 
$\Zi$ is bounded by Assumption~\ref{assumpt:regularity6}, 
$\Vi^{-1}=(\Zi\Sigma\Zi^T + \oneni)^{-1}$ is bounded by Assumption~\ref{assumpt:Theta3},
$\normP{\etazero-\eta}{2}\le\deltaN^8\le\deltaN^2$ holds by Assumption~\ref{assumpt:DML1} for $\NN$ large enough, 
and $\Theta$ is bounded by Assumption~\ref{assumpt:Theta1}. 
Therefore, we infer the claim. 
\end{proof}

\begin{lemma}\label{lem:exampleVaart}
	Let $\eta\in\TauN$, and consider the function class $\Fcaleta:= \{\psi_j(\cdot; \theta,\eta)\colon j\in\indset{d+1+q^2}, \theta\in\Theta\}$.
	Let $i\in\indset{\NN}$ and $\theta_1,\theta_2\in\Theta$. Then, there exists a function $h\in L^2$ such that for all $f_{\theta_1}, f_{\theta_2}\in\Fcaleta$, we have
	\begin{displaymath}
		\normone{f_{\theta_1}(\cdot) - f_{\theta_2}(\cdot)} \le h(\cdot) \norm{\theta_1-\theta_2}.
	\end{displaymath}
\end{lemma}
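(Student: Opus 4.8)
The plan is to prove the Lipschitz bound one coordinate at a time through the mean value inequality in $\theta$, and then to exhibit a single square-integrable envelope valid for every coordinate. First I would note that $\Theta$ may be taken convex (it is, up to the boundedness stipulation of Assumption~\ref{assumpt:Theta1}, the intersection of the positive-definite cone in the $\Sigma$-block and the half-space $\sigma>0$), so that for endpoints $\theta_1,\theta_2\in\Theta$ the whole segment $[\theta_1,\theta_2]$ stays admissible and the uniform bounds collected below apply along it. Each coordinate $\psi_j$, $j\in\indset{d+1+q^2}$, is smooth in $\theta$ on this segment (rational in $\sigma^2\neq 0$, smooth in $\Sigma$ through the matrix inverse on the positive-definite cone, polynomial in $\beta$), so the fundamental theorem of calculus gives
\begin{displaymath}
\normone{\psi_j(\Si;\theta_1,\eta)-\psi_j(\Si;\theta_2,\eta)} \le \sup_{\theta\in\Theta}\norm{\nabla_\theta\psi_j(\Si;\theta,\eta)}\,\norm{\theta_1-\theta_2}.
\end{displaymath}
It then suffices to dominate $\sup_{\theta\in\Theta}\norm{\nabla_\theta\psi_j(\Si;\theta,\eta)}$ by a $\theta$-free function $h\in L^2(\PP)$.

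Second, I would differentiate the explicit representations of $\psibeta$, $\psisigma$, and $\psiSigmakj$ from Lemmas~\ref{lem:psiDerivativeBeta}, \ref{lem:psiDerivativeSigma2}, and \ref{lem:psiDerivativeSigmatilkl} once more with respect to each of the blocks $\beta$, $\sigma^2$, and $\Sigma$. Writing the $\eta$-adjusted residuals as $\RXieta=\Xi-\mX(\Wi)$ and $\RYieta=\Yi-\mY(\Wi)$, every entry of every second derivative is a finite sum of terms, each a product of a scalar power of $1/\sigma^2$, a bounded number of matrix factors built from $\Vi^{-1}$ and $\Zi$ (the $\Sigma$-derivatives producing the nested factors $\Vi^{-1}(\Zi)_{\cdot,\kappa}(\Zi^T)_{\iota,\cdot}\Vi^{-1}$ already appearing in Lemma~\ref{lem:psiDerivativeSigmatilkl}), and at most two residual factors drawn from $\RXieta$ and $\RYieta$, together with an at most linear dependence on $\beta$. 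The key structural observation is that differentiation in $\theta$ never raises the residual degree above two.

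Third, I would bound the $\theta$-dependent factors uniformly over $\Theta$: there $\sigma^2$ is bounded away from $0$ and $\infty$ and $\norm{\beta},\norm{\Sigma}$ are bounded by Assumption~\ref{assumpt:Theta1}, $\norm{\Vi^{-1}}\le\CboundV$ by Assumption~\ref{assumpt:Theta3}, $\norm{\Zi}\le\CboundZi$ by Assumption~\ref{assumpt:regularity6}, and $\nni\le\nmax$ controls the finite sums. Absorbing cross terms $\norm{\RXieta}\norm{\RYieta}$ by the arithmetic-geometric mean inequality, this produces
\begin{displaymath}
\sup_{\theta\in\Theta}\norm{\nabla_\theta\psi_j(\Si;\theta,\eta)} \le C\big(1+\norm{\RXieta}^2+\norm{\RYieta}^2\big) =: h(\Si),
\end{displaymath}
where $C$ depends only on the diameter of $\Theta$, on $\CboundV$, $\CboundZi$, $\nmax$, and the lower bound on $\sigma^2$, and not on $j$, $\theta$, or the data; this $h$ is the common envelope for all coordinates.

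Finally, I would check $h\in L^2(\PP)$. Since $h$ is quadratic in the residual norms, $h^2$ is quartic, so it is enough that $\RXieta$ and $\RYieta$ possess finite fourth moments. Assumption~\ref{assumpt:regularity2} gives $\normP{\Xi}{p}+\normP{\Yi}{p}\le\CpnormRV$ with $p\ge 8$; Lemma~\ref{lem:Emmenegger2021-Lemma-G-7} bounds $\normP{\Xi-\mX^0(\Wi)}{p}\le 2\normP{\Xi}{p}$ and likewise for the response; and Assumption~\ref{assumpt:DML1} gives $\normP{\etazero-\eta}{p}\le\CpnormEta$. The triangle inequality then bounds $\normP{\RXieta}{p}$ and $\normP{\RYieta}{p}$ by finite constants, and since $p\ge 8\ge 4$ we conclude $\normP{h}{2}<\infty$. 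The step I expect to be the main obstacle is the second one: keeping track of the $\Sigma$-derivatives of $\psiSigmakj$, where differentiating $\Vi^{-1}$ a second time spawns several nested $\Vi^{-1}(\partial_\Sigma\Vi)\Vi^{-1}$ patterns, and verifying that each still carries only bounded coefficients and residual degree at most two. This follows from the same matrix-inverse identity used in the proof of Lemma~\ref{lem:psiDerivativeSigmatilkl}, but it is where the bookkeeping must be carried out carefully.
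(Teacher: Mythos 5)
Your proof is correct, but it takes a genuinely different route from the paper's. The paper never differentiates in $\theta$: it takes the explicit formulas for $\psibeta$, $\psisigma$, $\psiSigmakj$ from Lemmas~\ref{lem:psiDerivativeBeta}--\ref{lem:psiDerivativeSigmatilkl} and computes the differences $\psi_j(\Si;\theta_1,\eta)-\psi_j(\Si;\theta_2,\eta)$ purely algebraically, using the exact identities $\Vione^{-1}-\Vitwo^{-1}=\Vione^{-1}(\Vitwo-\Vione)\Vitwo^{-1}$ and $(\sigma_1^2)^{-1}-(\sigma_2^2)^{-1}=(\sigma_1^2)^{-1}(\sigma_2^2-\sigma_1^2)(\sigma_2^2)^{-1}$ to pull out of every term a factor linear in $\theta_1-\theta_2$ (note $\norm{\Vitwo-\Vione}\le\CboundZi^2\norm{\Sigma_2-\Sigma_1}$); what multiplies these factors is at most quadratic in the residuals with bounded matrix coefficients, which gives the same $L^2$ envelope you obtain. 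Your mean-value argument replaces this difference algebra with one further differentiation (the Hessian of $\li$, which the paper never writes down) plus a uniform gradient bound, and your final $L^2$ check via fourth moments ($p\ge 8$, Lemma~\ref{lem:Emmenegger2021-Lemma-G-7}, Assumption~\ref{assumpt:DML1}) is in fact more explicit than the paper's closing sentence. The trade-off is that the paper's identities hold for \emph{arbitrary} pairs $\theta_1,\theta_2\in\Theta$, with no path between them, whereas your argument needs the segment $[\theta_1,\theta_2]$ to stay inside the region where the uniform bounds hold; your justification---that $\Theta$ is convex---is not actually granted by the paper, since Assumption~\ref{assumpt:Theta} defines a convex cone but Assumption~\ref{assumpt:Theta1} then restricts to an unspecified bounded subset of it. This is a one-line repair rather than a genuine gap: take the supremum over the convex hull of $\Theta$ instead, noting that the hull is bounded with the same diameter, that convex combinations of symmetric positive-definite matrices remain positive definite so that $\norm{(\Zi\Sigma\Zi^T+\oneni)^{-1}}\le 1$ holds throughout the hull, and that $\sigma^2$ on the hull lies between its extremes on $\Theta$ (the implicit lower bound on $\sigma^2$ over $\Theta$ being needed by the paper's proof as well, to control $(\sigma_1^2\sigma_2^2)^{-1}$). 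With that patch, your argument is complete.
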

\begin{proof}
Let $i\in\indset{\NN}$, and consider the grouped observations $\Si$ of group $i$. Independently of $i$, the number of observations $\nni$ from this group is bounded by $\nmax<\infty$. 

Let $\eta=(\mX,\mY)\in\TauN$, and let $\theta_1,\theta_2\in\Theta$. 
Denote by $\Vione := \Zi\Sigma_1\Zi^T+\oneni$, and denote by $\Vitwo := \Zi\Sigma_2\Zi^T+\oneni$. 
Moreover, denote by $\RXieta := \Xi-\mX(\Wi)$ and by $\RYieta := \Yi-\mY(\Wi)$.
Furthermore, consider indices $\kappa,\iota\in\indset{q}$, and 
	let $\psibeta(\Si;\theta,\eta) := \nabla_{\beta}\li(\theta,\eta)$, 
	let $\psisigma(\Si;\theta,\eta) := \nabla_{\sigma^2}\li(\theta,\eta)$, and  
	let $\psiSigmakj(\Si;\theta,\eta) := \nabla_{\Sigmakj}\li(\theta,\eta)$. 
Observe that
\begin{equation}\label{eq:rewriteDiffMatInv}
	\Vione^{-1} - \Vitwo^{-1} = \Vione^{-1}\big(\Vitwo-\Vione\big)\Vitwo^{-1}
\end{equation}
and
\begin{equation}\label{eq:rewriteDiffMatInv-2}
	\frac{1}{\sigma_1^2} - \frac{1}{\sigma_2^2} = \big( \sigma_1^2 \big)^{-1} \big(\sigma_2^2-\sigma_1^2\big) \big( \sigma_2^2 \big)^{-1}
\end{equation}
hold. Thus, we have 
\begin{displaymath}
		\begin{array}{rl}
			& \psibeta(\Si; \theta_1,\eta) - \psibetabig{\Si;\theta_1,\eta}\\
			=&\Big(\frac{1}{\sigma_1^2}-\frac{1}{\sigma_2^2} \Big) \RXieta^T\Vione^{-1}\big(\RYieta-\RXieta^T\beta_1\big)
			+\frac{1}{\sigma_2^2}\RXieta^T \Vione^{-1}\big(\Vitwo-\Vione\big)\Vitwo^{-1}\RYieta\\
			&\quad -\frac{1}{\sigma_2^2} \RXieta^T\Big( \Vione^{-1}\RXieta(\beta_1-\beta_2) + \Vione^{-1} \big(\Vitwo-\Vione\big)\Vitwo^{-1}\RXieta\beta_2 \Big), 
		\end{array}
	\end{displaymath}
and 
\begin{displaymath}
	\begin{array}{rl}
		& \psisigma(\Si; \theta_1,\eta) - \psisigmabig{\Si;\theta_2, \eta}\\
		=& \frac{\nni}{2}\Big(\frac{1}{\sigma_2^2}-\frac{1}{\sigma_1^2}\Big) + \frac{1}{2(\sigma_1^2)^2}\big(\RYieta-\RXieta\beta_1\big)^T\Vione^{-1}\big(\Vitwo-\Vione\big)\Vitwo^{-1}\big(\RYieta-\RXieta\beta_1\big)\\
		&\quad + \frac{1}{2} \Big(\frac{1}{\sigma_1^2}-\frac{1}{\sigma_2^2}\Big) \big(\RYieta-\RXieta\beta_1\big)^T\Vitwo^{-1} \big(\RYieta-\RXieta\beta_1\big)\\
		&\quad + \frac{2}{2(\sigma_2^2)^2} \big(\RYieta-\RXieta\beta_2\big)\Vitwo^{-1} \RXieta(\beta_2-\beta_1)\\
		&\quad + \frac{1}{2(\sigma_2^2)^2}(\beta_2-\beta_1)^T\RXieta^T\Vitwo^{-1} \RXieta(\beta_2-\beta_1), 
	\end{array}
\end{displaymath}
and 
\begin{displaymath}
	\begin{array}{rl}
		& \psiSigmakj(\Si; \theta_1,\eta) - \psiSigmakjbig{\Si; \theta_2,\eta}\\
		=& -\frac{1}{2} \sum_{t,u=1}^{\nni} \Big(\Vione^{-1}\big(\Vitwo-\Vione\big)\Vitwo^{-1} \Big)_{t,u} (\Zi)_{t,\kappa}(\Zi^T)_{\iota,u}\\
		&\quad + \frac{1}{2}\sum_{t,u=1}^{\nni} \Big(\frac{1}{\sigma_1^2}  \big(\RYieta-\RXieta\beta_1\big)_t \big(\RYieta-\RXieta\beta_1\big)_u \big(\Vione^{-1}(\Zi)_{\cdot,\kappa}(\Zi^T)_{\iota,\cdot} \Vione^{-1}\big)_{t,u}\\
		&\quad\quad\quad\quad\quad\quad\quad - \frac{1}{\sigma_2^2} \big(\RYieta-\RXieta\beta_2\big)_t \big(\RYieta-\RXieta\beta_2\big)_u \big(\Vitwo^{-1}(\Zi)_{\cdot,\kappa}(\Zi^T)_{\iota,\cdot} \Vitwo^{-1}\big)_{t,u} \Big),
	\end{array}
\end{displaymath}
where for $t,u\in\indset{\nni}$, we have
\begin{displaymath}
	\begin{array}{cl}
	&\frac{1}{\sigma_1^2}  \big(\RYieta-\RXieta\beta_1\big)_t \big(\RYieta-\RXieta\beta_1\big)_u \big(\Vione^{-1}(\Zi)_{\cdot,\kappa}(\Zi^T)_{\iota,\cdot} \Vione^{-1}\big)_{t,u}\\
	=& \Big( \frac{1}{\sigma_1^2} -\frac{1}{\sigma_2^2}\Big)\big(\RYieta-\RXieta\beta_1\big)_t \big(\RYieta-\RXieta\beta_1\big)_u \big(\Vione^{-1}(\Zi)_{\cdot,\kappa}(\Zi^T)_{\iota,\cdot} \Vione^{-1}\big)_{t,u}\\
	&\quad + \frac{1}{\sigma_2^2}\big(\RYieta-\RXieta\beta_1\big)_t \big(\RYieta-\RXieta\beta_1\big)_u \big(\Vione^{-1}(\Zi)_{\cdot,\kappa}(\Zi^T)_{\iota,\cdot} \Vione^{-1}\big)_{t,u}
	\end{array}
\end{displaymath}
and
\begin{displaymath}
	\begin{array}{rl}
		&\big(\RYieta-\RXieta\beta_1\big)_t \big(\RYieta-\RXieta\beta_1\big)_u \big(\Vione^{-1}(\Zi)_{\cdot,\kappa}(\Zi^T)_{\iota,\cdot} \Vione^{-1}\big)_{t,u}\\
		&\quad - \big(\RYieta-\RXieta\beta_2\big)_t \big(\RYieta-\RXieta\beta_2\big)_u \big(\Vitwo^{-1}(\Zi)_{\cdot,\kappa}(\Zi^T)_{\iota,\cdot} \Vitwo^{-1}\big)_{t,u}\\
		=& \big(\RYieta-\RXieta\beta_2\big)_t \big(\RYieta-\RXieta\beta_2\big)_u \Big( \big(\Vione^{-1}-\Vitwo^{-1}\big)_{t,\cdot}(\Zi)_{\cdot,\kappa}\big(\Vione^{-1}-\Vitwo^{-1}\big)_{u,\cdot}(\Zi)_{\cdot,\iota} \\
		&\quad\quad + \big(\Vione^{-1}-\Vitwo^{-1}\big)_{t,\cdot}(\Zi)_{\cdot,\kappa} \big( \Vitwo^{-1} \big)_{u,\cdot}(\Zi)_{\cdot,\iota} 
		+ \big( \Vitwo^{-1} \big)_{t,\cdot}(\Zi)_{\cdot,\kappa} \big(\Vione^{-1}-\Vitwo^{-1}\big)_{u,\cdot}(\Zi)_{\cdot,\iota}\Big)\\
		&\quad + \Big(  \big(\RXieta(\beta_2-\beta_1)\big)_t\big(\RYieta-\RXieta\beta_2\big)_u + \big(\RYieta-\RXieta\beta_2\big)_t\big( \RXieta(\beta_2-\beta_1) \big)_u \\
		&\quad\quad+ \big(\RXieta(\beta_2-\beta_1) \big)_t\big(\RXieta(\beta_2-\beta_1) \big)_u  \Big) 
		\big(\Vione^{-1}\big)_{t,\cdot}(\Zi)_{\cdot,\kappa} \big( \Vione^{-1} \big)_{u,\cdot}(\Zi)_{\cdot,\iota}.
	\end{array}
\end{displaymath}
Due to~\eqref{eq:rewriteDiffMatInv}, the terms $\Vione^{-1}-\Vitwo^{-1}$ can be represented in terms of $\Vitwo-\Vione$. 
Due to~\eqref{eq:rewriteDiffMatInv-2}, the terms $(\sigma_1^2)^{-1}-(\sigma_2^2)^{-1}$ can be represented in terms of $\sigma_2^2-\sigma_1^2$. 
Recall that $\nni\le\nmax$, 
$\normP{\Xi - \mX^0(\Wi)}{2}\le\normP{\Xi}{2}$ by Lemma~\ref{lem:EmmeneggerG10} and similarly for $\Yi$, 
$\normP{\Xi}{2}$ and $\normP{\Yi}{2}$ are bounded by Assumption~\ref{assumpt:regularity2} and H\"{o}lder's inequality, 
$\Zi$ is bounded by Assumption~\ref{assumpt:regularity6}, 
$\Vione^{-1}=(\Zi\Sigma_1\Zi^T + \oneni)^{-1}$ and $\Vitwo^{-1}=(\Zi\Sigma_2\Zi^T + \oneni)^{-1}$ are  bounded by Assumption~\ref{assumpt:Theta3},
$\mX$ and $\mY$ are square integrable by Assumption~\ref{assumpt:DML1}, 
and $\Theta$ is bounded by Assumption~\ref{assumpt:Theta1}.
Therefore, we infer the claim. 
\end{proof}

\begin{lemma}\label{lem:boundIcaltwo}
	Consider
	\begin{displaymath}
	\Icaltwo = \sup_{\theta\in\Theta} \normBig{\Enkbig{\psibig{\S;\theta,\hetaIkc}} - \EPBig{\Enkbig{\psibig{\S;\theta,\hetaIkc}}}}
\end{displaymath}
	as in~\eqref{eq:Icaltwo}. 
	We have $\Icaltwo\lesssim\NN^{-\frac{1}{2}}$ with $\PP$-probability $1-o(1)$. 
\end{lemma}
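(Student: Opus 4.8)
The plan is to treat $\Icaltwo$ as the supremum over $\theta\in\Theta$ of a centered empirical process indexed by $\theta$, with a \emph{fixed} (conditioned-upon) nuisance parameter, and then control it by a maximal inequality that exploits the Lipschitz structure of the score class established in Lemma~\ref{lem:exampleVaart}.

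First I would condition on $\SIkc$. On the event $\EpsN$, which holds with $\PP$-probability at least $1-\DeltaN=1-o(1)$ by Assumption~\ref{assumpt:DML2}, the estimated nuisance $\hetaIkc=\hetaIkc(\SIkc)$ becomes a fixed element $\eta\in\TauN$ once $\SIkc$ is fixed. Because the groups indexed by $\Ik$ are independent of one another and of $\SIkc$ (Assumptions~\ref{assumpt:D4-2} and~\ref{assumpt:D6}), the summands $\psibig{\Si;\theta,\eta}-\EPbig{\psibig{\Si;\theta,\eta}}$, $i\in\Ik$, are independent and centered conditionally on $\SIkc$. Thus $\Enkbig{\psibig{\S;\theta,\eta}}-\EPbig{\Enkbig{\psibig{\S;\theta,\eta}}}$ is, conditionally, a normalized sum of $\normone{\Ik}$ independent mean-zero terms, and $\Icaltwo$ is the supremum of its Euclidean norm over $\theta\in\Theta$.

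Next I would extract the rate. Since $\nni\le\nmax$ and $\normone{\Ik}/\nnktot=O(1)$ with $\NNtot/\nnktot=O(1)$, one has $\normone{\Ik}\asymp\nnktot\asymp\NN$, so the standardizing prefactor satisfies $\sqrt{\normone{\Ik}}/\nnktot\asymp\NN^{-1/2}$. Writing the centered sum as a factor of order $\NN^{-1/2}$ times the standardized empirical process $\Gnkbig{\psibig{\S;\theta,\eta}}$, it then suffices to prove $\sup_{\theta\in\Theta}\normbig{\Gnkbig{\psibig{\S;\theta,\eta}}}=O_{\PP}(1)$, uniformly over $\eta\in\TauN$. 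For this I would invoke a maximal inequality for Lipschitz-parametrized classes: Lemma~\ref{lem:exampleVaart} shows each coordinate of $\psi(\cdot;\theta,\eta)$ is Lipschitz in $\theta$ with a square-integrable Lipschitz function $h$, and $\Theta$ is a bounded subset of the finite-dimensional space $\R^{d+1+q^2}$, so the class $\Fcaleta$ has a finite bracketing entropy integral. A bracketing maximal inequality then bounds $\EPbig{\sup_{\theta\in\Theta}\normbig{\Gnkbig{\psibig{\S;\theta,\eta}}}\mid\SIkc}$ by a constant times $\normP{h}{2}$ plus a contribution from a uniform envelope of $\Fcaleta$; both are controlled uniformly over $\eta\in\TauN$ using $\normP{\etazero-\eta}{p}\le\CpnormEta$ and $\normP{\etazero-\eta}{2}\le\deltaN^8$ from Assumption~\ref{assumpt:DML1}, the moment bound of Assumption~\ref{assumpt:regularity2}, Lemma~\ref{lem:EmmeneggerG10}, and the boundedness of $\Zi$, $\Vi^{-1}$, and $\Theta$ (Assumptions~\ref{assumpt:regularity6},~\ref{assumpt:Theta3},~\ref{assumpt:Theta1}). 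This produces a \emph{deterministic} conditional bound of the form $\EPbig{\Icaltwo\mid\SIkc}\lesssim\NN^{-1/2}$ on $\EpsN$.

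Finally I would pass from this conditional bound to the unconditional statement $\Icaltwo=O_{\PP}(\NN^{-1/2})$ by applying Lemma~\ref{lem:ChernozhukovLemma} with $A_\NN$ the centered process, $B_\NN=\SIkc$, and $\eps_\NN\asymp\NN^{-1/2}$, together with $\PP(\EpsN)=1-o(1)$. The main obstacle I anticipate is the maximal-inequality step: one must establish the entropy bound and, crucially, control the envelope and the Lipschitz constant \emph{uniformly over all admissible $\eta\in\TauN$}, so that the resulting $\eps_\NN$ is genuinely deterministic and independent of the realized $\hetaIkc$. By contrast, the conditioning and rate-extraction steps are comparatively routine.
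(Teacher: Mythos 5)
Your plan is, in substance, the paper's own proof. The paper disposes of this lemma in one line by invoking its Lemma~\ref{lem:Chernozhukov6-2}, and that lemma packages exactly the steps you describe: treat $\hetaIkc$ as a fixed element $\eta\in\TauN$ (conditioning on $\SIkc$, on the event $\EpsN$ of probability $1-o(1)$), form the standardized empirical process $\Gnk$ over the class $\Fcaleta=\{\psi_j(\cdot;\theta,\eta)\colon j\in\indset{d+1+q^2},\theta\in\Theta\}$, control its complexity through the Lipschitz-in-$\theta$ property of Lemma~\ref{lem:exampleVaart}, and recover the rate from the prefactor $\sqrt{\normone{\Ik}}/\nnktot\asymp\NN^{-1/2}$. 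The only cosmetic differences are that the paper verifies a \emph{uniform covering} entropy condition via \citet[Example 19.7]{Vaart1998} where you propose bracketing entropy (immaterial for a bounded, finite-dimensional, Lipschitz-parametrized class), and that you pass through a conditional expectation bound plus Lemma~\ref{lem:ChernozhukovLemma} where the paper's maximal inequality already supplies a high-probability bound; both are adequate for how the lemma is used downstream.

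The one place where your write-up has a genuine gap is the sentence ``a bracketing maximal inequality then bounds \dots''. The off-the-shelf inequalities you would reach for there (van der Vaart's bracketing bound, or \citet[Theorem 5.1, Corollary 5.1]{Chernozhukov2014}) are stated for \iid\ observations. The grouped data $\Si$, $i\in\Ik$, are independent but \emph{not} identically distributed: the group sizes $\nni$ vary, so the summands have different laws — this is precisely why the paper defines $\normP{\phi}{2}^2=\frac{1}{\normone{\Ik}}\sum_{i\in\Ik}\EP[\phi^2(\Si)]$ in its Lemma~\ref{lem:Chernozhukov6-2}, and why the entire content of that lemma's proof is checking that the arguments of \citet{Chernozhukov2014} survive when ``\iid'' is weakened to ``independent'' (the key ingredient, \citet[Theorem 12]{Boucheron2005}, needs only independence). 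Your step noting that the summands are conditionally independent is necessary but not sufficient: you must also either prove or cite a maximal inequality valid for independent, non-identically distributed summands, or simply invoke the paper's Lemma~\ref{lem:Chernozhukov6-2}. As written, quoting a standard \iid\ bracketing inequality is not justified for this data, and this is exactly the adaptation the paper is at pains to make.
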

\begin{proof}
The proof of the statement follows from Lemma~\ref{lem:Chernozhukov6-2}. 
\end{proof}

A version of the following lemma with not only independent but also identically distributed random variables is presented in~\citet[Lemma 6.2]{Chernozhukov2018} and in~\citet[Theorem 5.1, Corollary 5.1]{Chernozhukov2014}. However, as we subsequently show, their results can be generalized to only requiring independence. 

\begin{lemma}\label{lem:Chernozhukov6-2}(Maximal Inequality: \citet[Lemma 6.2]{Chernozhukov2018}; \citet[Theorem 5.1, Corollary 5.1]{Chernozhukov2014})
Let $\eta\in\TauN$, and consider the function class $\Fcaleta:= \{\psi_j(\cdot; \theta,\eta)\colon j\in\indset{d+1+q^2}, \theta\in\Theta\}$.
Suppose that $\Feta\ge \sup_{f\in\Fcaleta} |f|$ is a measurable envelope for $\Fcaleta$ with $\normP{\Feta}{p}<\infty$. 
Let $\kk\in\indset{\KK}$, and let $M := \max_{i\in\Ik} \Feta(\Si)$. 
Let $\tau^2>0$ be a positive constant satisfying $\sup_{f\in\Fcaleta}\normP{f}{2}^2\le \tau^2\le \normP{\Feta}{2}^2<\infty$, where we write $\normP{\phi}{2}^2 = \frac{1}{\normone{\Ik}}\sum_{i\in\Ik}\EP[\phi^2(\Si)]$ for functions $\phi$. Suppose there exist constants $a\ge e$ 
and $v\ge 1$ such that for all $0<\varepsilon\le 1$,
\begin{equation}\label{eq:Chernozhukov6-2:claim1}
	\log\sup_Q N(\varepsilon \norm{\Feta}_{Q, 2}, \Fcaleta, \norm{\cdot}_{Q, 2})
	\le v\log(a/\varepsilon)
\end{equation}
holds, where $Q$ runs over the class $\{\normone{\Ik}^{-1}\sum_{i\in\Ik}Q_i\colon Q_i \text{ a probability measure}\}$ of measures.
Consider the empirical process 
\begin{displaymath}
	\Gnk[\psi(\S)] := \frac{1}{\sqrt{\normone{\Ik}}}\sum_{i\in\Ik} \big( \psi(\Si) - \EP[\psi(\Si)] \big). 
\end{displaymath}
Then, we have
\begin{equation}\label{eq:Chernozhukov6-2:claim2}
	\EP[\norm{\Gnk}_{\Fcaleta}] \le C\cdot \bigg( \sqrt{v\tau^2\log\big(a \normP{\Feta}{2}\tau^{-1}\big)} + \frac{v\normP{M}{2}}{\sqrt{\normone{\Ik}}}\log\big(a \normP{\Feta}{2}\tau^{-1} \big)\bigg),
\end{equation}
where $C$ is an absolute constant. Moreover, for every $t\ge 1$, with probability $>1-t^{-\frac{p}{2}}$, we have 
\begin{equation}\label{eq:Chernozhukov6-2:claim3}
	\norm{\Gnk}_{\Fcaleta}
	\le (1+\alpha)\EP[\norm{\Gnk}_{\Fcaleta}] + C(p) \Big( (\tau + \normone{\Ik}^{-\frac{1}{2}}\normP{M}{p})\sqrt{t} + \alpha^{-1} \normone{\Ik}^{-\frac{1}{2}} \normP{M}{2}t\Big)
\end{equation}
for all $\alpha > 0$, where $C(p)>0$ is a constant depending only on $p$. In particular, setting $a\ge \normone{\Ik}$ and $t=\log(\normone{\Ik})$, with probability $>1-c\cdot(\log(\normone{\Ik}))^{-1}$ for some constant $c$, we have
\begin{displaymath}
	\norm{\Gnk}_{\Fcaleta}
	\le C(p,c)\bigg( \tau\sqrt{v\log\big(a \normP{\Feta}{2}\tau^{-1}\big)}  + \frac{v\normP{M}{2}}{\sqrt{\normone{\Ik}}}\log\big(a \normP{\Feta}{2}\tau^{-1} \big) \bigg),
\end{displaymath}
where $\normP{M}{p}\le \normone{\Ik}^{\frac{1}{p}}\normP{\Feta}{p}$ and $C(p,c)>0$ is a constant depending only on $p$ and $c$. 
\end{lemma}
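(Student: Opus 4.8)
The plan is to follow the proofs of the i.i.d.\ versions in \citet[Lemma 6.2]{Chernozhukov2018} and \citet[Theorem 5.1, Corollary 5.1]{Chernozhukov2014} and to verify that every step uses only the independence of the summands $\{\psi(\Si;\theta,\eta)\}_{i\in\Ik}$, never the assumption that they are identically distributed. Two features of the statement already encode the heterogeneity correctly: the uniform entropy bound \eqref{eq:Chernozhukov6-2:claim1} is imposed over the class of mixture measures $\{\normone{\Ik}^{-1}\sum_{i\in\Ik}Q_i\}$, and the variance proxy $\tau^2$ controls the \emph{averaged} second moment $\normP{f}{2}^2=\normone{\Ik}^{-1}\sum_{i\in\Ik}\EP[f^2(\Si)]$ uniformly over $\Fcaleta$. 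These are precisely the quantities that arise when the data are merely independent, so the statement is set up to make the i.i.d.\ argument transcribe.

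First I would establish the expectation bound \eqref{eq:Chernozhukov6-2:claim2}. I apply symmetrization to replace $\Gnk$ by the Rademacher-symmetrized process; symmetrization holds for independent, not necessarily identically distributed, summands. Conditionally on the data the symmetrized process $f\mapsto\normone{\Ik}^{-1/2}\sum_{i\in\Ik}\eps_i f(\Si)$ is sub-Gaussian with respect to the empirical $L^2$ semimetric $\norm{\cdot}_{Q,2}$ induced by $Q=\normone{\Ik}^{-1}\sum_{i\in\Ik}\delta_{\Si}$, which lies in the measure class of \eqref{eq:Chernozhukov6-2:claim1}. Chaining via Dudley's entropy integral and bounding the covering numbers by $v\log(a/\eps)$ then produces the leading term $\sqrt{v\tau^2\log(a\normP{\Feta}{2}\tau^{-1})}$, while the contribution of the largest summand $M=\max_{i\in\Ik}\Feta(\Si)$ yields the second term involving $\normP{M}{2}/\sqrt{\normone{\Ik}}$. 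None of these steps — symmetrization, sub-Gaussianity of the Rademacher sum, the entropy integral — invokes identical distribution.

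Next, for the high-probability bound \eqref{eq:Chernozhukov6-2:claim3} I would invoke a Talagrand-type concentration inequality for the supremum of a centered empirical process of independent random variables, in the form used in the proof of \citet[Corollary 5.1]{Chernozhukov2014}. That inequality is stated for independent, possibly non-identically-distributed, summands and bounds $\norm{\Gnk}_{\Fcaleta}$ by $(1+\alpha)\EP[\norm{\Gnk}_{\Fcaleta}]$ plus fluctuation terms governed by $\tau$ and by the moments of the envelope maximum $M$. The required moment control $\normP{M}{p}\le\normone{\Ik}^{1/p}\normP{\Feta}{p}$ is the elementary estimate $\EP[\max_{i\in\Ik}\Feta(\Si)^p]\le\sum_{i\in\Ik}\EP[\Feta(\Si)^p]$, which again needs only a common underlying space and not equality of laws.

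Finally I would specialize by setting $a\ge\normone{\Ik}$ and $t=\log(\normone{\Ik})$ and choosing $\alpha$ to absorb absolute constants, which collapses the two terms into the stated form, valid with probability at least $1-c(\log\normone{\Ik})^{-1}$. I expect the only genuinely delicate point to be the invocation of the concentration inequality: one must use a version whose variance term is expressed through the averaged $\tau^2$ rather than through a single common variance, and confirm that its constants depend only on $p$ (and $c$). Everything else is a bookkeeping transcription of the i.i.d.\ argument, so the main obstacle is verifying, rather than reproving, that the underlying empirical-process machinery of \citet{Chernozhukov2014} holds verbatim under independence alone.
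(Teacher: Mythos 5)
Your proposal is correct and follows essentially the same route as the paper: the paper's proof likewise reduces everything to checking that the machinery behind \citet[Theorem 5.1, Corollary 5.1]{Chernozhukov2014} --- the symmetrization/chaining argument of their Theorem 5.2 for the expectation bound \eqref{eq:Chernozhukov6-2:claim2}, and the Talagrand-type concentration inequality of \citet[Theorem 12]{Boucheron2005} underlying their Theorem 5.1 for the tail bound \eqref{eq:Chernozhukov6-2:claim3} --- uses only independence of the summands, never equality of their laws. The only nitpick is an attribution swap: the concentration step corresponds to \citet[Theorem 5.1]{Chernozhukov2014} (via Boucheron et al.), not Corollary 5.1, which is the source of the expectation bound; this does not affect the substance of your argument.
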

\begin{proof}
	
	Observe that an envelope $\Feta$ as described in the lemma exists due to Lemma~\ref{lem:exampleVaart}. 
	Consequently, statement~\eqref{eq:Chernozhukov6-2:claim1} holds with $a=\mathrm{diam}(\Theta)$ due to~\citet[Example 19.7]{Vaart1998} and due to Lemma~\ref{lem:exampleVaart}.
	\cite{Liu2020} proceed similarly to establish a similar claim. 
	The proof of~\citet[Corollary 5.1]{Chernozhukov2014} can be adapted to verify statement~\eqref{eq:Chernozhukov6-2:claim2}, and 
	the proof of~\citet[Theorem 5.1]{Chernozhukov2014} can be adapted to show statement~\eqref{eq:Chernozhukov6-2:claim3}. 
	Adaptations are required because these two results 
	are stated for \iid\ data. Our grouped data $\{\Si\}_{i\in\indset{\NN}}$ is groupwise independent, but not identically distributed because a different number of observations may be available for the different groups $i\in\indset{\NN}$. 
	Subsequently, we describe these adaptations. 
	
	The proof of~\citet[Theorem 5.1]{Chernozhukov2014} is based on~\citet[Theorem 12]{Boucheron2005}. The latter result is an inequality for functions of independent random variables and does not require identically distributed variables. Thus, statement~\eqref{eq:Chernozhukov6-2:claim3} is established in our setting. 
	
	Also the proof of~\citet[Theorem 5.2]{Chernozhukov2014} only requires independent but not necessarily identically distributed random variables. Hence, the Corollary 5.1 of Theorem 5.2 in~\citet{Chernozhukov2014} 
	remains to hold in our setting, and thus statement~\eqref{eq:Chernozhukov6-2:claim2} is established as well. 
\end{proof}

\subsection{Asymptotic Distribution of $\hbeta$}\label{sect:asymptoticDistribution}

\begin{proof}[Proof of Theorem~\ref{thm:asymptoticGaussian}]
Fix a sequence $\{\PPN\}_{\NN\ge 1}$ of probability measures such that $\PPN\in\PcalN$ for all $\NN\ge 1$. Because this sequence is chosen arbitrarily, it suffices to show that~\eqref{eq:thmEquation} holds along $\{\PPN\}_{\NN\ge 1}$
to infer that it holds uniformly over $\PP\in\PcalN$. 

Recall the notations $\hbRkXi=\Xi-\hmX^{\Ikc}(\Wi)$ and $\hbRkYi=\Yi-\hmY^{\Ikc}(\Wi)$ for $i\in\indset{\NN}$. 
Observe that the estimator $\hbeta$ in~\eqref{eq:betahat} can alternatively be represented by
\begin{displaymath}
	\hbeta = \frac{1}{\KK}\sum_{\kk=1}^{\KK} \bigg(\argmin_{\beta} \frac{1}{\nnktot}\sum_{i\in\Ik} \big( \hbRkYi - \hbRkXi\beta \big)^T\hVik^{-1}\big( \hbRkYi - \hbRkXi\beta \big)\bigg)
\end{displaymath}
for $\hVik := \Zi\hSigmak\Zi^T + \oneni$ because the Gaussian likelihood decouples. In particular, $\hbeta$ has a generalized least squares representation. 
Observe furthermore that we have
\begin{equation}\label{eq:gauss0}
	\sqrt{\NNtot}(\hbeta-\betazero)
	= \frac{1}{\KK}\sum_{\kk=1}^{\KK} \Big( \frac{1}{\nnktot}\sum_{i\in\Ik} (\hbRkXi)^T\hVik^{-1}\hbRkXi \Big)^{-1}  \frac{\sqrt{\NNtot}}{\nnktot} \sum_{i\in\Ik} (\hbRkXi)^T\hVik^{-1} \big( \hbRkYi-\hbRkXi\betazero \big).
\end{equation}
Let $\kk\in\indset{\KK}$. We have
\begin{equation}\label{eq:gauss1}
	\begin{array}{rl}
		& \frac{\sqrt{\NNtot}}{\nnktot} \sum_{i\in\Ik} (\hbRkXi)^T\hVik^{-1} \big( \hbRkYi-\hbRkXi\betazero \big)\\
		=&\frac{\sqrt{\NNtot}}{\nnktot} \sum_{i\in\Ik}  (\hbRkXi)^T\Vizero^{-1} \big( \hbRkYi-\hbRkXi\betazero \big) 
		+ \frac{\sqrt{\NNtot}}{\nnktot} \sum_{i\in\Ik}  (\hbRkXi)^T\big(\hVik^{-1}-\Vizero^{-1}\big) \big( \hbRkYi-\hbRkXi\betazero \big).
	\end{array}
\end{equation}
We analyze the two terms in the above decomposition~\eqref{eq:gauss1} individually. We start with the second term. For $i\in\indset{\NN}$, $\eta\in\TauN$, and $\Sigma$ from $\Theta$, define the function 
\begin{equation}\label{eq:psiLoss}
	\scoretest(\Si;\Sigma,\eta):=\big(\Xi-\mX(\Wi)\big)^T(\Zi\Sigma\Zi^T+\oneni)^{-1}\Big(\Yi-\mY(\Wi) - \big(\Xi-\mX(\Wi)\big)\betazero\Big). 
\end{equation}
We have 
\begin{equation}\label{eq:gauss2}
	\begin{array}{rl}
		&\frac{\sqrt{\NNtot}}{\nnktot}\sum_{i\in\Ik}  (\hbRkXi)^T\big(\hVik^{-1}-\Vizero^{-1}\big) \big( \hbRkYi-\hbRkXi\betazero \big) \\
		=& \sqrt{\NNtot}\Enkbig{\scoretest(\S;\hSigmak,\hetaIkc) - \scoretest(\S; \Sigmazero,\hetaIkc)}\\
		=& \sqrt{\NNtot}\Enkbig{\scoretest(\S;\hSigmak,\hetaIkc) - \scoretest(\S; \Sigmazero,\etazero)} - \sqrt{\NNtot}\Enkbig{\scoretest(\S;\Sigmazero,\hetaIkc) - \scoretest(\S; \Sigmazero,\etazero)}.
	\end{array}
\end{equation}
Next, we analyze the two terms in~\eqref{eq:gauss2}. The second term is of order
\begin{equation}\label{eq:gauss3}
	\normbig{\sqrt{\NNtot}\Enkbig{\scoretest(\S;\Sigmazero,\hetaIkc) - \scoretest(\S; \Sigmazero,\etazero)}} = o_{\PPN}(1)
\end{equation}
by Lemma~\ref{lem:empiricalSumsSqrtN}. The first term in~\eqref{eq:gauss2} is bounded by
\begin{displaymath}
	\begin{array}{rl}
		& \normbig{\sqrt{\NNtot}\Enkbig{\scoretest(\S;\hSigmak,\hetaIkc) - \scoretest(\S; \Sigmazero,\etazero)}}\\
		\le& \sup_{\norm{\Sigma-\Sigmazero}\le\deltaN} \normbig{ \sqrt{\NNtot}\Enkbig{\scoretest(\S;\Sigma,\hetaIkc) - \scoretest(\S; \Sigmazero,\etazero)}}.
	\end{array}
\end{displaymath}
with $\PPN$-probability $1-o(1)$ due to Lemma~\ref{lem:consistency} because we have $\deltaN^2\le\deltaN$ for $\NN$ large enough.
Let $\Sigma$ be from $\Theta$ with $\norm{\Sigma-\Sigmazero}\le\deltaN$. 
With $\PPN$-probability $1-o(1)$, we have
\begin{equation}\label{eq:gauss7}
	\sqrt{\NNtot} \Enkbig{\scoretest(\S;\Sigma,\hetaIkc) - \scoretest(\S; \Sigmazero,\etazero)} \lesssim \deltaN
\end{equation}
by Lemma~\ref{lem:empProcPsi}. 
Consequently, the second term in~\eqref{eq:gauss1} is of order $o_{\PPN}(1)$ due to~\eqref{eq:gauss2}, \eqref{eq:gauss3}, and~\eqref{eq:gauss7}. 
Subsequently, we analyze the first term in~\eqref{eq:gauss1}. 
By Lemma~\ref{lem:empiricalSumsSqrtN}, we have
\begin{displaymath}
	  \frac{\sqrt{\NNtot}}{\nnktot} \sum_{i\in\Ik}  (\hbRkXi)^T\Vizero^{-1} \big( \hbRkYi-\hbRkXi\betazero \big)
	 =   \frac{\sqrt{\NNtot}}{\nnktot} \sum_{i\in\Ik}  \bRkXi^T\Vizero^{-1} \big( \bRkYi-\bRkXi\betazero \big) + o_{\PPN}(1).
\end{displaymath}
Denote by 
\begin{displaymath}
	\Ti := \EPNbigg{\bRkXi^T\Vizero^{-1} \big( \bRkYi-\bRkXi\betazero \big)\Big(\bRkXi^T\Vizero^{-1} \big( \bRkYi-\bRkXi\betazero \big)\Big)^T}. 
\end{displaymath}
We have 
\begin{equation}\label{eq:simplifyTNi}
	\Ti = \EPNbig{\bRkXi^T\Vizero^{-1}\bRkXi}
\end{equation}
due to Assumption~\ref{assumpt:D4-2}. 
Furthermore, recall $\Tbar = \frac{1}{\NNtot}\sum_{i=1}^{\NN}\Ti$ from Assumption~\ref{assumpt:regularity7}. Due to  Assumption~\ref{assumpt:regularity3-2}, the singular values of the matrices $\Ti$, $i\in\indset{\NN}$ are uniformly bounded away from $0$ by $\clambdamin>0$. Thus, the smallest eigenvalue $\nuN^2$ of $\Tbar$ satisfies 
\begin{equation}\label{eq:gauss11}
		\nuN^2 \ge \frac{1}{\NNtot}\sum_{i=1}^{\NN}\lambdamin\Big(\Ti\Big) \ge \frac{1}{\nmax}\clambdamin>0
\end{equation}
because we have $\NNT\le \NN\nmax$ with $\nmax<\infty$. 
Next, we verify the Lindeberg condition. 
Due to the Cauchy-Schwarz inequality,  
Markov's inequality, H\"{o}lder's inequality, and~\eqref{eq:gauss11}, 
we have
\begin{displaymath}
	\begin{array}{rl}
		&\frac{1}{\NNtot\nuN^2} \sum_{i=1}^{\NN} 
		\EPNbigg{\normbig{\bRkXi^T\Vizero^{-1} \big( \bRkYi-\bRkXi\betazero \big)}^2 \one_{\big\{\normbig{\bRkXi^T\Vizero^{-1} ( \bRkYi-\bRkXi\betazero )}^2\ge\eps\NNtot\nuN^2\big\}}}\\
		\le & \frac{1}{\NNtot\nuN^2} \sum_{i=1}^{\NN} 
		\normPNbig{\bRkXi^T\Vizero^{-1} \big( \bRkYi-\bRkXi\betazero \big)}{4}^2
		\sqrt{\PPN\big(\normbig{\bRkXi^T\Vizero^{-1} ( \bRkYi-\bRkXi\betazero )}^2\ge\eps\NNtot\nuN^2\big)}\\
		\le & \frac{1}{\NNtot\nuN^2} \sum_{i=1}^{\NN} 
		\normPNbig{\bRkXi^T\Vizero^{-1} \big( \bRkYi-\bRkXi\betazero \big)}{4}^2\normPNbig{\bRkXi^T\Vizero^{-1} \big( \bRkYi-\bRkXi\betazero \big)}{2}
		\sqrt{\frac{1}{\eps\NNtot\nuN^2}}\\
		\le & \frac{1}{\NNtot\nuN^2} \sum_{i=1}^{\NN} 
		\normPNbig{\bRkXi^T\Vizero^{-1} \big( \bRkYi-\bRkXi\betazero \big)}{4}^3
		\sqrt{\frac{1}{\eps\NNtot\nuN^2}}\\
		\lesssim& \sqrt{\frac{1}{\eps\NNtot}} \stackrel{\NN\rightarrow\infty}{\longrightarrow} 0
	\end{array}
\end{displaymath}
for $\eps > 0$
by Assumptions~\ref{assumpt:regularity2},~\ref{assumpt:Theta1},~\ref{assumpt:Theta3}, and Lemma~\ref{lem:Emmenegger2021-Lemma-G-7}. 
Consequently, we have
\begin{displaymath}
	(\Tbar)^{-\frac{1}{2}}   \frac{1}{\sqrt{\NNtot}} \sum_{i=1}^{\NN}  \bRkXi^T\Vizero^{-1} \big( \bRkYi-\bRkXi\betazero \big)
	\stackrel{d}{\longrightarrow} \mathcal{N}_d(\bo,\one)
\end{displaymath}
by~\citet[Theorem 6.9.2]{Hansen2017}. Thus, we infer 
\begin{displaymath}
\begin{array}{cl}
	& (\Tbar)^{-\frac{1}{2}} \sqrt{\NNtot} 
	\frac{1}{\KK}\sum_{\kk=1}^{\KK} \frac{1}{\nnktot} \sum_{i\in\Ik}
        \bRkXi^T\Vizero^{-1} \big( \bRkYi-\bRkXi\betazero \big)\\
	= &(\Tbar)^{-\frac{1}{2}}   \frac{1}{\sqrt{\NNtot}} \sum_{i=1}^{\NN}  \bRkXi^T\Vizero^{-1} \big( \bRkYi-\bRkXi\betazero \big) + o_{\PPN}(1)
	\end{array}
\end{displaymath}
due to $\nnktot = \frac{\NNT}{\KK}=o(1)$. 

Finally, the term $\frac{1}{\nnktot}\sum_{i\in\Ik}(\hbRkXi)^T\hVik^{-1}\hbRkXi$ in~\eqref{eq:gauss0} equals $\Tzero + o_{\PPN}(1)$ 
due to Lemma~\ref{lem:multiplierMatrix}.
Therefore, we have
\begin{displaymath}
	\sqrt{\NNtot}\Tzero^{\frac{1}{2}}(\hbeta-\betazero)
	= 
	(\Tbar)^{-\frac{1}{2}} \frac{1}{\sqrt{\NNtot}} \sum_{i=1}^{\NN}  \bRkXi^T\Vizero^{-1} \big( \bRkYi-\bRkXi\betazero \big)
	+ o_{\PPN}(1)
	\stackrel{d}{\longrightarrow} \mathcal{N}_d(\bo,\one).
\end{displaymath}
\end{proof}

\begin{lemma}\label{lem:empiricalSumsSqrtN}
	Let $\kk\in\indset{\KK}$. For $i\in\indset{\NN}$ and $\eta\in\TauN$, consider the function 
	\begin{displaymath}
	\scoretest(\Si;\Sigmazero,\eta)=\big(\Xi-\mX(\Wi)\big)^T(\Zi\Sigmazero\Zi^T+\oneni)^{-1}\Big(\Yi-\mY(\Wi) - \big(\Xi-\mX(\Wi)\big)\betazero\Big)
	\end{displaymath}
	 as in~\eqref{eq:psiLoss}, but where we consider $\Sigmazero$ instead of general $\Sigma$ from $\Theta$. 
	We have 
	\begin{displaymath}
		\normbigg{\frac{\sqrt{\NNtot}}{\nnktot} \sum_{i\in\Ik}\scoretest(\Si;\Sigmazero,\hetaIkc)
		- \frac{\sqrt{\NNtot}}{\nnktot}\sum_{i\in\Ik}\scoretest(\Si;\Sigmazero,\etazero)}
		=O_{\PP}(\deltaN).
	\end{displaymath}
\end{lemma}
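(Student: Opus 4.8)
The plan is to exploit the cross-fitting structure: since $\hetaIkc=\hetaIkc(\SIkc)$ is computed from the groups in $\Ikc$ only, it is independent of $\{\Si\}_{i\in\Ik}$, so I would argue conditionally on $\SIkc$. Writing $\Di:=\scoretest(\Si;\Sigmazero,\hetaIkc)-\scoretest(\Si;\Sigmazero,\etazero)$, I would split
\begin{displaymath}
	\frac{\sqrt{\NNtot}}{\nnktot}\sum_{i\in\Ik}\Di = \frac{\sqrt{\NNtot}}{\nnktot}\sum_{i\in\Ik}\big(\Di-\EP[\Di\mid\SIkc]\big) + \frac{\sqrt{\NNtot}}{\nnktot}\sum_{i\in\Ik}\EP[\Di\mid\SIkc]
\end{displaymath}
into a centered stochastic part and a conditional-mean (bias) part, and show that each is $O_{\PP}(\deltaN)$.

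For the bias part, the crucial structural fact is Neyman orthogonality. Setting $\Ubi:=\mX^0(\Wi)-\hmX^{\Ikc}(\Wi)$, $\Vbi:=\mY^0(\Wi)-\hmY^{\Ikc}(\Wi)$, and $u_i:=\RYi-\RXi\betazero=\Zi\bbi+\epsi$, and using $\Xi-\hmX^{\Ikc}(\Wi)=\RXi+\Ubi$ and $\Yi-\hmY^{\Ikc}(\Wi)=\RYi+\Vbi$, I would expand
\begin{displaymath}
	\Di = \RXi^T\Vzeroi^{-1}\big(\Vbi-\Ubi\betazero\big) + \Ubi^T\Vzeroi^{-1}u_i + \Ubi^T\Vzeroi^{-1}\big(\Vbi-\Ubi\betazero\big).
\end{displaymath}
Conditioning on $\Wi$ (together with $\SIkc$), the first two first-order terms vanish in expectation, since $\EP[\RXi\mid\Wi]=\bo$ by construction and $\EP[u_i\mid\Wi]=\bo$ by Assumption~\ref{assumpt:D4-2}, whereas $\Ubi$ and $\Vbi$ are functions of $(\Wi,\SIkc)$. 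Only the bilinear term survives, and Cauchy--Schwarz together with the boundedness of $\Vzeroi^{-1}$ (Assumption~\ref{assumpt:Theta3}) and of $\Theta$ (Assumption~\ref{assumpt:Theta1}) gives $\normbig{\EP[\Di\mid\SIkc]}\lesssim\normP{\Ubi}{2}\big(\normP{\Vbi}{2}+\normP{\Ubi}{2}\big)\le\deltaN\NN^{-\frac12}$ on the event $\EpsN$, by the product rate in Assumption~\ref{assumpt:DML2}. Summing over $i\in\Ik$ and using $\normone{\Ik}/\nnktot=O(1)$ and $\NNtot\le\NN\nmax$ turns the prefactor $\tfrac{\sqrt{\NNtot}}{\nnktot}\normone{\Ik}$ into $O(\sqrt{\NNtot})$, so the bias part is $O(\sqrt{\NNtot}\,\deltaN\NN^{-1/2})=O(\deltaN)$ with $\PP$-probability $1-o(1)$, since $\EpsN$ holds with probability $1-\DeltaN$.

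For the centered part, conditionally on $\SIkc$ the summands are independent and mean zero, so its conditional second moment equals $\frac{\NNtot}{\nnktot^2}\sum_{i\in\Ik}\Var(\Di\mid\SIkc)=O(1)\max_{i\in\Ik}\Var(\Di\mid\SIkc)$, again using $\NNtot/\nnktot=O(1)$ and $\normone{\Ik}/\nnktot=O(1)$. Each conditional variance is controlled by the $L^2$ nuisance error: for $\Ubi^T\Vzeroi^{-1}u_i$ one conditions on $\Wi$ and uses $\EP[u_iu_i^T\mid\Wi]=\sigmazero^2\Vzeroi$ (cf.~\eqref{eq:resNormal}) to get a bound $\lesssim\normP{\Ubi}{2}^2$, and for $\RXi^T\Vzeroi^{-1}(\Vbi-\Ubi\betazero)$ one applies H\"older with the moment bounds of Assumption~\ref{assumpt:regularity2} (via Lemma~\ref{lem:Emmenegger2021-Lemma-G-7}) and interpolates the resulting fractional moment of $\Vbi-\Ubi\betazero$ between $\normP{\cdot}{2}$ and $\normP{\cdot}{p}$ using $p\ge 8$. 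Since $\normP{\etazero-\hetaIkc}{2}\le\deltaN^8$ by Assumption~\ref{assumpt:DML2}, the conditional standard deviation is a power of $\deltaN$ strictly exceeding one, and Lemma~\ref{lem:ChernozhukovLemma} (with $\ttt=2$) upgrades this to the unconditional conclusion that the centered part is $o_{\PP}(\deltaN)$.

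The main obstacle is the bias part: one must verify carefully that the two first-order terms have vanishing conditional expectation, which is exactly the Neyman-orthogonality mechanism and hinges on both the cross-fitting independence of $\hetaIkc$ from the groups in $\Ik$ and the conditional mean-zero properties $\EP[\RXi\mid\Wi]=\bo$ and $\EP[u_i\mid\Wi]=\bo$ granted by Assumption~\ref{assumpt:D4-2}. The remaining work---the H\"older and interpolation estimates bounding the surviving bilinear term and the conditional variances---is routine given the boundedness hypotheses, along the same lines as the bounds already carried out in Lemmas~\ref{lem:boundIcalone} and~\ref{lem:exampleVaart}.
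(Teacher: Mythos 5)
Your proof is correct and follows essentially the same route as the paper's: the identical split (given $\SIkc$) into a conditionally centered part and a conditional-bias part, with the bias killed by Neyman orthogonality (vanishing first-order terms via the tower property and Assumption~\ref{assumpt:D4-2}) plus the product rate of Assumption~\ref{assumpt:DML2}, and the centered part handled by a conditional second-moment computation using cross-fitting independence followed by Lemma~\ref{lem:ChernozhukovLemma}. The only cosmetic differences are that the paper packages the orthogonality step as a Taylor expansion $f_k(1)=f_k(0)+f_k'(0)+\tfrac12 f_k''(\tilde r)$ with $f_k'(0)=\bo$ (with Lemmas~\ref{lem:boundPsiEtaEtazero} and~\ref{lem:boundLambdaN} supplying the moment and remainder bounds), which for this score---bilinear in $\eta$---coincides with your exact expansion, and that it bounds the conditional variance by a truncation/Markov argument rather than your moment interpolation.
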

\begin{proof}[Proof of Lemma~\ref{lem:empiricalSumsSqrtN}]  A similar proof that is  modified from~\citet{Chernozhukov2018} is presented in~\citet[Lemma G.16]{Emmenegger2021}. 
For notational simplicity, we omit the argument $\Sigmazero$ in $\scoretest$ and write $\scoretest(\Si;\eta)$ instead of $\scoretest(\Si;\Sigmazero,\eta)$. 
 By the triangle inequality, we have
\begin{displaymath}
	\begin{array}{cl}
		&\normBig{\frac{\sqrt{\NNtot}}{\nnktot} \sum_{i\in\Ik}\scoretest(\Si;\hetaIkc)
		- \frac{\sqrt{\NNtot}}{\nnktot} \sum_{i\in\Ik}\scoretest(\Si;\etazero)}\\
		=& \Big\lVert\frac{\sqrt{\NNtot}}{\nnktot} \sum_{i\in\Ik}\big( \scoretest(\Si; \hetaIkc) -\int \scoretest(\si; \hetaIkc)\,\mathrm{d}\PP(\si) \big) 
	 - 
		\frac{\sqrt{\NNtot}}{\nnktot} \sum_{i\in\Ik}\big( \scoretest(\Si; \etazero) -\int \scoretest(\si; \etazero)\,\mathrm{d}\PP(\si) \big) \\
		&\quad\quad+ \sqrt{\NNtot}\frac{1}{\nnktot}\sum_{i\in\Ik}\int \big(\scoretest(\si; \hetaIkc) - \scoretest(\si; \etazero)\big)\,\mathrm{d}\PP(\si)\Big\rVert\\
		\le &\Icalone + \sqrt{\NNtot}\Icaltwo, 
	\end{array}
\end{displaymath}
where $\Icalone:= \norm{M}$
for
\begin{displaymath}
	M :=\frac{\sqrt{\NNtot}}{\nnktot}\sum_{i\in\Ik}\bigg( \scoretest(\Si; \hetaIkc) -\int \scoretest(\si; \hetaIkc)\,\mathrm{d}\PP(\si) \bigg)  - 
		\frac{\sqrt{\NNtot}}{\nnktot} \sum_{i\in\Ik}\bigg( \scoretest(\Si; \etazero) -\int \scoretest(\si; \etazero)\,\mathrm{d}\PP(\si) \bigg),
\end{displaymath}
and where
\begin{displaymath}
	\Icaltwo:=\normbigg{\frac{1}{\nnktot}\sum_{i\in\Ik}\int \big(\scoretest(\si; \hetaIkc) - \scoretest(\si; \etazero)\big)\,\mathrm{d}\PP(\si)}. 
\end{displaymath}
Subsequently, we bound the two terms $\Icalone$ and $\Icaltwo$ individually. 
First, we bound $\Icalone$. Because the dimensions $d$ of $\betazero$ and $q$ of the random effects model are fixed, it is sufficient to bound one entry of the $d$-dimensional column vector $M$. Let $t\in\indset{d}$. On the event $\EpsN$ that holds with $\PP$-probability $1-o(1)$, we have
\begin{equation}\label{eq:rNp1}
	\begin{array}{cl}
		&\EPbig{\norm{M_{t}}^2\big|\SIkc}\\
		=& \frac{\NNtot}{\nnktot^2}\sum_{i\in\Ik}\EP\big[\normone{\scoretest_{t}(\Si; \hetaIkc) -  \scoretest_{t}(\Si; \etazero)}^2\big|\SIkc\big] \\
		&\quad+  \frac{\NNtot}{\nnktot^2} \sum_{i,j\in\Ik,i\neq j}\EP\big[\big( \scoretest_{t}(\Si; \hetaIkc) -  \scoretest_{t}(\Si; \etazero)\big)\big(\scoretest_{t}(\Sj; \hetaIkc) -  \scoretest_{t}(\Sj; \etazero)\big)\big|\SIkc\big] \\
		&\quad -  \frac{2\NNtot}{\nnktot^2}\sum_{i\in\Ik}\EP\big[ \scoretest_{t}(\Si; \hetaIkc) -  \scoretest_{t}(\Si; \etazero)\big|\SIkc\big]\\
		&\quad\quad\quad\quad\quad\quad\cdot \sum_{j\in\Ik}\EP\big[\scoretest_{t}(\Sj; \hetaIkc) - \scoretest_{t}(\Sj; \etazero)\big|\SIkc\big]\\
		&\quad +  \frac{\NNtot}{\nnktot^2}\sum_{i\in\Ik}\EP\big[\scoretest_t(\Si;\hetaIkc)-\scoretest_t(S_i;\etazero)\big| \SIkc\big]^2\\
		&\quad 
		+  \frac{\NNtot}{\nnktot^2}\sum_{i,j\in\Ik,i\neq j} \EP\big[ \scoretest_t(\Si;\hetaIkc)-\scoretest_t(\Si;\etazero) \big| \SIkc \big]
		\EP\big[ \scoretest_t(\Sj;\hetaIkc)-\scoretest_t(\Sj;\etazero) \big| \SIkc \big]\\
		\le &  \frac{\NNtot}{\nnktot^2}\sum_{i\in\Ik}\EP\big[\normone{\scoretest_{t}(\Si; \hetaIkc) -  \scoretest_{t}(S_i; \etazero)}^2\big|\SIkc\big]\\
		\le & \sup_{\eta\in\TauN} \frac{\NNtot}{\nnktot^2}\sum_{i\in\Ik}\EP\big[\norm{ \scoretest(\Si; \eta) -  \scoretest(\Si; \etazero)}^2\big]
	\end{array}
\end{equation}
because $\Si$ and $\Sj$ are independent for $i\neq j$. 
Due to Lemma~\ref{lem:boundPsiEtaEtazero}, we have $\EP[\Icalone^2|\SIkc] \lesssim \deltaN^4\le \delta^2$ for $\NN$ large enough because $\frac{\NNtot}{\nnktot}$ is of order $O(1)$ by assumption. 
Thus, we infer $\Icalone = O_{\PP}(\deltaN)$
by Lemma~\ref{lem:ChernozhukovLemma}. 
Subsequently, we bound $\Icaltwo$. Let $i\in\Ik$. For $r\in [0,1]$, we introduce the function
\begin{displaymath}
	f_k(r) := \frac{1}{\nnktot}\sum_{i\in\Ik}\Big(\EPbig{\scoretest\big(\Si; \etazero + r(\hetaIkc-\etazero)\big)\big| \SIkc} - \EP[\scoretest(\Si; \etazero)]\Big).
\end{displaymath}
Observe that $\Icaltwo= \norm{f_k(1)}$ holds. 
We apply a Taylor expansion to this function and obtain
\begin{displaymath}
	f_k(1)=f_k(0)+f_k'(0) + \frac{1}{2}f_k''(\tilde r)
\end{displaymath}
for some $\tilde r\in (0,1)$. We have 
\begin{displaymath}
	f_k(0) =\frac{1}{\nnktot}\sum_{i\in\Ik}\Big( \EPbig{\scoretest(\Si; \etazero)\big|\SIkc}- \EP[\scoretest(\Si; \etazero)] \Big)= \bo. 
\end{displaymath}
Furthermore, the score $\scoretest$ satisfies the Neyman orthogonality property $f'_k(0)=\bo$
on the event $\EpsN$ that holds with $\PP$-probability $1-o(1)$
because we have for all $i\in\Ik$ and $\eta\in\TauN$ that 
\begin{equation}\label{eq:NeymanOrth1}
	\begin{array}{cl}
		&\frac{\partial}{\partial r}\Big|_{r=0} \EP\big[\scoretest\big(\Si;\etazero+r(\eta-\etazero)\big)\big]\\
			=&\frac{\partial}{\partial r}\Big|_{r=0}  \EP\bigg[ \Big(\Xi-\mX^0(\Wi)-r\big(\mX(\Wi)-\mX^0(\Wi)\big)\Big)^T(\Zi\Sigmazero\Zi^T+\oneni)^{-1}\\
			&\quad\quad\quad\cdot
			\bigg(\Yi-\mY^0(\Wi)-r\big(\mY(\Wi)-\mY^0(\Wi)\big)\\
			&\quad\quad\quad\quad\quad-\Big( \Xi-\mX^0(\Wi)-r\big(\mX(\Wi)-\mX^0(\Wi)\big) \Big)\betazero\bigg) \bigg]\\
			=&  \EP\Big[ -\big(\mX(\Wi)-\mX^0(\Wi)\big)^T(\Zi\Sigmazero\Zi^T+\oneni)^{-1}\Big(\Yi-\mY^0(\Wi)-\big( \Xi-\mX^0(\Wi) \big)\betazero\Big) \\
			&\quad\quad\quad\quad -
			\big(\Xi-\mX^0(\Wi)\big)(\Zi\Sigmazero\Zi^T+\oneni)^{-1}\Big( \mY(\Wi)-\mY^0(\Wi)-\big( \mX(\Wi)-\mX^0(\Wi) \big)\betazero\Big)  \Big]\\
			=& \bo
	\end{array}
\end{equation}
holds because we can apply the tower property to condition on $\Wi$ inside the above expectations, and because $\mX^0$ and $\mY^0$ are the true conditional expectations. 
Moreover, we have
\begin{displaymath}
	\begin{array}{cl}
		&\frac{\partial^2}{\partial r^2} \EP\big[\scoretest\big(\Si;\etazero+r(\eta-\etazero)\big)\big]\\
		=& 2\EP\bigg[ \big(\mX(\Wi)-\mX^0(\Wi)\big)^T(\Zi\Sigmazero\Zi^T+\oneni)^{-1}\Big( \mY(\Wi)-\mY^0(\Wi)-\big(\mX(\Wi)-\mX^0(\Wi)\big)\betazero \Big) \bigg]
		\end{array}
\end{displaymath}
for all $i\in\Ik$. 
On the event $\EpsN$ that holds with $\PP$-probability $1-o(1)$, we  have
\begin{displaymath}
	\norm{f_k''(\tilde r)} \le \sup_{r\in (0,1)} \norm{f_k''(r)} \lesssim \deltaN\NN^{-\frac{1}{2}}
\end{displaymath}
by Lemma~\ref{lem:boundLambdaN}. 
Therefore, we conclude 
\begin{displaymath}
	\normBig{\frac{\sqrt{\NNtot}}{\nnktot} \sum_{i\in\Ik}\scoretest(\Si;\hetaIkc)
		- \frac{\sqrt{\NNtot}}{\nnktot} \sum_{i\in\Ik}\scoretest(\Si;\etazero)}
		\le \Icalone + \sqrt{\NNtot}\Icaltwo
		= O_{\PP}(\deltaN). 
\end{displaymath}
\end{proof}

\begin{lemma}\label{lem:boundPsiEtaEtazero}
	We have
	\begin{displaymath}
		 \sup_{\eta\in\TauN}\frac{1}{\nnktot}\sum_{i\in\Ik}\EP\big[\norm{ \scoretest(\Si; \Sigmazero, \eta) -  \scoretest(\Si; \Sigmazero, \etazero)}^2\big] \lesssim \deltaN^4.
	\end{displaymath}
\end{lemma}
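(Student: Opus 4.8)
The plan is to expand the integrand into a part built only from the true residuals and parts carrying the nuisance estimation error, and then to exploit that the latter are extremely small in $L^2(\PP)$ while the former have bounded higher moments. Writing the local shorthand $\Delta_X := \mX^0(\Wi)-\mX(\Wi)$ and $\Delta_Y := \mY^0(\Wi)-\mY(\Wi)$, and noting that $\scoretest(\Si;\Sigmazero,\cdot)$ coincides, up to the irrelevant scalar factor $\sigma^2$, with the $\beta$-score evaluated at $\beta=\betazero$ and $\Vi=\Vizero$, I would reuse the algebraic decomposition already carried out in~\eqref{eq:boundIcalone:eq1}. This gives
\[
\scoretest(\Si;\Sigmazero,\eta) - \scoretest(\Si;\Sigmazero,\etazero) = \RXi^T\Vizero^{-1}\big(\Delta_Y - \Delta_X\betazero\big) + \Delta_X^T\Vizero^{-1}\big(\RYi - \RXi\betazero\big) + \Delta_X^T\Vizero^{-1}\big(\Delta_Y - \Delta_X\betazero\big),
\]
that is, two terms linear in the nuisance error and one quadratic in it.

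For each of the three terms I would first pass to the pointwise operator-norm bound, pulling out $\norm{\Vizero^{-1}}\le\CboundV$ (Assumption~\ref{assumpt:Theta3}, applicable since $\Sigmazero\in\Theta$) and using $\nni\le\nmax$ together with the boundedness of $\betazero$ (Assumption~\ref{assumpt:Theta1}) and of $\Zi$ (Assumption~\ref{assumpt:regularity6}). Then I would take $\EP[\norm{\cdot}^2]$ and apply the Cauchy--Schwarz inequality to factor it into a product of two fourth moments. The true residuals $\RXi$ and $\RYi-\RXi\betazero$ have bounded $L^4(\PP)$ norms: Lemma~\ref{lem:Emmenegger2021-Lemma-G-7} gives $\normP{\RXi}{p}\le 2\normP{\Xi}{p}$, which with Assumption~\ref{assumpt:regularity2} and $p\ge 8$ controls their fourth moments (and likewise for $\Yi$). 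The essential quantitative input is a \emph{small} $L^4$ bound on the nuisance error: interpolating between the $L^2$ bound $\normP{\etazero-\eta}{2}\le\deltaN^8$ and the $L^p$ bound $\normP{\etazero-\eta}{p}\le\CpnormEta$ of Assumption~\ref{assumpt:DML1} yields $\normP{\Delta_X}{4}+\normP{\Delta_Y}{4}\lesssim\deltaN^{8/3}$. Hence each linear term satisfies $\EP[\norm{\cdot}^2]\lesssim\deltaN^{16/3}$ and the quadratic term is even smaller, of order $\deltaN^{32/3}$.

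Combining the three contributions via $\norm{A+B+C}^2\le 3(\norm{A}^2+\norm{B}^2+\norm{C}^2)$ gives $\EP[\norm{\scoretest(\Si;\Sigmazero,\eta)-\scoretest(\Si;\Sigmazero,\etazero)}^2]\lesssim\deltaN^{16/3}\le\deltaN^4$ for $\NN$ large, because $\deltaN\to 0$ and $16/3>4$. Since every constant above is uniform in $i\in\indset{\NN}$ and the bounds hold uniformly over $\eta\in\TauN$, the supremum over $\TauN$ is harmless, and the average $\frac{1}{\nnktot}\sum_{i\in\Ik}$ is controlled using $\normone{\Ik}/\nnktot=O(1)$, producing the claimed $\lesssim\deltaN^4$.

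I expect the only real obstacle to be the two cross terms, in which an \emph{unbounded} true residual multiplies a nuisance error that is primarily controlled in $L^2$: one cannot directly pair the $L^2$ bound $\deltaN^8$ with $\RXi$, since the product of two $L^2$ functions need not be $L^2$. The remedy is to spend a little of the bounded higher moments ($p\ge 8$) of the residuals and interpolate the nuisance error up to $L^4$; because the $\deltaN^8$ bound is so much stronger than needed, the target exponent $4$ is reached with substantial room to spare.
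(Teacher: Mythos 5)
Your proof is correct, but it takes a genuinely different route from the paper's. Both arguments rest on the same algebraic decomposition of the score difference into two terms linear in the nuisance error and one quadratic term (the paper uses this only implicitly, through the same computation as in~\eqref{eq:boundIcalone:eq1}), and both draw on the same assumptions; the divergence is in how the $L^2(\PP)$ smallness of the nuisance error is converted into a second-moment bound when the true residuals are unbounded. You interpolate: log-convexity of $L^r$-norms between $\normP{\etazero-\eta}{2}\le\deltaN^8$ and $\normP{\etazero-\eta}{p}\le\CpnormEta$ from Assumption~\ref{assumpt:DML1} (with $p\ge 8$, the exponent is $\theta=\frac{p-4}{2(p-2)}\ge\frac{1}{3}$) gives an $L^4$ bound of order $\deltaN^{8/3}$ on the nuisance error, after which Cauchy--Schwarz applies directly to each term and yields $\deltaN^{16/3}$, sharper than the claimed $\deltaN^4$. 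The paper instead keeps the smallness in first-moment form: it first proves $\frac{1}{\nnktot}\sum_{i\in\Ik}\EP\big[\norm{\scoretest(\Si;\Sigmazero,\eta)-\scoretest(\Si;\Sigmazero,\etazero)}\big]\lesssim\deltaN^8$ (its display~\eqref{eq:rNp1-2}), then truncates the score difference $\Delta:=\scoretest(\Si;\Sigmazero,\eta)-\scoretest(\Si;\Sigmazero,\etazero)$ via $\norm{\Delta}^2\le\norm{\Delta}+\norm{\Delta}^2\one_{\norm{\Delta}\ge 1}$, bounds the tail term by H\"{o}lder as $\normP{\Delta}{4}^2\sqrt{\PP(\norm{\Delta}\ge 1)}$ with a merely \emph{constant} fourth moment, and feeds the $L^1$ rate back in through Markov's inequality, giving $\deltaN^8+\deltaN^4\lesssim\deltaN^4$. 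Your route buys a more streamlined argument and a strictly stronger exponent, at the price of needing $p$ strictly above $4$ (interpolation reaches the target $\deltaN^4$ once $p\ge 6$; with $p\ge 8$ there is slack, as you note); the paper's truncation-plus-Markov device needs only bounded fourth moments together with an $L^1$ rate, so it would survive even at $p=4$, and it is the more generic way of upgrading a small first moment to a small second moment when higher moments are bounded but not themselves small.
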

\begin{proof}[Proof of Lemma~\ref{lem:boundPsiEtaEtazero}]
A similar proof that is  modified from~\citet{Chernozhukov2018} is presented in~\citet[Lemma G.15, Lemma G.16]{Emmenegger2021}. 
For notational simplicity, we omit the argument $\Sigmazero$ in $\scoretest$ and write $\scoretest(\Si;\eta)$ instead of $\scoretest(\Si;\Sigmazero,\eta)$. 
Recall the notation $\Vzeroi = \Zi\Sigmazero\Zi^T+\oneni$ for $i\in\indset{\NN}$. 
	Because we have $\sup_{i\in\indset{\NN}}\norm{\Vzeroi^{-1}}\le\CboundV$ by Assumption~\ref{assumpt:Theta3}, we have
	\begin{equation}\label{eq:rNp1-2}
		\frac{1}{\nnktot}\sum_{i\in\Ik}\EPbig{\normbig{\scoretest(\Si;\eta) - \scoretest(\Si;\etazero)}}\lesssim\deltaN^8
	\end{equation}
        by the triangle inequality, 
	H\"{o}lder's inequality, 
	and because we have for all $i\in\Ik$ that 
	$\nni\le\nmax$, $\normP{\Xi - \mX^0(\Wi)}{2}\le\normP{\Xi}{2}$ by Lemma~\ref{lem:EmmeneggerG10} and similarly for $\Yi$, 
 $\normP{\Xi}{2}$ and $\normP{\Yi}{2}$ are bounded by Assumption~\ref{assumpt:regularity2} and H\"{o}lder's inequality, 
	$(\Zi\Sigmazero\Zi^T + \oneni)^{-1}$ is bounded by Assumption~\ref{assumpt:Theta3},
and $\normP{\etazero-\eta}{2}\le\deltaN^8$ holds by Assumption~\ref{assumpt:DML1}.
	
	Furthermore, we have
\begin{equation}\label{eq:rNp2}
	\begin{array}{rl}
	&\EP\big[\norm{ \scoretest(\Si; \eta) -  \scoretest(\Si; \etazero)}^2\big] \\
	\le &\EP[\norm{ \scoretest(\Si; \eta) -  \scoretest(\Si; \etazero)}]
	+ \EP\big[\norm{ \scoretest(\Si; \eta) -  \scoretest(\Si; \etazero)}^2\one_{\norm{ \scoretest(\Si; \eta) -  \scoretest(\Si; \etazero)}\ge 1}\big],
	\end{array}
\end{equation}
and we have
\begin{equation}\label{eq:rNp3}
	\begin{array}{rl}
		&\EP\big[\norm{ \scoretest(\Si; \eta) -  \scoretest(\Si; \etazero)}^2\one_{\norm{ \scoretest(\Si; \eta) -  \scoretest(\Si; \etazero)}\ge 1}\big]\\
		\le& \normPbig{\scoretest(\Si; \eta) -  \scoretest(\Si; \etazero)}{4}^2
		\sqrt{\PP(\norm{ \scoretest(\Si; \eta) -  \scoretest(\Si; \etazero)}\ge 1)}
	\end{array}
\end{equation}
by H\"{o}lder's inequality. 
Observe that the term 
\begin{equation}\label{eq:rNp4}
	\normPbig{\scoretest(\Si; \eta) -  \scoretest(\Si; \etazero)}{4}^2
\end{equation}
is upper bounded by 
the triangle inequality, 
	H\"{o}lder's inequality, 
	because we have $\nni\le\nmax$, $\normP{\Xi - \mX^0(\Wi)}{p}\lesssim\normP{\Xi}{p}$ by Lemma~\ref{lem:Emmenegger2021-Lemma-G-7} 
	and similarly for $\Yi$, 
$\normP{\Xi}{p}$ and $\normP{\Yi}{p}$ are bounded by Assumption~\ref{assumpt:regularity2}, 
	$(\Zi\Sigmazero\Zi^T + \oneni)^{-1}$ is bounded by Assumption~\ref{assumpt:Theta3},
and $\normP{\etazero-\eta}{p}$ is upper bounded by Assumption~\ref{assumpt:DML1}.
By Markov's inequality, 
we furthermore have
\begin{equation}\label{eq:rNp5}
	\PP(\norm{ \scoretest(\Si; \eta) -  \scoretest(\Si; \etazero)}\ge 1) \le \EP[\norm{ \scoretest(\Si; \eta) -  \scoretest(\Si; \etazero)}] \le\deltaN^8
\end{equation}
due to~\eqref{eq:rNp1-2}. 
Therefore, we have	
\begin{displaymath}
	 \sup_{\eta\in\TauN}\frac{1}{\nnktot}\sum_{i\in\Ik}\EP\big[\norm{ \scoretest(\Si;  \eta) -  \scoretest(\Si;  \etazero)}^2\big] \lesssim \deltaN^8+ \deltaN^4 \lesssim \deltaN^4
\end{displaymath}
for $\NN$ large enough due to~\mbox{\eqref{eq:rNp1-2}--\eqref{eq:rNp5}}. 
\end{proof}

\begin{lemma}\label{lem:boundLambdaN}
	Let $\eta\in\TauN$, and let $i\in\indset{\NN}$. We have
	\begin{displaymath}
		\EP\bigg[ \big(\mX(\Wi)-\mX^0(\Wi)\big)^T\Vzeroi^{-1}\Big( \mY(\Wi)-\mY^0(\Wi)-\big(\mX(\Wi)-\mX^0(\Wi)\big)\betazero \Big) \bigg]\lesssim\deltaN\NN^{-\frac{1}{2}}.
	\end{displaymath}
\end{lemma}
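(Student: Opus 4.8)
The plan is to recognize that this expression is exactly the cross term that survives once Neyman orthogonality has eliminated the first-order contribution in Lemma~\ref{lem:empiricalSumsSqrtN}, so it has a pure \emph{product} structure in the two nuisance estimation errors, which is precisely what the rate condition in Assumption~\ref{assumpt:DML1} is built to control. Write $A := \mX(\Wi)-\mX^0(\Wi)\in\R^{\nni\times d}$ and $B := \mY(\Wi)-\mY^0(\Wi)\in\R^{\nni}$, so that the object to bound is the $d$-vector $\EP[A^T\Vzeroi^{-1}(B-A\betazero)]$. First I would move the norm inside the expectation and split, by the triangle inequality, into the two pieces $\EP[\norm{A^T\Vzeroi^{-1}B}]$ and $\EP[\norm{A^T\Vzeroi^{-1}A\betazero}]$.

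For each piece I would apply submultiplicativity of the operator norm together with $\norm{\Vzeroi^{-1}}\le\CboundV$ from Assumption~\ref{assumpt:Theta3} and the boundedness of $\betazero$ coming from Assumption~\ref{assumpt:Theta1}, obtaining the pointwise bounds $\norm{A^T\Vzeroi^{-1}B}\le\CboundV\norm{A}\norm{B}$ and $\norm{A^T\Vzeroi^{-1}A\betazero}\lesssim\norm{A}^2$. Taking expectations and using the Cauchy--Schwarz inequality then gives $\EP[\norm{A}\norm{B}]\le\normP{A}{2}\normP{B}{2}$ and $\EP[\norm{A}^2]=\normP{A}{2}^2$, where these $L^2(\PP)$ norms refer to the matrix norm of $A$ and the Euclidean norm of $B$; since operator norm is dominated by the Frobenius norm, I can work with the latter throughout.

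The one bookkeeping step is to relate these matrix-level norms to the per-observation errors $\normP{\mX^0(W)-\mX(W)}{2}$ and $\normP{\mY^0(W)-\mY(W)}{2}$ that actually appear in Assumption~\ref{assumpt:DML1}. Because $\mX^0,\mX$ and $\mY^0,\mY$ are applied row-wise and each row of $\Wi$ carries the marginal law of $W$, the Frobenius norm decomposes over the $\nni$ rows and linearity of expectation makes any within-group dependence among rows irrelevant, yielding $\normP{A}{2}^2=\nni\,\normP{\mX^0(W)-\mX(W)}{2}^2\le\nmax\,\normP{\mX^0(W)-\mX(W)}{2}^2$ and likewise for $B$. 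As $\nmax<\infty$ is a fixed constant, combining the previous displays produces, up to constants, a bound of the form $\normP{\mX^0(W)-\mX(W)}{2}\big(\normP{\mY^0(W)-\mY(W)}{2}+\normP{\mX^0(W)-\mX(W)}{2}\big)$.

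Finally I would invoke the product rate condition in Assumption~\ref{assumpt:DML1}, which bounds exactly this expression by $\deltaN\NN^{-1/2}$, to conclude the claim. The only place needing genuine care is this row-wise reduction from matrix norms to the per-observation errors; everything else is a routine application of Cauchy--Schwarz and the operator-norm bound on $\Vzeroi^{-1}$. I do not anticipate a real obstacle, since the structure of the statement already matches the left-hand side of the product condition: the content of the lemma is really that orthogonality has left behind nothing but this second-order product term.
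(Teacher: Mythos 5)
Your proposal is correct and follows essentially the same route as the paper's proof, which likewise combines the Cauchy--Schwarz/H\"{o}lder inequalities with the bound on $\norm{\Vzeroi^{-1}}$ from Assumption~\ref{assumpt:Theta3}, the boundedness of $\Theta$ (hence of $\betazero$), and the product rate condition of Assumption~\ref{assumpt:DML1}. Your explicit row-wise reduction from the group-level norms $\normP{\mX(\Wi)-\mX^0(\Wi)}{2}$ to the per-observation norms $\normP{\mX(W)-\mX^0(W)}{2}$ (with the factor $\nmax$) is a bookkeeping step the paper passes over silently, so it is a welcome addition rather than a deviation.
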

\begin{proof}[Proof of Lemma~\ref{lem:boundLambdaN}]
	The claim follows by applying H\"{o}lder's inequality and the Cauchy-Schwarz inequality because $\sup_{i\in\indset{\NN}}\norm{\Vzeroi^{-1}}$ is upper bounded by Assumption~\ref{assumpt:Theta3}, $\Theta$ is bounded, and  
	\begin{displaymath}
		\normP{\mX(\Wi) - \mX^0(\Wi)}{2} \big(\normP{\mY(\Wi) - \mY^0(\Wi)}{2}+\normP{\mX(\Wi) - \mX^0(\Wi)}{2} \big) \le\deltaN\NN^{-\frac{1}{2}}
	\end{displaymath}
	holds by Assumption~\ref{assumpt:DML1}. 
\end{proof}

\begin{lemma}\label{lem:empProcPsi}
	Let $\Sigma$ from $\Theta$ with $\norm{\theta-\thetazero}\le\deltaN^2$. 
	With $\PP$-probability $1-o(1)$, we have
	\begin{displaymath}
		 \sqrt{\NNtot}\Enkbig{\scoretest(\S;\Sigma,\hetaIkc) - \scoretest(\S; \Sigmazero,\etazero)} \lesssim \deltaN.
	\end{displaymath}
\end{lemma}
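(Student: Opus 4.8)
The plan is to reduce the statement to the already-established fixed-$\Sigma$ result of Lemma~\ref{lem:empiricalSumsSqrtN} plus a term quantifying stochastic equicontinuity in $\Sigma$ near $\Sigmazero$. First I would add and subtract $\scoretest(\Si;\Sigmazero,\hetaIkc)$,
\[
\scoretest(\Si;\Sigma,\hetaIkc) - \scoretest(\Si; \Sigmazero,\etazero) = \underbrace{\big(\scoretest(\Si;\Sigma,\hetaIkc) - \scoretest(\Si;\Sigmazero,\hetaIkc)\big)}_{(A)} + \underbrace{\big(\scoretest(\Si;\Sigmazero,\hetaIkc) - \scoretest(\Si;\Sigmazero,\etazero)\big)}_{(B)}.
\]
The $(B)$-contribution is precisely $\sqrt{\NNtot}\Enkbig{(B)}$, which is $O_{\PP}(\deltaN)$ by Lemma~\ref{lem:empiricalSumsSqrtN}. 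It then remains to show $\sqrt{\NNtot}\Enkbig{(A)} = O_{\PP}(\deltaN)$, and since the theorem uses this through $\hSigmak$, I would prove it uniformly over $\norm{\Sigma-\Sigmazero}\le\deltaN$. For $(A)$ I condition on $\SIkc$ (on the event $\EpsN$, where $\hetaIkc$ is a fixed element of $\TauN$) and split $\Enk{(A)} = \EP[\Enk{(A)}\mid\SIkc] + \big(\Enk{(A)} - \EP[\Enk{(A)}\mid\SIkc]\big)$.

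\textbf{Conditional mean (the crux).} Using~\eqref{eq:rewriteDiffMatInv}, the matrix $(\Zi\Sigma\Zi^T+\oneni)^{-1} - \Vzeroi^{-1}$ is deterministic given the fixed $\Zi$ and bounded in norm by a constant times $\norm{\Sigma-\Sigmazero}$ (Assumptions~\ref{assumpt:regularity6} and~\ref{assumpt:Theta3}). Writing the residuals formed with $\hetaIkc$ as true residuals plus nuisance error, I would argue that the first-order-in-error part of $\EP[\scoretest(\Si;\Sigma,\hetaIkc)-\scoretest(\Si;\Sigma,\etazero)\mid\SIkc]$ is annihilated by $\EP[\RXi\mid\Wi]=\bo$ and $\EP[\RYi-\RXi\betazero\mid\Wi,\Xi]=\bo$ (Assumption~\ref{assumpt:D4-2}), exactly as in the Neyman-orthogonality computation~\eqref{eq:NeymanOrth1}, which goes through verbatim with $(\Zi\Sigma\Zi^T+\oneni)^{-1}$ in place of $\Vzeroi^{-1}$. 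Since moreover $\EP[\scoretest(\Si;\Sigma,\etazero)\mid\SIkc]=\bo$ for every $\Sigma$ by the same mean-zero property and independence of group $i$ from $\SIkc$, the conditional mean of $(A)$ equals the difference of the two surviving second-order remainders, each bounded by $\normP{\mX^0(W)-\hmX^{\Ikc}(W)}{2}\big(\normP{\mY^0(W)-\hmY^{\Ikc}(W)}{2}+\normP{\mX^0(W)-\hmX^{\Ikc}(W)}{2}\big)\le\deltaN\NN^{-\frac12}$ (Assumption~\ref{assumpt:DML2}); differencing the $\Sigma$ and $\Sigmazero$ versions contributes a further factor $\norm{\Sigma-\Sigmazero}\le\deltaN$, giving $\EP[\Enk{(A)}\mid\SIkc]\lesssim\deltaN^2\NN^{-\frac12}$. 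Because $\NNtot\le\NN\nmax$, multiplying by $\sqrt{\NNtot}$ yields $\lesssim\deltaN^2\le\deltaN$.

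\textbf{Centered part (stochastic equicontinuity).} The conditionally centered term, suitably rescaled, is a constant-order multiple of the empirical process $\Gnk$ indexed by the class $\{\scoretest(\cdot;\Sigma,\hetaIkc)-\scoretest(\cdot;\Sigmazero,\hetaIkc)\colon\norm{\Sigma-\Sigmazero}\le\deltaN\}$, since $\frac{\sqrt{\NNtot}}{\nnktot}\sqrt{\normone{\Ik}}=O(1)$ using $\frac{\NNtot}{\nnktot}=O(1)$ and $\frac{\normone{\Ik}}{\nnktot}=O(1)$. By Lemma~\ref{lem:exampleVaart} this class is Lipschitz in $\Sigma$ with an $L^2$ envelope, hence admits a measurable envelope of size $\lesssim\deltaN$ and satisfies the uniform entropy bound~\eqref{eq:Chernozhukov6-2:claim1} with $a=\mathrm{diam}(\Theta)$. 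Applying the maximal inequality of Lemma~\ref{lem:Chernozhukov6-2} conditionally on $\SIkc$ with $\tau\lesssim\deltaN$ then gives $\norm{\Gnk}\lesssim\deltaN$ with $\PP$-probability $1-o(1)$, the higher-order term being controlled through $\normone{\Ik}^{-\frac12+\frac1p}\log(\normone{\Ik})\lesssim\deltaN$. Combining the two parts of $(A)$ with the $(B)$-bound yields the claim.

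I expect the main obstacle to be the conditional-mean term: one must check that the first-order nuisance-error contribution genuinely cancels by orthogonality, so that only the second-order product rate of Assumption~\ref{assumpt:DML2} enters, since a crude bound of the per-group mean by $\norm{\Sigma-\Sigmazero}\le\deltaN$ alone would leave $\sqrt{\NNtot}\,\deltaN$, which is far too large. A secondary point is ensuring the maximal inequality delivers the rate $\deltaN$ rather than $\deltaN$ times a growing factor, which is exactly why the envelope must scale with $\deltaN$ (so that $\normP{\Feta}{2}\tau^{-1}$ stays bounded) and why the entropy constant $a$ is taken to be the fixed diameter of $\Theta$.
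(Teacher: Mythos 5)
Your proposal is correct, but it follows a genuinely different decomposition from the paper's. The paper never splits off the pure-nuisance difference: it works directly with the joint difference $\scoretest(\cdot;\Sigma,\hetaIkc)-\scoretest(\cdot;\Sigmazero,\etazero)$, writing $\sqrt{\NNtot}\Enkbig{\cdot}$ as a conditionally centered empirical-process term plus a conditional mean. The conditional mean is handled in Lemma~\ref{lem:rateDoubleTaylor} by one Taylor expansion along the joint path $\big(\Sigmazero+r(\Sigma-\Sigmazero),\,\etazero+r(\hetaIkc-\etazero)\big)$, using Neyman orthogonality $f_k'(0)=\bo$ and the second-derivative bound $\lesssim\deltaN\NN^{-1/2}$; the centered term is bounded by the maximal inequality (Lemma~\ref{lem:Chernozhukov6-2}) applied to the single class $\Fcaltwo$, whose envelope $\FhetaIkc+\Fetazero$ is of constant size but whose weak variance $\tau=C\rNp\lesssim\deltaN^2$ is small (Lemma~\ref{lem:bound_rNp}), the rate $\deltaN$ then coming from $x\sqrt{\log(1/x)}\lesssim\sqrt{x}$. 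You instead recycle Lemma~\ref{lem:empiricalSumsSqrtN} for the $(B)$-part and only treat the $\Sigma$-increment $(A)$ at the fixed estimated nuisance: your conditional-mean step is an exact algebraic version of the paper's orthogonality argument (the cross terms vanish for any deterministic middle matrix by $\EP[\RXi\,|\,\Wi]=\bo$ and Assumption~\ref{assumpt:D4-2}, leaving a difference of quadratic remainders that even gains the extra factor $\norm{\Sigma-\Sigmazero}$), and your empirical-process step uses the Lipschitz property of Lemma~\ref{lem:exampleVaart} to get an envelope whose $L^2$-norm, like $\tau$, is of order $\deltaN$, so the logarithmic factor stays $O(1)$ instead of $\log(1/\rNp)$. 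What each buys: your route makes the cancellation mechanism transparent and avoids re-proving a joint Taylor lemma, at the price of a separate entropy/envelope argument for the increment class; the paper's route keeps one function class and one maximal-inequality application for the whole difference, at the price of the auxiliary Lemmas~\ref{lem:bound_rNp} and~\ref{lem:rateDoubleTaylor}. Two small caveats, neither fatal: (i) the maximal inequality needs the envelope in $L^p$ with $p\ge 8$ (for the $\normP{M}{p}$ terms), whereas Lemma~\ref{lem:exampleVaart} only asserts $h\in L^2$ — but the paper's own proof glosses over exactly the same point; (ii) the lemma is stated with radius $\deltaN^2$ (the ``$\norm{\theta-\thetazero}$'' there is a typo for $\norm{\Sigma-\Sigmazero}$), so your uniformity over the larger radius $\deltaN$ is more than sufficient, and is in fact the form invoked in the proof of Theorem~\ref{thm:asymptoticGaussian}.
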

\begin{proof}[Proof of Lemma~\ref{lem:empProcPsi}]
Observe that we have
\begin{displaymath}
\begin{array}{cl}
	 &\sqrt{\NNtot}\Enkbig{\scoretest(\S;\Sigma,\hetaIkc) - \scoretest(\S; \Sigmazero,\etazero)} \\
	 = & \sqrt{\frac{\NNtot}{\nnktot}}\sqrt{\frac{\normone{\Ik}}{\nnktot}}\Gnkbig{\scoretest(\S;\Sigma,\hetaIkc) - \scoretest(\S; \Sigmazero,\etazero)}
	 + \sqrt{\NNtot}\EPbig{\Enk{\scoretest(\S;\Sigma,\hetaIkc) - \scoretest(\S; \Sigmazero,\etazero)} \big| \SIkc},
	 \end{array}
\end{displaymath}
where the second summand is bounded by $\deltaN$ due to Lemma~\ref{lem:rateDoubleTaylor}, 
and 
where we recall the empirical process notation
\begin{displaymath}
	\Gnk[\phi(\S)] = \frac{1}{\sqrt{\normone{\Ik}}}\sum_{i\in\Ik}\Big( \phi(\Si) - \int\phi(\si) \,\mathrm{d}\PP(\si)\Big)
\end{displaymath}
for some function $\phi$. 
Consider the function class
\begin{displaymath}
	\Fcaltwo := \big\{ \scoretest_j(\cdot;\Sigma,\hetaIkc) - \scoretest_j(\cdot; \Sigmazero,\etazero) \colon j\in\indset{d}, \norm{\Sigma-\Sigmazero}\le\deltaN^2\big\}.
\end{displaymath}
We have $\sqrt{\frac{\NNtot}{\nnktot}}\sqrt{\frac{\normone{\Ik}}{\nnktot}}=O(1)$ by assumption. Therefore, it suffices to bound 
\begin{displaymath}
	\norm{\Gnk}_{\Fcaltwo} = \sup_{f\in\Fcaltwo}\normone{\Gnk[f]}.
\end{displaymath}
To bound this term, we apply Lemma~\ref{lem:Chernozhukov6-2} conditional on $\SIkc$ to the empirical process $\{\Gnk[f]\colon f\in\Fcaltwo\}$ with the envelope $\Ftwo := \FhetaIkc+\Fetazero$ and $\tau = C\rNp$ for a sufficiently large constant $C$, where $\rNp$ is defined by
\begin{equation}\label{eq:def_rNp}
	\rNp := \sup_{\substack{\eta\in\TauN, \\ \norm{\Sigma-\Sigmazero}\le\deltaN^2}} 
	\normPBig{\frac{1}{\normone{\Ik}}\sum_{i\in\Ik}\scoretest(\Si;\Sigma,\eta)-\scoretest(\Si;\Sigmazero,\etazero)}{2}
\end{equation}
and satisfies $\sup_{f\in\Fcaltwo}\normP{f}{2}\lesssim\rNp$ with $\PP$-probability $1-o(1)$. 
The estimated nuisance parameter $\hetaIkc$ can be treated as fixed if we condition on $\SIkc$. Thus, with $\PP$-probability $1-o(1)$, we have
\begin{equation}\label{eq:supFcaltwo}
	\sup_{f\in\Fcaltwo}\normone{\Gnk[f]} \lesssim \rNp\sqrt{\log\Big(\frac{1}{\rNp}\Big)} + \normone{\Ik}^{-\frac{1}{2}+\frac{1}{p}} \log(\normone{\Ik})
\end{equation}
because $\normP{\Ftwo}{p}=\normP{\FhetaIkc+\Fetazero}{p}$ is finite by the triangle inequality 
and Lemma~\ref{lem:exampleVaart}, because $\Fcaltwo\subset\FcalhetaIkc - \Fcaletazero$, 
and because the uniform covering entropy satisfies
\begin{displaymath}
	\begin{array}{rl}
		& \log\sup_Q N\big(\eps\norm{\FhetaIkc+\Fetazero}_{Q,2}, \FcalhetaIkc-\Fcaletazero, \norm{\cdot}_{Q,2}\big)\\
		\le & \log\sup_Q N\big(\frac{\eps}{2}\norm{\FhetaIkc}_{Q,2}, \FcalhetaIkc, \norm{\cdot}_{Q,2}\big) 
		+ \log\sup_Q N\big(\frac{\eps}{2}\norm{\Fetazero}_{Q,2}, \Fcaletazero, \norm{\cdot}_{Q,2}\big) \\
		\le& 2 v\log\Big(\frac{2a}{\epsilon}\Big)
	\end{array}
\end{displaymath}
for all $0<\eps\le 1$ due to~\citet[Proof of Theorem 3]{Andrews1994} as presented in~\citet{Chernozhukov2018}. 
We have $\rNp\le C\deltaN^2$ for some constant $C$ due to Lemma~\ref{lem:bound_rNp}. 
For $\NN$ large enough, we have $\rNp< 1$. 
The function $\alpha\colon (0,1)\ni x\mapsto x\sqrt{\log (x^{-1})}\in\R$ is non-negative,  increasing for $x$ small enough, and satisfies
	$\lim_{x\rightarrow 0^+} x\sqrt{\log (x^{-1})} = 0$. 
Thus, we have $\alpha(\rNp)=o(1)$ and $\alpha(\rNp)\le \alpha(C\deltaN^2)$ for $\NN$ large enough. Moreover, we have $\alpha(x)\lesssim \sqrt{x}$ for $x\in (0,1)$, so that we infer $\alpha(\rNp)\lesssim \deltaN$. Because we assumed $\normone{\Ik}^{-\frac{1}{2}+\frac{1}{p}}\log(\normone{\Ik})\lesssim\deltaN$, we have $\norm{\Gnk}_{\Fcaltwo}\lesssim\deltaN$ with $\PP$-probability $1-o(1)$ as claimed due to~\eqref{eq:supFcaltwo}.
\end{proof}

\begin{lemma}\label{lem:bound_rNp}
Let $\kk\in\KK$. 
Recall 
\begin{displaymath}
	\rNp = \sup_{\substack{\eta\in\TauN, \\ \norm{\Sigma-\Sigmazero}\le\deltaN^2}} 
	\normPBig{\frac{1}{\normone{\Ik}}\sum_{i\in\Ik}\scoretest(\Si;\Sigma,\eta)-\scoretest(\Si;\Sigmazero,\etazero)}{2}
\end{displaymath}
from~\eqref{eq:def_rNp}. We have $\rNp\lesssim\deltaN^2$. 
\end{lemma}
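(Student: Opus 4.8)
The plan is to first pass from the $L^2(\PP)$ norm of the average to an average of $L^2(\PP)$ norms by Minkowski's inequality, namely
\[
	\rNp \le \sup_{\substack{\eta\in\TauN,\\ \norm{\Sigma-\Sigmazero}\le\deltaN^2}} \frac{1}{\normone{\Ik}}\sum_{i\in\Ik}\normP{\scoretest(\Si;\Sigma,\eta)-\scoretest(\Si;\Sigmazero,\etazero)}{2},
\]
and then to split each summand by inserting the intermediate point $\scoretest(\Si;\Sigmazero,\eta)$, so that one difference varies only $\Sigma$ (at a fixed general $\eta$) while the other varies only $\eta$ (at $\Sigmazero$). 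I would bound the two resulting pieces separately and show that each is $\lesssim\deltaN^2$.

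For the $\eta$-varying piece $\scoretest(\Si;\Sigmazero,\eta)-\scoretest(\Si;\Sigmazero,\etazero)$, I would invoke Lemma~\ref{lem:boundPsiEtaEtazero}, which already controls $\frac{1}{\nnktot}\sum_{i\in\Ik}\EP[\norm{\scoretest(\Si;\Sigmazero,\eta)-\scoretest(\Si;\Sigmazero,\etazero)}^2]\lesssim\deltaN^4$ uniformly in $\eta\in\TauN$. Since $\normone{\Ik}\le\nnktot\le\nmax\normone{\Ik}$, the same bound holds with the $\normone{\Ik}$ normalization up to the constant $\nmax$, and by Jensen's (equivalently Cauchy--Schwarz) inequality $\frac{1}{\normone{\Ik}}\sum_{i\in\Ik}\normP{\cdot}{2}\le\big(\frac{1}{\normone{\Ik}}\sum_{i\in\Ik}\normP{\cdot}{2}^2\big)^{1/2}$, so this piece is $\lesssim\sqrt{\deltaN^4}=\deltaN^2$.

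For the $\Sigma$-varying piece $\scoretest(\Si;\Sigma,\eta)-\scoretest(\Si;\Sigmazero,\eta)$, note that it equals $\RXieta^T\big((\Zi\Sigma\Zi^T+\oneni)^{-1}-\Vzeroi^{-1}\big)(\RYieta-\RXieta\betazero)$. Using the resolvent identity~\eqref{eq:rewriteDiffMatInv}, the matrix factor equals $(\Zi\Sigma\Zi^T+\oneni)^{-1}\Zi(\Sigmazero-\Sigma)\Zi^T\Vzeroi^{-1}$, whose operator norm is $\lesssim\norm{\Sigma-\Sigmazero}\le\deltaN^2$ because both inverses are bounded by $\CboundV$ (Assumption~\ref{assumpt:Theta3}) and $\norm{\Zi}\le\CboundZi$ (Assumption~\ref{assumpt:regularity6}). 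Applying Cauchy--Schwarz and H\"older then gives $\normP{\cdot}{2}\lesssim\deltaN^2\,\normP{\RXieta}{4}\,\normP{\RYieta-\RXieta\betazero}{4}$, and the two fourth-moment factors are uniformly bounded over $\eta\in\TauN$ by Lemma~\ref{lem:Emmenegger2021-Lemma-G-7} and the triangle inequality together with Assumptions~\ref{assumpt:regularity2} and~\ref{assumpt:DML1} (recall $p\ge 8\ge 4$). Hence this piece is $\lesssim\deltaN^2$ uniformly in $\Sigma$ and $\eta$, and summing the two bounds yields $\rNp\lesssim\deltaN^2$.

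The computation is largely mechanical; the one point that needs care is the conversion from Lemma~\ref{lem:boundPsiEtaEtazero}, which is phrased as an averaged \emph{squared} $L^2$ bound with the $\nnktot$ normalization, into the \emph{unsquared} averaged $L^2$ bound with the $\normone{\Ik}$ normalization appearing in $\rNp$; this is handled by the convexity step above together with the uniform bound $\nni\le\nmax$. The other subtlety is ensuring that every moment and operator-norm estimate is uniform over the whole index set $\{\eta\in\TauN,\ \norm{\Sigma-\Sigmazero}\le\deltaN^2\}$ so that the suprema can be moved inside, which holds because each invoked constant ($\CpnormEta$, $\CboundV$, $\CboundZi$, $\CpnormRV$) is uniform.
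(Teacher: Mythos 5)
Your proof is correct, and it follows the same basic strategy as the paper --- split the difference into a piece where only $\eta$ varies and a piece where only $\Sigma$ varies, control the $\Sigma$-piece via the resolvent identity~\eqref{eq:rewriteDiffMatInv} together with Assumptions~\ref{assumpt:Theta3} and~\ref{assumpt:regularity6}, and control the $\eta$-piece via Lemma~\ref{lem:boundPsiEtaEtazero} --- but the execution differs in two genuine ways. First, you insert the intermediate point $(\Sigmazero,\eta)$, whereas the paper inserts $(\Sigma,\etazero)$: your choice is slightly cleaner, since your $\eta$-difference sits exactly at $\Sigmazero$ and Lemma~\ref{lem:boundPsiEtaEtazero} applies verbatim, while the paper's $\eta$-difference sits at a general $\Sigma$ and must appeal to ``similar arguments'' as in that lemma; conversely, your $\Sigma$-difference sits at a general $\eta$, so you need the fourth moments of $\Xi-\mX(\Wi)$ and $\Yi-\mY(\Wi)$ to be bounded uniformly over $\TauN$, which Assumptions~\ref{assumpt:regularity2} and~\ref{assumpt:DML1} with $p\ge 8$ indeed supply, as you note. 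Second, you pull the average out of the $L^2$ norm by Minkowski and then apply Jensen, never invoking independence; the paper instead keeps the average inside, expands the square, and uses $\Si\independent\Sj$ to factor the cross terms into products of first moments. The paper's expansion is the sharper device --- it gains an extra $\nnktot^{-1/2}$ on the centered part --- but since the target $\deltaN^2$ is exactly the square root of the uniform second-moment bound $\deltaN^4$, your cruder Minkowski--Jensen route loses nothing here. Your explicit bookkeeping of the $\normone{\Ik}$ versus $\nnktot$ normalizations via $\nni\le\nmax$ is also correct, and is a point the paper glosses over by silently switching to the $\Enk{\cdot}$ normalization in its proof.
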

\begin{proof}[Proof of Lemma~\ref{lem:bound_rNp}]
Let $\eta\in\TauN$, $\Sigma$ from $\Theta$ with $\norm{\Sigma-\Sigmazero}\le\deltaN^2$, and $i\in\indset{\NN}$.
We have
\begin{displaymath}
	\scoretest(\Si; \Sigma,\eta) - \scoretest(\Si,\Sigmazero,\etazero)
	= \scoretest(\Si; \Sigma,\eta) - \scoretest(\Si;\Sigma,\etazero) + \scoretest(\Si; \Sigma,\etazero) - \scoretest(\Si;\Sigmazero,\etazero). 
\end{displaymath} 
Let $t\in\indset{d}$. 
We have
\begin{displaymath}
	\begin{array}{cl}
	& \normP{\Enk{\scoretest_t(\S; \Sigma,\eta) - \scoretest_t(\S;\Sigma,\etazero)}}{2}^2 \\
	=& \frac{1}{\nnktot^2}\sum_{i\in\Ik}\EPbig{(\scoretest_t(\Si;\Sigma,\eta) - \scoretest_t(\Si;\Sigmazero,\etazero))^2} \\
	&\quad + \frac{1}{\nnktot^2}\sum_{i, j\in\Ik, i\neq j}\EPbig{\scoretest_t(\Si;\Sigma,\eta) - \scoretest_t(\Si;\Sigmazero,\etazero)}\EPbig{\scoretest_t(\Sj;\Sigma,\eta) - \scoretest_t(\Sj;\Sigmazero,\etazero)} \\
	\lesssim& \deltaN^4
	\end{array}
\end{displaymath}
due to $\Si\independent\Sj$ for $i\neq j$ and similar arguments as presented 
in the proof of Lemma~\ref{lem:boundPsiEtaEtazero}. Furthermore, we have
\begin{displaymath}
	\normP{\Enk{\scoretest(\Si; \Sigma,\etazero) - \scoretest(\Si;\Sigmazero,\etazero)}}{2}
	\lesssim \deltaN^2
\end{displaymath}
due to the Cauchy-Schwarz inequality, $\norm{\Sigma-\Sigmazero}\le\deltaN^2$, 
because we have $\nni\le\nmax$, $\normP{\Xi - \mX^0(\Wi)}{4}\lesssim\normP{\Xi}{4}$ by Lemma~\ref{lem:Emmenegger2021-Lemma-G-7} and similarly for $\Yi$, 
$\normP{\Xi}{4}$ and $\normP{\Yi}{4}$ are bounded by Assumption~\ref{assumpt:regularity2} and H\"{o}lder's inequality, 
$\Zi$ is bounded by Assumption~\ref{assumpt:regularity6}, 
$\Vi^{-1}=(\Zi\Sigma\Zi^T + \oneni)^{-1}$ is bounded by Assumption~\ref{assumpt:Theta3},
$\normP{\etazero-\eta}{p}\le\CpnormEta$ holds by Assumption~\ref{assumpt:DML1} for $\NN$ large enough, 
and $\Theta$ is bounded by Assumption~\ref{assumpt:Theta1}. 
Consequently, we have $\rNp\lesssim\deltaN^2$ due to the triangle inequality. 
\end{proof}

\begin{lemma}\label{lem:rateDoubleTaylor}
Let $\kk\in\indset{\KK}$. 
For $\Sigma$ belonging to $\Theta$, with $\PP$-probability $1-o(1)$, we have \begin{displaymath}
	\normbig{\sqrt{\NNtot}\EPbig{\Enk{\scoretest(\S;\Sigma,\hetaIkc) - \scoretest(\S; \Sigmazero,\etazero)} \big| \SIkc}}\lesssim\deltaN. 
\end{displaymath}
\end{lemma}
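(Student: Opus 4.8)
The plan is to reproduce, for a general $\Sigma\in\Theta$, the Neyman-orthogonality Taylor expansion already carried out at $\Sigmazero$ in the proof of Lemma~\ref{lem:empiricalSumsSqrtN}, while additionally observing that the discrepancy between $\Sigma$ and $\Sigmazero$ disappears once the conditional expectation is taken. First I would split, for each $i\in\Ik$,
\begin{displaymath}
\scoretest(\Si;\Sigma,\hetaIkc) - \scoretest(\Si;\Sigmazero,\etazero) = \big(\scoretest(\Si;\Sigma,\hetaIkc) - \scoretest(\Si;\Sigma,\etazero)\big) + \big(\scoretest(\Si;\Sigma,\etazero) - \scoretest(\Si;\Sigmazero,\etazero)\big).
\end{displaymath}
Since the groups in $\Ik$ are independent of $\SIkc$, conditioning on $\SIkc$ does not alter the law of $\Si$, $i\in\Ik$, so the conditional expectation of the second bracket equals its unconditional one. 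Writing $\scoretest(\Si;\Sigma,\etazero) = \RXi^T(\Zi\Sigma\Zi^T+\oneni)^{-1}(\RYi - \RXi\betazero)$ and using that $\RYi - \RXi\betazero = \Zi\bbi+\epsi$ by~\eqref{eq:LMM} has conditional mean $\bo$ given $(\Wi,\Xi)$ by Assumption~\ref{assumpt:D4-2}, with $\Zi\Sigma\Zi^T+\oneni$ deterministic, the tower property gives $\EP[\scoretest(\Si;\Sigma,\etazero)]=\bo$, and likewise with $\Sigmazero$. Hence the second bracket contributes nothing, for every $\Sigma$.

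For the first bracket I would run the expansion along the path $r\mapsto \etazero + r(\hetaIkc-\etazero)$, conditionally on $\SIkc$ and on the event $\EpsN$ (so that $\hetaIkc$ is a fixed element of $\TauN$). The first derivative at $r=0$ of $\EP[\scoretest(\Si;\Sigma,\etazero+r(\hetaIkc-\etazero))\mid\SIkc]$ vanishes by the identical computation to~\eqref{eq:NeymanOrth1} with $\Sigmazero$ replaced by $\Sigma$; this goes through verbatim because it uses only $\EP[\RXi\mid\Wi]=\bo$ and $\EP[\RYi-\RXi\betazero\mid\Wi]=\bo$ and never the particular value of $\Sigma$. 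Moreover the second derivative is constant in $r$, equal to
\begin{displaymath}
2\big(\mX^0(\Wi)-\hmX^{\Ikc}(\Wi)\big)^T(\Zi\Sigma\Zi^T+\oneni)^{-1}\Big(\big(\mY^0(\Wi)-\hmY^{\Ikc}(\Wi)\big)-\big(\mX^0(\Wi)-\hmX^{\Ikc}(\Wi)\big)\betazero\Big),
\end{displaymath}
so the exact second-order remainder yields, on $\EpsN$,
\begin{displaymath}
\normbig{\EP[\scoretest(\Si;\Sigma,\hetaIkc) - \scoretest(\Si;\Sigma,\etazero)\mid\SIkc]} \lesssim \deltaN\NN^{-\frac{1}{2}}
\end{displaymath}
by the generalization of Lemma~\ref{lem:boundLambdaN} to arbitrary $\Sigma\in\Theta$. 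That generalization is immediate, since the proof of Lemma~\ref{lem:boundLambdaN} uses only that $\sup_i\norm{(\Zi\Sigma\Zi^T+\oneni)^{-1}}$ is bounded, which holds for every $\Sigma\in\Theta$ by Assumption~\ref{assumpt:Theta3}, together with the product rate of Assumption~\ref{assumpt:DML2}.

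Finally I would sum over $i\in\Ik$ and rescale. Combining the two brackets,
\begin{displaymath}
\normbig{\sqrt{\NNtot}\,\EPbig{\Enk{\scoretest(\S;\Sigma,\hetaIkc) - \scoretest(\S;\Sigmazero,\etazero)} \big| \SIkc}} \le \sqrt{\NNtot}\,\frac{1}{\nnktot}\sum_{i\in\Ik}\normbig{\EP[\scoretest(\Si;\Sigma,\hetaIkc) - \scoretest(\Si;\Sigma,\etazero)\mid\SIkc]} \lesssim \sqrt{\NNtot}\,\frac{\normone{\Ik}}{\nnktot}\,\deltaN\NN^{-\frac{1}{2}}.
\end{displaymath}
Since $\frac{\normone{\Ik}}{\nnktot}=O(1)$ by assumption and $\NNtot\le\NN\nmax$ gives $\sqrt{\NNtot}\NN^{-\frac{1}{2}}=O(1)$, the right-hand side is $\lesssim\deltaN$; the bound holds on $\EpsN$, which has $\PP$-probability $1-o(1)$ by Assumption~\ref{assumpt:DML2}, establishing the claim. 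The only genuinely new point relative to Lemma~\ref{lem:empiricalSumsSqrtN}, and the step I would verify most carefully, is that both the Neyman-orthogonality identity and the second-order bound survive replacing $\Sigmazero$ by an arbitrary $\Sigma\in\Theta$; as noted, this rests solely on the $\Sigma$-uniform operator-norm bound of Assumption~\ref{assumpt:Theta3}, so no new estimate is required.
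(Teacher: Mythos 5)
Your proof is correct, but it takes a genuinely different route from the paper's. The paper performs a single Taylor expansion along the \emph{joint} path $r\mapsto\big(\Sigmazero+r(\Sigma-\Sigmazero),\,\etazero+r(\hetaIkc-\etazero)\big)$, i.e.\ it studies $f_k(r) = \EP\big[\scoretest\big(\Si;\Sigmazero+r(\Sigma-\Sigmazero),\etazero+r(\hetaIkc-\etazero)\big)\,\big|\,\SIkc\big]-\EP[\scoretest(\Si;\Sigmazero,\etazero)]$: Neyman orthogonality gives $f_k'(0)=\bo$ (equation~\eqref{eq:NeymanOrth2}), and the second derivative, which then contains terms with $\tfrac{\partial}{\partial r}$ and $\tfrac{\partial^2}{\partial r^2}$ of the inverse matrix $(\Zi\Sigmazero\Zi^T+\oneni+r\Zi(\Sigma-\Sigmazero)\Zi^T)^{-1}$, must be bounded by $\deltaN\NN^{-\frac{1}{2}}$ after tower-property cancellations. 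You instead split the discrepancy additively into an $\eta$-part at fixed $\Sigma$ and a $\Sigma$-part at the true $\etazero$, and observe that the $\Sigma$-part vanishes \emph{identically} in conditional expectation, because $\RYi-\RXi\betazero=\Zi\bbi+\epsi$ has mean zero given $(\Wi,\Xi)$ by Assumption~\ref{assumpt:D4-2} regardless of the (deterministic, bounded) weighting matrix; the remaining $\eta$-expansion at fixed $\Sigma$ is a quadratic polynomial in $r$, so it is exact with constant second derivative, and no differentiation of a matrix inverse or remainder analysis is needed. What your route buys is economy: the matrix-calculus terms the paper must introduce and bound (the $4r$ and $r^2$ summands in its expression for $f_k''$) never appear, and the only estimate required is the $\Sigma$-uniform version of Lemma~\ref{lem:boundLambdaN}, which, as you correctly note, is immediate because Assumption~\ref{assumpt:Theta3} bounds $\norm{(\Zi\Sigma\Zi^T+\oneni)^{-1}}$ uniformly over $\Sigma\in\Theta$. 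What the paper's joint-path formalism buys is uniformity of template: the same expansion scheme is used in Lemma~\ref{lem:empiricalSumsSqrtN} and mirrors the general double machine learning arguments of \citet{Chernozhukov2018}, at the cost of heavier bookkeeping. Both proofs ultimately rest on the same two pillars, Neyman orthogonality in $\eta$ and the product-rate bound of Assumptions~\ref{assumpt:DML1}--\ref{assumpt:DML2}, together with the final rescaling $\sqrt{\NNtot}\,\NN^{-\frac{1}{2}}=O(1)$ and $\normone{\Ik}/\nnktot=O(1)$, which you carry out explicitly and the paper leaves implicit.
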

\begin{proof}[Proof of Lemma~\ref{lem:rateDoubleTaylor}]
With $\PP$-probability $1-o(1)$, the machine learning estimator $\hetaIkc$ belongs to the nuisance realization set $\TauN$ due to Assumption~\ref{assumpt:DML2}. Thus, it suffices to show that the claim holds uniformly over $\eta\in\TauN$. 
Consider $\eta\in\TauN$ and $\Sigma$ belonging to $\Theta$. 
For $r\in[0,1]$, consider the function
\begin{displaymath}
	f_k(r) := \EP\big[\scoretest\big(\Si;\Sigmazero + r(\Sigma-\Sigmazero), \etazero + r(\hetaIkc-\etazero)\big)\big| \SIkc\big] - \EP[\scoretest(\Si; \Sigmazero,\etazero)].
\end{displaymath}
We apply a Taylor expansion to this function and obtain
\begin{displaymath}
\begin{array}{rl}
	&\sqrt{\NNtot}\EPbig{\Enk{\scoretest(\S;\Sigma,\hetaIkc) - \scoretest(\S; \Sigmazero,\etazero)} \big| \SIkc}\\
	=& \sqrt{\NNtot}f_k(1) \\
	= &\sqrt{\NNtot}\big(f_k(0) + f'_k(0) + \frac{1}{2}f''_k(\tilde r)\big)
	\end{array}
\end{displaymath}
for some $\tilde r\in(0,1)$. 
We have $f_k(0)=\bo$. Next, we verify the Neyman orthogonality property $f_k'(0)=\bo$ and the second-order condition $f''_k( r)\lesssim\deltaN\NN^{-\frac{1}{2}}$ uniformly over $ r\in(0,1)$, which will conclude the proof.  
We have
\begin{equation}\label{eq:NeymanOrth2}
	\begin{array}{cl}
		&\frac{\partial}{\partial r}\Big|_{r=0} \EP\big[\scoretest\big(\Si;\Sigmazero + r(\Sigma-\Sigmazero),\etazero+r(\eta-\etazero)\big)\big]\\
			=&\frac{\partial}{\partial r}\Big|_{r=0}  \EP\bigg[ \Big(\Xi-\mX^0(\Wi)-r\big(\mX(\Wi)-\mX^0(\Wi)\big)\Big)^T(\Zi\Sigmazero\Zi^T+\oneni + r\Zi(\Sigma-\Sigmazero)\Zi^T)^{-1}\\
			&\quad\quad\quad\cdot
			\bigg(\Yi-\mY^0(\Wi)-r\big(\mY(\Wi)-\mY^0(\Wi)\big)\\
			&\quad\quad\quad\quad\quad-\Big( \Xi-\mX^0(\Wi)-r\big(\mX(\Wi)-\mX^0(\Wi)\big) \Big)\betazero\bigg) \bigg]\\
			=&  \EP\Big[ -\big(\mX(\Wi)-\mX^0(\Wi)\big)^T(\Zi\Sigmazero\Zi^T+\oneni)^{-1}\Big(\Yi-\mY^0(\Wi)-\big( \Xi-\mX^0(\Wi) \big)\betazero\Big) \\
			& \quad\quad\quad\quad + 
			\big(\Xi-\mX^0(\Wi)\big)^T\Big(\frac{\partial}{\partial r}\Big|_{r=0}(\Zi\Sigmazero\Zi^T+\oneni+ r\Zi(\Sigma-\Sigmazero)\Zi^T)^{-1}\Big)\\
			& \quad\quad\quad\quad\quad\quad\cdot\Big( \Yi-\mY^0(\Wi)-\big(\Xi-\mX^0(\Wi) \big)\betazero\Big)\\
			&\quad\quad\quad\quad -
			\big(\Xi-\mX^0(\Wi)\big)^T(\Zi\Sigmazero\Zi^T+\oneni)^{-1}\Big( \mY(\Wi)-\mY^0(\Wi)-\big( \mX(\Wi)-\mX^0(\Wi) \big)\betazero\Big)  \Big)\\
			=&\bo,
	\end{array}
\end{equation}
where we apply the tower property to condition on $\Wi$ inside the above expectation, Assumption~\ref{assumpt:D4-2}, 
and that $\mX^0$ and $\mY^0$ are the true conditional expectations. Thus, we have $f'_k(0)=\bo$. Furthermore, we have
\begin{displaymath}
	\begin{array}{cl}
		&\frac{\partial^2}{\partial r^2} \EP\big[\scoretest\big(\Si;\Sigmazero + r(\Sigma-\Sigmazero),\etazero+r(\eta-\etazero)\big)\big]\\
		=& \EP\Big[
			2 \big(\mX(\Wi)-\mX^0(\Wi)\big)^T(\Zi\Sigmazero\Zi^T+\oneni+ r\Zi(\Sigma-\Sigmazero)\Zi^T)^{-1}\\
			&\quad\quad\cdot\Big(\mY(\Wi)- \mY^0(\Wi) -\big( \mX(\Wi)-\mX^0(\Wi) \big)\betazero\Big)
			\Big]\\
			& \quad + 4r \EP\Big[ 
			\big(\mX(\Wi)-\mX^0(\Wi)\big)^T\Big(\frac{\partial}{\partial r}(\Zi\Sigmazero\Zi^T+\oneni+ r\Zi(\Sigma-\Sigmazero)\Zi^T)^{-1}\Big)\\
			&\quad\quad\cdot\Big(\mY(\Wi)- \mY^0(\Wi) -\big( \mX(\Wi)-\mX^0(\Wi) \big)\betazero\Big)\Big] \\
			&\quad + r^2  \EP\Big[ 
			\big(\mX(\Wi)-\mX^0(\Wi)\big)^T\Big(\frac{\partial^2}{\partial r^2}(\Zi\Sigmazero\Zi^T+\oneni+ r\Zi(\Sigma-\Sigmazero)\Zi^T)^{-1}\Big)\\
			&\quad\quad\cdot\Big(\mY(\Wi)- \mY^0(\Wi) -\big( \mX(\Wi)-\mX^0(\Wi) \big)\betazero\Big)\Big], \\
	\end{array}
\end{displaymath}
where we apply the tower property to condition on $\Wi$ inside the above expectation, Assumption~\ref{assumpt:D4-2}, 
and that $\mX^0$ and $\mY^0$ are the true conditional expectations.
All the above summands are bounded by $\deltaN\NN^{-\frac{1}{2}}$ in the $L^1$-norm due to H\"{o}lder's inequality and Assumptions~\ref{assumpt:regularity2}, \ref{assumpt:Theta1}, \ref{assumpt:Theta3}, and \ref{assumpt:DML1} because for 
$\Ubi:=\mX(\Wi)-\mX^0(\Wi) \in\R^{\nni \times d}$,  $\Vbi := \mY(\Wi)- \mY^0(\Wi) -\Ubi\betazero\in\R^{\nni}$, and a non-random matrix $\Di \in\R^{\nni\times\nni}$ with bounded entries, we have for $j\in\indset{d}$ that
\begin{displaymath}
	\normP{(\Ubi^T)_{j,\cdot} \Di \Vbi}{1} 
	= \normPBig{\sum_{\kappa, \iota=1}^{\nni}(\Ubi^T)_{j,\kappa}(\Di)_{\kappa,\iota}(\Vbi)_{\iota}}{1}
	\le \nni^2 \normP{\Ubi}{2} \normP{\Vbi}{2} \sup_{\kappa,\iota\in\indset{\nni}} \normone{(\Di)_{\kappa, \iota}}
\end{displaymath}
holds due to the triangle inequality and H\"{o}lder's inequality. Because we have $\nni\le\nmax$ uniformly over $i\in\indset{\NN}$, we infer our claim due to
\begin{displaymath}
	\norm{f_k''(\tilde r)}
	\le \sup_{r\in (0,1)} \normBig{\frac{\partial^2}{\partial r^2} \EP\big[\scoretest\big(\Si;\Sigmazero + r(\Sigma-\Sigmazero),\etazero+r(\eta-\etazero)\big)\big]}
	\lesssim \deltaN\NN^{-\frac{1}{2}}.
\end{displaymath}
\end{proof}

\begin{lemma}\label{lem:multiplierMatrix}
Recall the notation $\hVik = \Zi\hSigmak\Zi^T + \oneni$.
We have
\begin{displaymath}
	\frac{1}{\nnktot}\sum_{i\in\Ik}(\hbRkXi)^T\hVik^{-1}\hbRkXi
	= \Tzero + o_{\PP}(1).
\end{displaymath}
\end{lemma}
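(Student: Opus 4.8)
The plan is to reduce the random sample matrix $\frac{1}{\nnktot}\sum_{i\in\Ik}(\hbRkXi)^T\hVik^{-1}\hbRkXi$ to the deterministic limit $\Tzero$ in three stages: (i) replace the estimated ingredients $\hSigmak$ and $\hetaIkc$ by their true counterparts, reaching $\frac{1}{\nnktot}\sum_{i\in\Ik}\bRkXi^T\Vzeroi^{-1}\bRkXi$ with $\bRkXi=\Xi-\mX^0(\Wi)$; (ii) invoke a weak law of large numbers to pass to the population average $\frac{1}{\nnktot}\sum_{i\in\Ik}\EP[\bRkXi^T\Vzeroi^{-1}\bRkXi]$; and (iii) identify this average with $\Tzero$ via Assumption~\ref{assumpt:regularity7}.

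For stage (i) I would first replace $\hVik^{-1}$ by $\Vzeroi^{-1}$ using the identity $\hVik^{-1}-\Vzeroi^{-1}=\hVik^{-1}\Zi(\Sigmazero-\hSigmak)\Zi^T\Vzeroi^{-1}$. By Lemma~\ref{lem:consistency} we have $\norm{\hSigmak-\Sigmazero}\le\deltaN^2$ with $\PP$-probability $1-o(1)$, so together with the boundedness of $\Zi$ (Assumption~\ref{assumpt:regularity6}) and of the inverse covariance matrices (Assumption~\ref{assumpt:Theta3}) this yields $\sup_{i\in\Ik}\norm{\hVik^{-1}-\Vzeroi^{-1}}\lesssim\deltaN^2=o(1)$. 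Since $\frac{1}{\nnktot}\sum_{i\in\Ik}\norm{\hbRkXi}^2=O_{\PP}(1)$ (bounded second moments via Assumption~\ref{assumpt:regularity2}, Lemma~\ref{lem:Emmenegger2021-Lemma-G-7}, and the nuisance bound of Assumption~\ref{assumpt:DML2}), the induced error $\frac{1}{\nnktot}\sum_{i\in\Ik}(\hbRkXi)^T(\hVik^{-1}-\Vzeroi^{-1})\hbRkXi$ is $o_{\PP}(1)$, so it remains to treat $\frac{1}{\nnktot}\sum_{i\in\Ik}(\hbRkXi)^T\Vzeroi^{-1}\hbRkXi$.

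The crux is replacing $\hbRkXi=\bRkXi+\Delta_i$ with $\Delta_i:=\mX^0(\Wi)-\hmX^{\Ikc}(\Wi)$, which produces a quadratic term $\Delta_i^T\Vzeroi^{-1}\Delta_i$ and two (mutually transposed) cross terms. Here I would condition on $\SIkc$: by the cross-fitting construction $\hmX^{\Ikc}$ is then fixed, while the groups $\{\Si\}_{i\in\Ik}$ retain their marginal law and stay independent (Assumption~\ref{assumpt:D6}). The quadratic term is bounded in conditional expectation by a constant multiple of $\normP{\mX^0-\hmX^{\Ikc}}{2}^2\le\deltaN^{16}$ on the event $\EpsN$ (Assumption~\ref{assumpt:DML2}), hence $o_{\PP}(1)$ via Lemma~\ref{lem:ChernozhukovLemma}. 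The cross term $\frac{1}{\nnktot}\sum_{i\in\Ik}\Delta_i^T\Vzeroi^{-1}\bRkXi$ has conditional mean $\bo$, because $\EP[\bRkXi\mid\Wi,\SIkc]=\EP[\bRkXi\mid\Wi]=\bo$ and, given $\SIkc$ and $\Wi$, the factor $\Delta_i^T\Vzeroi^{-1}$ is fixed; by independence across groups its conditional variance is $O(\normone{\Ik}/\nnktot^2)=O(1/\nnktot)=o(1)$ (using bounded fourth moments, $p\ge 8$), so Chebyshev together with Lemma~\ref{lem:ChernozhukovLemma} again gives $o_{\PP}(1)$. This completes stage (i).

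For stage (ii), the summands $\bRkXi^T\Vzeroi^{-1}\bRkXi$ are independent across $i\in\Ik$ with uniformly bounded second moments, so the centered average has variance $O(\normone{\Ik}/\nnktot^2)=O(1/\nnktot)=o(1)$, whence $\frac{1}{\nnktot}\sum_{i\in\Ik}\bRkXi^T\Vzeroi^{-1}\bRkXi=\frac{1}{\nnktot}\sum_{i\in\Ik}\Ti+o_{\PP}(1)$ for $\Ti=\EP[\bRkXi^T\Vzeroi^{-1}\bRkXi]$. Stage (iii) then identifies $\frac{1}{\nnktot}\sum_{i\in\Ik}\Ti$ with $\Tzero$ as the fold-wise version of Assumption~\ref{assumpt:regularity7}, which holds because $\Tbar=\Tzero+o(1)$ and the folds are of approximately equal size, $\KK\nnktot=\NNtot+o(1)$. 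I expect the main obstacle to be the cross term in stage (i): it is the only place where the machine-learning error and the cross-fitting independence genuinely interact, and its vanishing hinges on the conditional-mean-zero (Neyman-orthogonality-type) cancellation combined with the conditional-to-unconditional passage of Lemma~\ref{lem:ChernozhukovLemma}; by contrast, the law of large numbers in stage (ii) and the identification in stage (iii) are comparatively routine.
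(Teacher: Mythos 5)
Your proposal is correct and follows essentially the same route as the paper: the paper's proof decomposes $\frac{1}{\nnktot}\sum_{i\in\Ik}(\hbRkXi)^T\hVik^{-1}\hbRkXi - \frac{1}{\nnktot}\sum_{i\in\Ik}\EP\big[\bRkXi^T\Vzeroi^{-1}\bRkXi\big]$ into exactly your three stages --- replacing $\hSigmak$ by $\Sigmazero$ (via Lemma~\ref{lem:consistency} together with the boundedness Assumptions~\ref{assumpt:regularity6} and~\ref{assumpt:Theta3}), replacing $\hetaIkc$ by $\etazero$ at $\Sigmazero$ (via the cross-fitting conditioning argument), and a Markov/Chebyshev-type law of large numbers --- before identifying the limit through Assumption~\ref{assumpt:regularity7}. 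The only cosmetic difference is that you inline the nuisance-replacement step (exact conditional-mean-zero cross term plus conditional variance and Lemma~\ref{lem:ChernozhukovLemma}), whereas the paper delegates it to ``similar arguments as in Lemma~\ref{lem:empiricalSumsSqrtN}''.
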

\begin{proof}[Proof of Lemma~\ref{lem:multiplierMatrix}]
Let us introduce the score function
\begin{displaymath}
	\scoreX(\Si; \Sigma,\eta) := \big(\Xi-\mX(\Wi)\big)^T(\Zi\Sigma\Zi^T+\oneni)^{-1}\big(\Xi-\mX(\Wi)\big)
\end{displaymath}
for $\eta\in\TauN$ and $\Sigma$ from $\Theta$. 
Recall the notation $\Vizero = \Zi\Sigmazero\Zi^T+\one_{\nni}$.
We have 
\begin{equation}\label{eq:decomp}
	\begin{array}{rl}
		&\frac{1}{\nnktot}\sum_{i\in\Ik} \Big( (\hbRkXi)^T \hVik^{-1}\hbRkXi - 
		\EPbig{(\bRkXi)^T\Vizero^{-1}\bRkXi} \Big)\\
		=& \Enkbig{\scoreX(\S;\hSigmak, \hetaIkc) - \EP[\scoreX(\S; \Sigmazero, \etazero) ]}\\
		=&  \Enkbig{\scoreX(\S;\hSigmak, \hetaIkc) - \scoreX(\S;\Sigmazero, \hetaIkc)}
		+ \Enkbig{\scoreX(\S; \Sigmazero,\hetaIkc) - \scoreX(\S; \Sigmazero,\etazero)}\\
		&\quad
		+ \Enkbig{\scoreX(\S; \Sigmazero,\etazero) - \EP[\scoreX(\S; \Sigmazero, \etazero)] }. 
	\end{array}
\end{equation}
The last summand 
$\Enk{\scoreX(\S; \Sigmazero,\etazero) - \EP[\scoreX(\S; \Sigmazero, \etazero)]}$ 
in~\eqref{eq:decomp} is of size $o_{\PP}(1)$ due to Markov's inequality and Assumptions~\ref{assumpt:regularity2},~\ref{assumpt:Theta3}, and~\ref{assumpt:DML1}. 
The second summand 
$\Enk{\scoreX(\S; \Sigmazero,\hetaIkc) - \scoreX(\S; \Sigmazero,\etazero)}$ in~\eqref{eq:decomp} is of size $o_{\PP}(1)$ due to similar arguments as presented in Lemma~\ref{lem:empiricalSumsSqrtN}. This lemma is stated for a slightly different score function that involves $\betazero$, but the proof of this lemma does not depend on $\betazero$. It can be shown that the same arguments are also valid for the score $\scoreX$. The first summand
$ \Enkbig{\scoreX(\S;\hSigmak, \hetaIkc) - \scoreX(\S;\Sigmazero, \hetaIkc)}$
in~\eqref{eq:decomp} is of order $o_{\PP}(1)$. To prove this last claim, 
recall that $\norm{\hthetak-\thetazero}\le\deltaN$ holds with $\PP$-probability $1-o(1)$ due to Lemma~\ref{lem:consistency} and because we have $\deltaN^2\le\deltaN$ for $\NN$ large enough. Consider $\Sigma$ from $\Theta$ 
with $\norm{\Sigma-\Sigmazero}\le\deltaN$, and recall the notation $\Vi = \Zi\Sigma\Zi^T+\one_{\nni}$. 
We have
\begin{equation}\label{eq:Tconsistent1}
	\begin{array}{rl}
		&  \Enkbig{\scoreX(\S;\Sigma, \hetaIkc) - \scoreX(\S;\Sigmazero, \hetaIkc)}\\
		=& \frac{1}{\nnktot}\sum_{i\in\Ik}\big(\Xi - \mX^0(\Wi)\big)^T(\Vi^{-1}-\Vizero^{-1})\big(\Xi-\mX^0(\Wi)\big) \\
		&\quad + \frac{1}{\nnktot}\sum_{i\in\Ik}\big(\Xi - \mX^0(\Wi)\big)^T(\Vi^{-1}-\Vizero^{-1})\big(\mX^0(\Wi)-\hmX^{\Ikc}(\Wi)\big) \\
		&\quad + \frac{1}{\nnktot}\sum_{i\in\Ik}\big(\mX^0(\Wi)-\hmX^{\Ikc}(\Wi)\big)^T(\Vi^{-1}-\Vizero^{-1})\big(\Xi - \mX^0(\Wi)\big) \\
		&\quad + \frac{1}{\nnktot}\sum_{i\in\Ik}\big(\mX^0(\Wi)-\hmX^{\Ikc}(\Wi)\big)^T(\Vi^{-1}-\Vizero^{-1})\big(\mX^0(\Wi)-\hmX^{\Ikc}(\Wi)\big) .
	\end{array}
\end{equation}
The first summand in the decomposition~\eqref{eq:Tconsistent1} is of order $o_{\PP}(1)$ because we have for all $i\in\indset{\NN}$ that
\begin{displaymath}
	\begin{array}{rl}
	&\EPbig{\normbig{\big(\Xi - \mX^0(\Wi)\big)^T(\Vi^{-1}-\Vizero^{-1})\big(\Xi-\mX^0(\Wi)\big)\big) }}\\
	\le & \sup_{i\in\indset{\NN}}\norm{\Vi^{-1}-\Vzeroi^{-1}}\normP{\Xi-\mX^0(\Wi)}{2}^2 \\\lesssim& \deltaN \normP{\Xi-\mX^0(\Wi)}{2}^2 
	\end{array}
\end{displaymath}
holds due to the Cauchy-Schwarz inequality,  H\"{o}lder's inequality, and  Assumption~\ref{assumpt:Theta3}. 
We have $\normP{\Xi-\mX^0(\Wi)}{2}\le\normP{\Xi}{2} <\infty$ due to Lemma~\ref{lem:Emmenegger2021-Lemma-G-10} and Assumption~\ref{assumpt:regularity2}. 
The other summands in~\eqref{eq:Tconsistent1} are of smaller order than the first summand in~\eqref{eq:Tconsistent1} due to Assumption~\ref{assumpt:DML1} and similar computations. 
Therefore, we have
\begin{displaymath}
	\frac{1}{\nnktot}\sum_{i\in\Ik}(\hbRkXi)^T\hVik^{-1}\hbRkXi
	= \frac{1}{\nnktot}\sum_{i\in\Ik} \EPbig{(\bRkXi)^T\Vi^{-1}\bRkXi} + o_{\PP}(1)
	= \Tzero + o_{\PP}(1)
\end{displaymath}
due to Assumption~\ref{assumpt:regularity7}. 
\end{proof}

\section{Stochastic Random Effects Matrices}\label{sec:nonfixedZi}

We considered fixed random effects matrices $\Zi$ in our model~\eqref{eq:PLMM}. 
However, it is also possible to consider stochastic random effects matrices $\Zi$ and to 
include the nonparametric variables $\Wi$ into the random effects matrices. 
In this case, we consider the composite random effects matrices $\Ztili = \h(\Zi,\Wi)$ for some known function $\h$ instead of $\Zi$ in the partially linear mixed-effects model~\eqref{eq:PLMM}. That is, we replace the model~\eqref{eq:PLMM} by the model
\begin{equation}\label{eq:PLMMrandom}
	\Yi = \Xi\betazero + \g(\Wi) + \Ztili\bbi+ \epsi, \quad i\in\indset{\NN}
\end{equation}
with $\Ztili = \h(\Zi,\Wi)$ and $\Zi$ random. We require groupwise independence 
 $\Zi\independent\Zj$ for $i\neq j$ 
of the random effects matrices. 

If $\Zi$ is random, one needs to also condition on it in~\eqref{eq:resNormal}, and we need to assume that the density $p(\Wi, \Xi, \Zi)$ does not depend on $\theta$. 
Furthermore, $\Zi$ needs to be such that the Neyman orthogonality properties~\eqref{eq:NeymanOrth1} and~\eqref{eq:NeymanOrth2} and Equation~\eqref{eq:simplifyTNi} still hold. 
For instance, these equations remain valid if Assumption~\ref{assumpt:D4-2} is replaced by  $(\bbi,\epsi)\independent(\Wi, \Xi)|\Zi$ and $\EP[\bbi|\Zi]=\bo$ and $\EP[\epsi|\Zi]=\bo$ for all $i\in\indset{\NN}$.

Furthermore, the composite random effects matrices $\Ztili$ need to satisfy additional regularity conditions. The Assumptions~\ref{assumpt:regularity6} and~\ref{assumpt:Theta3} need to be adapted as follows. The first option is to adapt Assumption~\ref{assumpt:regularity6} to: there exists a finite real constant $\CboundZi'$ such that $\normP{\Ztili}{\infty}\le\CboundZi'$ holds for all $i\in\indset{\NN}$, where $\normP{\cdot}{\infty}$ denotes the $L^{\infty}(P)$-norm.
Then, Assumption~\ref{assumpt:Theta3} needs to be adapted to: there exists a finite real constant $\CboundV'$ such that we have 
	$\normP{(\Ztili\Sigma\Ztili^T + \oneni )^{-1}}{\infty}\le \CboundV'$.
	for all $i\in\indset{\NN}$ and all $\Sigma$ belonging to $\Theta$.
	
The Assumptions~\ref{assumpt:DML1} and~\ref{assumpt:DML2} formulate the product relationship of the machine learning estimators' convergence rates in terms of the $L^2(P)$-norm. 
The second option is to consider $L^t(P)$-norms with $t\ge 4>2$ in these assumptions instead. Then, it is possible to constrain the $L^p(P)$-norms of $\Ztili$ and $(\Ztili\Sigma\Ztili^T + \oneni )^{-1}$ in Assumptions~\ref{assumpt:regularity6} and~\ref{assumpt:Theta3} instead of their $L^{\infty}(P)$-norm. However, the order $p$, which is specified in Assumption~\ref{assumpt:regularity}, needs to be increased to $p\ge 2^9$ to allow us to bound the terms in the respective proofs by H\"{o}lder's inequality.

\end{appendices}

\end{document}